\colorlet{myblue}{blue!80!green}
\colorlet{mybluelight}{myblue!50}
\tikzset{
  > = latex',
  axis/.style    = {very thick},
  aborder/.style = {draw},
  acomp/.style   = {fill=black, fill opacity=0.1},
  rect/.style    = {very thick},
  form/.style    = {font=\scriptsize},
  sm/.style      = {font=\small},
  vsm/.style     = {font=\scriptsize}
}
\newcommand{\E}{\mathbb{E}}
\newcommand{\Ei}{\mathbb{E}_{t_{-i}}}
\newcommand{\Eis}{\mathbb{E}_{t_{-i},s}}
\newcommand{\ti}{t_{-i}}
\newcommand{\1}{\mathbf{1}}
\DeclareMathOperator*{\essinf}{ess\, inf}
\DeclareMathOperator*{\esssup}{ess\, sup}
\DeclareMathOperator{\Prob}{Prob}
\newtheorem{theorem}{Theorem}
\newtheorem*{theorem*}{Theorem}
\newtheorem{lemma}{Lemma}
\newtheorem{definition}{Definition}
\newtheorem{remark}{Remark}
\newtheorem{proposition}{Proposition}
\newtheorem{example}{Example}
\newtheorem{claim}{Claim}
\newcommand{\iec}{i.\,e.}
\newcommand{\eg}{e.\,g.}
\newcommand{\I}{\mathcal{I}}
\begin{document}

\title{Costly Verification in Collective Decisions\footnote{ This paper was previously circulated under the title ``Optimal Social Choice with Costly Verification''. We are grateful to two anonymous referees, Simon Board (coeditor), Daniel Kr\"{a}hmer and Benny Moldovanu for detailed comments and discussions. We would also like to thank Hector Chade, Eddie Dekel,  Francesc Dilm\'{e}, Navin Kartik, Jens Gudmundsson, Bart Lipman and seminar participants at Bonn University and at various conferences and universities for  insightful comments. Albin Erlanson gratefully acknowledges financial support from the European Research Council and the Jan Wallander and Tom Hedelius Foundation. Erlanson: Department of Economics at University of Essex,  \texttt{albin.erlanson@essex.ac.uk}; Kleiner: Department of Economics, W.P. Carey School of Business at Arizona State University, \texttt{andreas.kleiner@asu.edu}.}}
\author{Albin Erlanson \and Andreas Kleiner}
\date{\today}
\maketitle

\begin{abstract}

We study how a principal should optimally choose between implementing a new policy and maintaining the status quo when information relevant for the decision is privately held by agents. Agents are  strategic in revealing their information; the principal cannot use monetary transfers to elicit this information, but can verify an agent's claim at a cost. 
We characterize the mechanism that maximizes the expected utility of the principal. 
This mechanism can be implemented as a cardinal voting rule, in which
agents can either cast a baseline vote, indicating only whether they are in favor of the new policy, or they make specific claims about their type. The principal gives more weight to specific claims and verifies a claim whenever it is decisive.

\textit{Keywords}:  Collective decision; Costly verification
\\

\textit{JEL classification}: D82, D71
\end{abstract}

\newpage

\section{Introduction}\label{sec_intro}

The usual mechanism design paradigm assumes that agents have private information and the only way to learn this information is by giving agents incentives to reveal it truthfully.
This is a suitable model for many situations, most importantly when agents have private information about their preferences. But  there are a number of environments where agents' private information is based on hard facts. This could enable an outside party to learn the private information of the agents, at a potentially significant cost.

For example, consider a CEO in a company who faces an investment decision. Board members have relevant information but could have misaligned incentives because the investment has different effects on different divisions. The CEO can take the information provided by a board member at face value or hire consultants to check, various claims made by a board member.  Another example are large mergers in the EU, which must be approved by the European Commission. If a proposed merger has a potentially large impact and its evaluation is not clear, a detailed investigation is initiated. The Commission collects information from the merging companies, third parties and competitors. According to the Commission, this investigation ``typically involves more extensive information gathering, including companies' internal documents, extensive economic data, more detailed questionnaires to market participants, and/or site visits''.  The analyses carried out by the Commission on potential efficiency gains requires that ``claimed efficiencies must be verifiable'' \cite{EuropeanUnion}.
Lastly consider an example, taken from Sweden,  on the decision of whether a newly approved pharmaceutical drug should be subsidized. A producer of a drug can apply for a subsidy by providing arguments for the clinical and cost-effectiveness of the drug. Other stakeholders  are also given an opportunity to participate in the deliberations by contributing 
information relevant to the decision. Importantly, the applicant and other stakeholders should provide documentation supporting their claims \cite{TLV}. 

In order to study such situations, we formulate a model with costly verification in which a principal  decides between introducing a new policy and maintaining the  status quo. The principal's optimal choice depends on agents' private information, summarized by each agent $i$'s type $t_i\in \mathbb{R}$. Agents can be in favor of or against the new policy, and they are strategic in revealing their information since it influences the decision made by the principal. We exclude monetary transfers, but before taking the decision the principal can verify any agent and learn his information at a cost $c_i$. We determine the mechanisms that maximizes the expected payoff of the principal; it optimally  solves  the trade-off between the benefits from using detailed information as input to the decision rule and the implied costs from verifying agents' claims to make the mechanism  incentive compatible.

In the optimal mechanism, agents can vote in favor or against the new policy; moreover, they have the option to report their exact type. If agent $i$ reports his type, the principal adjusts the reported type by the verification cost $c_i$ to obtain agent $i$'s \emph{net type}, which is $t_i-c_i$ if $i$ votes in favor and $t_i+c_i$ if he votes against (see Figure \ref{fig:intro} for an illustration). If an agent does not report his type the principal assumes this agent has a default net type, namely $\omega_i^+$ if he voted in favor of the new policy and $\omega_i^-$ if he voted against. This induces \emph{bunching}, since an agent who is in favor only reports his type if it is high enough and otherwise only casts a vote (and conversely if he is against). The optimal decision rule for the principal is then to implement the new policy whenever the sum of net types is positive. A report is \emph{decisive} whenever it changes the decision compared to this agent not sending a report; in the optimal mechanism each decisive report is verified.

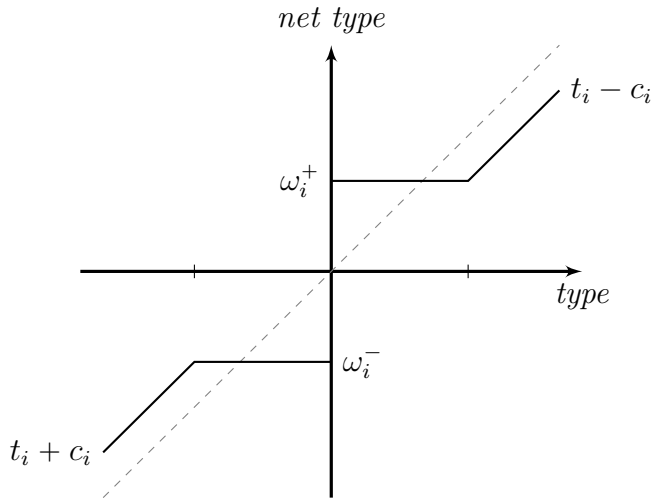
\begin{figure}[htbp!]%
\begin{center}
  \begin{tikzpicture}[decoration={brace},baseline,x=3cm,y=3cm]
      \draw[axis,->] (-1.1,0) -- (1.1,0) node[below] {\textit{type}};
      \draw[axis,->] (0,-1) -- (0,1) node[above] {\textit{net type}};
      \draw[dashed,gray] (-1,-1) -- (1,1) ;
      \draw[thick] (-1,-.8) node[left]{$t_i+c_i$} -- (-.6,-.4) -- (-0,-.4) node[right]{$\omega_i^-$};
      \draw[thick]  (0,.4) node[left]{$\omega_i^+$}  -- (.6,.4) -- (1,.8) node[right]{$t_i-c_i$};
      \draw (0,-.3) ;
      \draw (-.6,-.03) -- (-.6,.03) ;
      \draw (.6,-.03) -- (.6,.03) ;
  \end{tikzpicture}
\end{center}
\caption{Illustration of how types are transformed to net types. The principal implements the new policy whenever the sum of net types is positive.}%
\label{fig:intro}%
\end{figure}

Our analysis provides at least two important insights for the design of mechanisms in applications similar to our model.  We will illustrate them by connecting our analysis to the European Commission's decision on whether to approve a merger. In a merger review, the Commission 
``analyses claimed efficiencies which the companies could achieve when merged together. If the positive effects of such efficiencies for consumers would outweigh the mergers' negative effects, the merger can be cleared'' \cite{EuropeanUnion}. Our analysis suggests, first, that the Commission should not always use claimed efficiencies (which must be verifiable), but might benefit by assuming that a merger has a predetermined estimated efficiency gain, even if they provide no verifiable documentation. Moreover, it might be beneficial to discount efficiency claims that are difficult and expensive to verify.
Second, before starting the process of verifying claimed efficiencies and other reports, the Commission should first determine which reports are decisive and subsequently verify only those. 
While there are other verification rules that could be used,
this is a particularly simple rule that is easy to implement in practice and it provides robust incentives for truth-telling.

To explain the intuition behind the optimal mechanism, we now describe in more detail our main results. We show first that the principal can, without loss of generality, use an incentive compatible direct mechanism, which can be implemented as follows. In the first step, agents communicate their information. For each profile of reports, a mechanism then provides answers to three questions: First, which reports should be verified (\emph{verification rule})?  Second, what is the decision regarding the new policy (\emph{decision rule})? Finally, what is the penalty when someone is revealed to be lying? Because we can focus on incentive compatible mechanisms, penalties will be imposed only off the equilibrium path. The principal can therefore always choose the severest possible penalty, as this weakens incentive constraints but does not affect the decision made on the equilibrium path. 
In general, the principal can implement any decision rule by always verifying all agents. However, the principal has to make a trade-off between using detailed information for ``good'' decisions and incurring the costs of verification. 

Key to solving the principal's problem is that incentive constraints are tractable. Each agent wants to send the report that maximizes the probability that his preferred decision is implemented. We show that if there is a profitable deviation for some type, any type that has a lower equilibrium probability of getting his preferred outcome also finds this deviation profitable. This suggests that incentive constraints are hardest to satisfy for the types that have the lowest equilibrium probability of getting their preferred decision; we call these types the \emph{worst-off types}.\footnote{Since we allow for general utility functions, these are not necessarily the types with the lowest expected utility.} It follows that a mechanism is incentive compatible if and only if it is incentive compatible for the worst-off types. 

We can now explain how and why the optimal mechanism differs from the first best outcome.
First, in the optimal mechanism the principal incurs costs of verification. Verifications are clearly necessary if information is private and, since the incentive constraints for worst-off types are exactly binding, the optimal mechanism uses costly verifications as rarely as possible. 
Second, the decision is distorted compared to the first-best because there is bunching at the bottom. This is optimal for the principal because, as observed above, incentive constraints are hardest to satisfy for worst-off types. Suppose instead there was no bunching at the bottom and a single type had the lowest probability of getting the preferred decision. Then any higher report has to be verified sometimes to make the worst-off type indifferent between reporting truthfully and deviating. Now if we increase the probability that the worst-off type gets his preferred outcome this will only change the decision for this type, which has essentially no effect on the principal's expected utility from the decision. But this makes it less attractive for the worst-off type to claim to be of a different type and the principal can, therefore, verify all other types with a strictly lower probability. Thus, this change allows the principal to save on verification costs for almost all reports but it only  changes the decision for one type. This  implies that the cost-saving effect dominates. We conclude that the original mechanism, with a single worst-off type, could not have been optimal and that the optimal mechanism must feature bunching at the bottom. 
Finally, the principal's first-best decision would be to implement the new policy whenever the sum of types is positive, but in the optimal mechanism the principal uses net types instead to determine the decision, which introduces a further distortion. Whenever an agent's report $t_i$ is verified, the principal pays the verification cost $c_i$. If the principal implements the new policy because agent $i$ reported a high type, $i$'s effect on the principal's payoff is only his net value $t_i-c_i$ and not his actual type $t_i$ because the principal has to pay the verification cost $c_i$. It is, therefore, optimal for the principal to distort the decision rule by using net types instead of true types.

The remainder of the paper is organized as follows. After reviewing relevant literature, we present in Section~\ref{sec_model} our main model and describe the principal's objective. In Section~\ref{sec_cutoff}, we discuss the optimal mechanism, and in Section 4 we prove an equivalence between Bayesian and ex-post incentive compatible mechanisms. We consider various extensions in Section~\ref{sec_rob_ext}, including an analysis of the optimal mechanism with imperfect verification. All proofs not found in the main body of the paper are relegated to the Appendix.

\subsection*{Related Literature}\label{sec_rel_literature}
 
There is a substantial  literature on collective choice problems with two alternatives when monetary transfers are not possible. A particular strand of this literature, dating back to the seminal work of \citeasnoun{rae69}, assumes that agents have cardinal utilities and compares decision rules with respect to ex-ante expected utilities. Because money cannot be used to elicit cardinal preferences, Pareto-optimal decision rules are simple and can be implemented as voting rules, where agents indicate only whether they are in favor of or against the policy \cite{schmitzTroger12,azreliKim14}. Introducing a technology to learn the agents' information allows  a much richer class of decision rules to be implemented. Our main interest lies in understanding how this additional possibility allows for other implementable mechanisms and  changes the optimal decision rule.

\citeasnoun{townsend79} introduces costly verification in a principal-agent model with a risk-averse  agent. Our model differs from his, and the literature building on it \citeaffixed{galeHellwig85,borderSobel87}{see \eg}, since monetary transfers are not feasible in our model. Allowing for monetary transfers yields different incentive constraints and economic trade-offs than in a model without money.

Recently, there has been growing interest in models with state verification that do not allow for transfers. \citeasnoun[henceforth BDL]{ben-porath14} consider a principal that wishes to allocate an indivisible good among a group of agents, and each agent's type can be learned at a given cost. The principal's trade-off is between allocating the object efficiently and incurring the cost of verification. BDL characterize the mechanism that maximizes the expected utility of the principal: it is a favored-agent mechanism, where a pre-determined favored agent receives the object unless another agent claims a value above a threshold, in which case the agent with the highest (net) type gets the object. We study a similar model of costly verification and without transfers, but we are interested in optimal mechanisms in collective choice problems. In these problems more complex voting mechanisms are feasible, even in the absence of verification possibilities. 
More recently, \citeasnoun{mylovanov17} study the allocation of an indivisible good when the principal always learns the private information of the agents but only after having made the allocation decision and having only limited penalties at his disposal. \citeasnoun{halac17} introduce costly verification in a delegation setting and describe the  conditions under which interval delegation with an ``escape clause'' is optimal.

\citeasnoun{glazerRubinstein04} and \citeasnoun{glazerRubinstein06}  consider a situation in which an agent has private information about several characteristics and tries to persuade a principal to take a given action, and  the principal can only check one of the agent's characteristics. Recently, \citeasnoun{ben-porath17} study a class of mechanism design problems with evidence. They show that the optimal mechanism does not use randomization, commitment is not an issue, and robust incentive compatibility does not entail any cost. Additionally, they show that costly verification models  can be embedded as evidence games as an alternative way of finding optimal mechanisms, but the results on commitment and robustness does not apply to costly verification models.\footnote{For additional  papers on mechanism design with evidence, see also \citeasnoun{greenLaffont86}, \citeasnoun{bullWatson07}, \citeasnoun{deneckereSeverinov08}, \citeasnoun{benporathLipman12}.}

\section{Model and Preliminaries}\label{sec_model}

There is a principal and a set of agents $\I=\{1,2,\dots,I\}$. The principal decides between implementing a new policy and maintaining the status quo.\footnote{We discuss in Section \ref{sec_rob_ext} how our analysis changes if the principal can decide between more than two actions.} Each agent holds private information, summarized by his type $t_i\in \mathbb{R}$. The payoff to the principal is $\sum_i t_i$ if the new policy is implemented, and it is normalized to zero if the status quo remains. Monetary transfers are not possible. The private information held by the agents is verifiable. The principal can check agent $i$ at a cost of $c_i$, in which case he  learns the true type of agent $i$. Being verified imposes no costs on the agent. Agent $i$ with type $t_i$ obtains a utility of $u_i(t_i)$ if the policy is implemented and zero otherwise. For example, if $u_i(t_i)=t_i$ for each agent, the principal maximizes utilitarian welfare; in general, the principal could have divergent preferences, for example, because he only cares about how the new policy affects himself.\footnote{Another interpretation of the objective function, suggested by a referee, is that the principal is interested in the mean of an unknown parameter.} Types are drawn independently from the type space $T_i \subset \mathbb{R}$ according to the distribution function $F_i$ with finite moments and density $f_i$. Let $t\equiv (t_i)_{i\in \I}$ and $T\equiv \prod_i T_i$. 
 
The principal can design a mechanism and agents play a Bayesian Nash equilibrium in the game induced by the mechanism. A mechanism could potentially be an indirect and complicated dynamic mechanism that includes multiple rounds of communication and checking. However, we show in Appendix \ref{sec_rev_principle} that it is without loss of generality to focus on direct mechanisms with truth-telling as a Bayesian Nash equilibrium. To allow for stochastic mechanisms we introduce a correlation device as a tool to correlate the decision rule with the verification rules. Assume that $s$ is a random variable that is drawn independently of the types from a uniform distribution on $[0,1]$, and only observed by the principal. 
A direct \textit{mechanism} $(d,a,\ell)$ consists of a \textit{decision rule} $d:T\times[0,1]\rightarrow\{0,1\}$, a profile of \textit{verification rules} $a\equiv(a_i)_{i\in\I}$, where $a_i:T\times[0,1]\rightarrow\{0,1\}$, and a profile of \textit{penalty rules} $\ell\equiv(\ell_i)_{i\in \I}$, where $\ell_i:T \times T_i \times [0,1]\rightarrow\{0,1\}$. 
In a direct mechanism $(d,a,\ell)$, each agent sends a message $m_i \in T_i$ to the principal. Given these messages the principal verifies agent $i$ if $a_i(m,s)=1$. If no one is found to have lied, the principal implements the new policy if $d(m,s)=1$.\footnote{With slight abuse of notation, we will drop the realization of the randomization device as an argument whenever the correlation is irrelevant. In these cases, $\E_{s}[d(m,s)]$ is simply denoted as $d(m)$ and $\E_{s}[a_i(m,s)]$ is denoted as $a_i(m)$.}  If the verification reveals that agent $i$ has lied, the new policy is implemented if and only if $\ell_{i}(m,t_{i},s)=1$, where $t_{i}$ is agent $i$'s true type. If more than one agent lied it is arbitrary what decision to take. For each agent $i$, let $T_i^+ := \{t_i \in T_i | u_i(t_i) >0\}$ denote the set of types that are in favor of the new policy, and let $T_i^- := \{t_i \in T_i | u_i(t_i) <0\}$ denote the set of types that are against the policy. We assume that $t_i^- < t_i^+$ for all $t_i^- \in T_i^-$ and $t_i^+ \in T_i^+$. This assumption ensures a weak alignment between the agents' and the principal's preferences: if an agent is in favor of the new policy this increases the principal's expected utility from implementing the policy. This implies that no agent has an incentive to misrepresent his ordinal type, for example by claiming that he is in favor of the new policy while he actually is against the new policy. To simplify notation, we also assume that no agent is indifferent, so $T_i = T_i^+ \cup T_i^-$.

\begin{example}\rm\label{ex_utility}
To illustrate a situation in which the general utility function $u_i(t_i)$ is useful, consider a principal deciding whether to provide a public good. Agents are privately informed about their value for the public good, which is always positive. The principal bears the cost $k$ of providing the public good and maximizes the sum of the agents' values minus the potential cost of providing the public good.

This example can be mapped into our model by defining the type of an agent that values the public good at $v_i$ to be $t_i=v_i-k/I$, and by setting $u_i(t_i)>0$ for all $t_i$. Clearly, all agents are in favor, even if their types are negative, and the principal's payoff of providing the public good is $\sum t_i = \sum v_i - k$.

\end{example}

\medskip

Truth-telling is a Bayesian Nash equilibrium for the mechanism $(d,a,\ell)$ if and only if the mechanism $(d,a,\ell)$ is Bayesian incentive compatible, which is formally defined as follows.

\begin{definition}\label{def_bic}\rm
A mechanism $(d,a,\ell)$ is \emph{Bayesian incentive compatible (BIC)} if, for all $i\in \I$ and all $t_i,t_i'\in T_i$,
\begin{align*}
     u_i(t_i) \cdot \E_{t_{-i},s}[d(t_i,t_{-i},s)] \ge u_i(t_i) \cdot  \E_{t_{-i},s}[d(t'_i,t_{-i},s) [1-a_i(t'_i,t_{-i},s)] + a_i(t'_i,t_{-i},s) \ell_i(t'_i,t,s)].
\end{align*}
\end{definition}
The left-hand side of the equation in Definition~\ref{def_bic} is the interim expected utility if agent $i$ truthfully reports his type $t_i$  and all others also report truthfully. The right-hand side is the interim expected utility if agent $i$ instead lies and reports to be of type $t'_i$.

The aim of the principal is to find an incentive compatible mechanism that maximizes his expected utility. The expected utility of the principal for a given mechanism $(d,a,\ell)$ is
\[
\E_{t}\Big[ \sum_{i} (d(t)t_i - a_i(t)c_i) \Big], 
\]
where expectations are taken over the prior distribution of types. 

Because the principal uses an incentive compatible mechanism, lies will occur only off the equilibrium path and will therefore not directly enter the objective function. The principal can therefore always choose the severest possible penalty for a lying agent. This will not affect the outcome on the equilibrium path, but it weakens the incentive constraints. For example, if an agent is found to have lied and his true type supports the new policy, the penalty will be to maintain the status quo. Henceforth, without loss of optimality,  we assume that the principal uses this penalty scheme and, we will drop the reference to a profile of penalty functions when we describe a mechanism. 

At this point, we have all the prerequisites and definitions required to formally state the aim of the principal:   
\begin{align*} \label{P}
	\underset{d,a}\max \ &\E_{t}  \Big[\sum_i(d(t)t_i - a_i(t)c_i)\Big]  \tag{P}\\
	& \text{s.t. $(d,a)$ is Bayesian incentive compatible.}
\end{align*}

The following lemma provides a characterization of Bayesian incentive compatible mechanisms.

\begin{lemma}\label{lemma_bic}
A mechanism $(d,a)$ is Bayesian incentive compatible if and only if, for all $i\in \I$ and all $t_i\in T_i$,
\begin{align*}
     \inf_{t_i'\in T_i^+} \E_{t_{-i},s}[d(t_i',t_{-i},s)] &\ge \E_{t_{-i},s}[d(t_i,t_{-i},s) [1-a_i(t_i,t_{-i},s)]]  \hspace{.5cm}\text{ and } \\
      \sup_{t_i'\in T_i^-} \E_{t_{-i},s}[d(t_i',t_{-i},s)] &\le \E_{t_{-i},s}[d(t_i,t_{-i},s) [1-a_i(t_i,t_{-i},s)] + a_i(t_i,t_{-i},s)].
\end{align*}
\end{lemma}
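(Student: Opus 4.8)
The plan is to derive the two inequalities from Definition~\ref{def_bic} by observing that the constraint naturally splits according to the sign of $u_i(t_i)$, and that the severest-penalty assumption makes the penalty term vanish in exactly the relevant direction. First I would note that, under the maintained penalty scheme, if agent $i$'s true type $t_i$ is in favor of the new policy (i.e.\ $u_i(t_i)>0$), then a lie discovered by verification results in the status quo, so $\ell_i(t_i',t,s)=0$; conversely, if $u_i(t_i)<0$ (true type against), a discovered lie results in the new policy being implemented, so $\ell_i(t_i',t,s)=1$. Substituting these into the right-hand side of the BIC inequality and dividing through by $u_i(t_i)$ — which is positive in the first case and negative in the second, hence flips the inequality — yields, for $t_i\in T_i^+$,
\[
\E_{t_{-i},s}[d(t_i,t_{-i},s)] \ge \E_{t_{-i},s}\big[d(t_i',t_{-i},s)[1-a_i(t_i',t_{-i},s)]\big]\quad\text{for all }t_i'\in T_i,
\]
and, for $t_i\in T_i^-$,
\[
\E_{t_{-i},s}[d(t_i,t_{-i},s)] \le \E_{t_{-i},s}\big[d(t_i',t_{-i},s)[1-a_i(t_i',t_{-i},s)] + a_i(t_i',t_{-i},s)\big]\quad\text{for all }t_i'\in T_i.
\]

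Next I would reorganize these families of constraints by quantifier swapping. The first display says that every in-favor type's equilibrium decision probability dominates the "deviation value" $\E_{t_{-i},s}[d(t_i',t_{-i},s)[1-a_i(t_i',t_{-i},s)]]$ of \emph{every} report $t_i'$; taking the infimum over in-favor types $t_i\in T_i^+$ on the left, this is equivalent to saying that for every report $t_i'$ (in particular every $t_i'\in T_i$, but it suffices to check $t_i'\in T_i$ as the deviation target), $\inf_{t_i'\in T_i^+}\E_{t_{-i},s}[d(t_i',t_{-i},s)]$ dominates the deviation value of $t_i'$ — which is precisely the first inequality in the lemma once we rename the free report variable back to $t_i$. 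Symmetrically, the second display gives the second inequality after taking the supremum over $t_i\in T_i^-$. For the converse direction I would reverse these manipulations: assuming the two lemma inequalities hold for all $t_i\in T_i$, I recover for each fixed in-favor type $t_i$ the bound $\E_{t_{-i},s}[d(t_i,t_{-i},s)]\ge\inf_{t_i'\in T_i^+}\E_{t_{-i},s}[d(t_i',t_{-i},s)]$, and I must check this infimum still dominates every deviation value; but that is exactly what the first lemma inequality asserts with $t_i'$ playing the role of the generic report. Multiplying back by $u_i(t_i)$ and reinstating the penalty terms recovers Definition~\ref{def_bic}.

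The one genuine subtlety — and the step I expect to be the main obstacle — is handling deviations across the ordinal divide, i.e.\ checking that an agent with $u_i(t_i)>0$ never wants to report some $t_i'\in T_i^-$ and vice versa, and reconciling why the lemma only ranges the infimum/supremum over $T_i^+$ respectively $T_i^-$ rather than over all of $T_i$. Here I would invoke the maintained assumption $t_i^-<t_i^+$ together with the monotonicity structure it implies: because a report in $T_i^-$ (relative to the principal's downstream use of reports) is never more favorable to the new policy than a report in $T_i^+$, one shows $\E_{t_{-i},s}[d(t_i',t_{-i},s)[1-a_i(t_i',t_{-i},s)]]$ is no larger for $t_i'\in T_i^-$ than its infimum over $T_i^+$ — so restricting the infimum to $T_i^+$ loses nothing for an in-favor type, and symmetrically for the supremum; intuitively, an in-favor agent's best deviation is always to \emph{some} in-favor report, since those are the ones that push the decision toward the new policy. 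Once this cross-divide argument is pinned down, the remaining bookkeeping — the sign flip from dividing by $u_i(t_i)$, the quantifier swap, and the penalty substitution — is entirely routine.
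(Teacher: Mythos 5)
Your first two paragraphs are essentially the paper's own proof: fix the severest penalty ($\ell_i=0$ when the true type is in $T_i^+$, $\ell_i=1$ when it is in $T_i^-$), divide the BIC inequality by $u_i(t_i)$ with the sign flip in the second case, and observe that requiring the resulting inequality for every true type in $T_i^+$ (resp.\ $T_i^-$) and every report is the same as requiring it with the infimum (resp.\ supremum) over true types on the left-hand side. That is the whole argument, and your quantifier bookkeeping there is correct (up to a harmless notational reuse of $t_i'$ both as the report and as the dummy of the infimum).

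The ``genuine subtlety'' you flag in your last paragraph, however, is a misreading of the lemma's quantifiers, and the fix you sketch would not survive scrutiny. In the lemma the infimum and supremum range over \emph{true} types (the worst-off types), while the free variable $t_i$ is the \emph{report} and ranges over all of $T_i$ in both inequalities. Hence a type in $T_i^+$ deviating to a report in $T_i^-$ is already covered by the first inequality evaluated at that report, and symmetrically for the second; restricting the infimum to $T_i^+$ and the supremum to $T_i^-$ only reflects which penalty and which direction of inequality applies to which true types, not which deviations are considered. Consequently the auxiliary claim you propose to establish --- that $\E_{t_{-i},s}[d(t_i',t_{-i},s)(1-a_i(t_i',t_{-i},s))]$ for $t_i'\in T_i^-$ is bounded above by its infimum over $T_i^+$ --- is neither needed nor true in general: the lemma places no structure on $(d,a)$, and the assumption $t_i^-<t_i^+$ constrains preferences, not the mechanism, so no such monotonicity is available. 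Had your proof actually relied on that step it would fail; as it stands, your first two paragraphs already cover all deviations, so the correct move is simply to drop the last paragraph rather than to prove anything there.
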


\noindent
We call a type a \emph{worst-off type} if the infimum (respectively the supremum) in Lemma \ref{lemma_bic} is attained for this type. The intuition for Lemma \ref{lemma_bic} is as follows: first, because of the binary nature of the principal's decision an agent maximizes his utility by sending a report that maximizes the probability of getting the preferred decision. Now if type $t_i$ can increase this probability by deviating to a report $t_i'$, any other type can use the same deviation $t_i'$ to get the same probability (since types are distributed independently). By construction worst-off types have the lowest probability of getting their preferred decision when being truthful.  Thus,  whenever some type has a profitable deviation so does the worst-off types.

\begin{proof} [Proof of Lemma \ref{lemma_bic}]
Let $i\in \I$. We will consider two cases, one when agent $i$ is in favor of the policy ($t_i'\in T_i^+$), and the other case is when agent $i$ is against the policy ($t_i'\in T_i^-$). 

Since $u_i(t_i)>0$ for $t_i \in T_i^+$ and we can without loss of generality set $\ell_i(t',t_i,s)=0$ for all $t'$ and $t_i \in T_i^+$, we get that agent $i$ with type $t'_i\in T_i^+$ has no incentive to deviate if and only if, 
for all $t_i\in T_i$,
\begin{align}\label{eq_BIC_plus}
     \E_{t_{-i},s}[d(t_i',t_{-i},s)] &\ge \E_{t_{-i},s}[d(t_i,t_{-i},s) [1-a_i(t_i,t_{-i},s)]].
\end{align}
Since~(\ref{eq_BIC_plus}) is required to  hold for all $t'_i\in T_i^+$, it must in particular hold for the infimum over $T_i^+$, which is equivalent to Definition~\ref{def_bic} of BIC. 

Similarly, since $u_i(t_i)<0$ for $t_i \in T_i^-$ and we can without loss of generality set $\ell_i(t',t_i,s)=1$ for all $t'$ and $t_i \in T_i^-$, a type $t_i'\in T_i^-$, has no incentive to deviate if and only if, for all $t_i\in T_i$,
\begin{align}\label{eq_BIC_minus}
      \E_{t_{-i},s}[d(t_i',t_{-i},s)] &\le \E_{t_{-i},s}[d(t_i,t_{-i},s) [1-a_i(t_i,t_{-i},s)] + a_i(t_i,t_{-i},s)].
\end{align}
Since~(\ref{eq_BIC_minus}) is required to  hold for all $t'_i\in T_i^-$, it must in particular hold for the supremum over $T_i^-$, which is equivalent to Definition~\ref{def_bic} of BIC.
\end{proof}

\section{Voting-with-evidence}\label{sec_cutoff}

In this section, we  show that a  voting-with-evidence mechanism is optimal, find optimal weights in a setting with two agents, and discuss comparative statics. 

\subsection{Optimal mechanism}\label{subsec:opt}
To formally define a voting-with-evidence mechanism, we define, given a collection of weights $\{ \omega_i^+ , \omega_i^-\}_{i \in \mathcal{I}}$ satisfying $\omega_i^- \le \omega_i^+$, the \emph{weight function} $w_i:T_i\rightarrow \mathbb{R}$ by
\begin{align*}
  w_i(t_i) = \begin{cases}
	  t_i + c_i &\text{ if } t_i \in T_i^- \text{ and } t_i < \omega_i^{-}- c_i\\    
	\omega_i^{-} &\text{ if } t_i \in T_i^- \text{ and }  t_i \ge \omega_i^{-}-c_i \\        
    \omega_i^{+} \hspace{.3cm} &\text{ if }t_i \in T_i^+ \text{ and }  t_i \le \omega_i^{+}+c_i \\
   t_i - c_i &\text{ if } t_i \in T_i^+ \text{ and } t_i > \omega_i^{+}+ c_i.
  \end{cases}
\end{align*}
Given the weight functions $w_i$, we say that a mechanism is a \emph{voting-with-evidence mechanism} if
\begin{align*}
  d(t) = \begin{cases}
    1 \hspace{.3cm} &\text{ if } \sum{w_i(t_i)} > 0 \\
    0 &\text{ if } \sum{w_i(t_i)} < 0   
  \end{cases}
\end{align*}
and an agent $i$ is verified if and only if he is decisive. An agent $i$ is \emph{decisive} at a profile of reports $t$ if his preferred outcome is implemented and if the decision were to change if his report was replaced by his relevant cutoff ($\omega_i^+ + c_i $ if he is in favor and $\omega_i^- -c_i$ if he prefers status quo).

\begin{figure}[htbp!]%
\begin{center}
  \begin{tikzpicture}[decoration={brace},baseline,x=3cm,y=3cm]
      \draw[axis,->] (-1.1,0) -- (1.1,0) node[below] {$t_i$};
      \draw[axis,->] (0,-1) -- (0,1) node[above] {$w_i$};
      \draw[dashed,gray] (-1,-1) -- (1,1) node[above]{$u_i(t_i)=t_i$};
      \draw[thick] (-0.9,-.6) -- (-.6,-.3) -- (-0,-.3);
      \draw[thick]  (0,.4) node[left] {$\omega_i^{+}$} -- (.6,.4) -- (1,.8);
      \draw (0,-.3) node[right] {$\omega_i^{-}$};
      \draw (-.6,-.03) -- (-.6,.03) node[above] {$\omega_i^- -c_i$};
      \draw (.6,-.03) -- (.6,.03) node[above] {$\omega_i^+ +c_i$};
      \draw[decorate,decoration={amplitude=5pt,mirror}] (0,-.05) -- (1,-.05) node[midway,below] {\small $T_i^+$};
      \draw[decorate,decoration={amplitude=5pt}] (0,-.05) -- (-1,-.05) node[midway,below] {\small $T_i^-$};
  \end{tikzpicture}
\end{center}
\caption{Example illustrating how weights are determined with utility $u_i(t_i)= t_i$.}%
\label{fig:altered}
\end{figure}
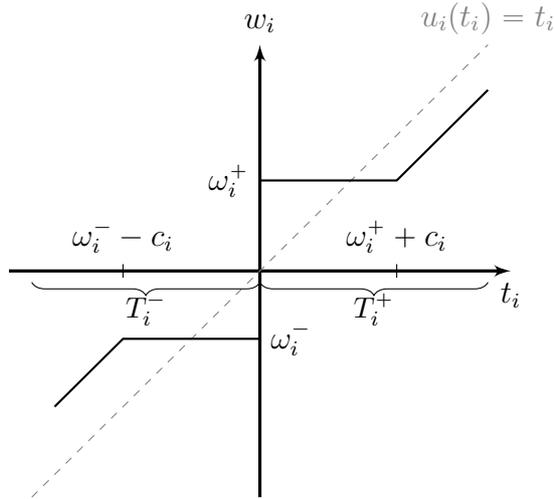

A voting-with-evidence mechanism can be interpreted as a cardinal  voting rule, where agents have the option to make specific claims  to gain additional influence. To see this, consider the following indirect mechanism. Each agent casts a vote either in favor of or against the new policy. In addition, agents can make claims about their information. 
If agent $i$ does not make such a claim, his vote is weighted by the baseline weights $\omega_i^+$ if he votes in favor of  and $-\omega_i^- $ if he votes  against the new policy. If agent $i$ supports the new policy and makes a claim $t_i$, his weight is increased to $t_i-c_i$. Similarly, if he opposes the new policy, his weight is increased to $-t_i+c_i$. The new policy is implemented whenever the sum of weighted votes in favor are larger than the sum of the weighted votes against the new policy. An agent's claim will be checked whenever he is decisive. 
This indirect mechanism indeed implements the same outcome as a voting-with-evidence mechanism. Any agents with weak or no information supporting their desired alternative will prefer to merely cast a vote, whereas agents with sufficiently strong information will make claims to gain additional influence over the outcome of the principal's decision. Note that the cutoffs already determine the default voting rule that is used if all agents cast votes.

A voting-with-evidence mechanism is particularly simple to describe when all agents have type-independent preferences, i.e., for each $i$, $u_i(t_i)> 0 $ or $u_i(t_i)< 0 $ for all $t_i$. For instance, consider the case of deciding on the provision of a public good, where the cost of provision of the public good is  borne by the principal  (this case is spelled out in detail   in  Example~\ref{ex_utility}). Therefore, when agent  $i$ is always in favor of implementing the project,  agent $i$ is assigned a default type of $\omega_i^+ +c_i$, and the principal presumes $i$ has the default type unless $i$ reports differently. The principal reduces the reported (or presumed) type by the verification cost to obtain $i$'s net type, and implements the policy whenever the sum of net types is positive. If an agent changes the outcome because he reports a type different from the default type, he will be verified.

\begin{remark}[Ex-post incentive compatibility of voting-with-evidence mechanisms]\rm\label{rem_expost_cutoff}
We will now show that a voting-with-evidence mechanism is incentive compatible. We will do so by showing that for every type realization truth-telling is a best response. Let $t\in T$ be a profile of types, consider an agent $i$ with type $t_i$, and  assume that agent $i$ is in favor of the new policy, i.e.,  $t_i\in T_i^+$. If $d(t_i,t_{-i})= 1$, then agent $i$ gets his preferred alternative, and there is no beneficial deviation. Suppose instead that $d(t_i, t_{-i})= 0$; then, agent $i$ can only change the decision by reporting some $t_i'> t_i$ and $t_i' > \omega_i^+ +c_i$. However, if $d(t_i',t_{-i})= 1$, then agent $i$ is decisive and will be verified. Agent $i$'s true type $t_i$ will be revealed and the penalty is the retention of the status quo. Thus, agent $i$ cannot gain by deviating to $t_i'$. A symmetric argument holds if agent $i$ is against the new policy, i.e., $t_i\in T_i^-$. These arguments imply that truth-telling is an optimal response to truth-telling for every type realization and therefore independently of the beliefs the agents hold. We conclude that a voting-with-evidence mechanism is \textit{ex-post incentive compatible}.
\end{remark}

\medskip

We are now ready to state our main result.

\begin{theorem}\label{thm_opt}
A voting-with-evidence mechanism maximizes the expected utility of the principal.
\end{theorem}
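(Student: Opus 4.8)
The plan is to reduce problem \eqref{P} to a relaxed problem in which only the worst-off incentive constraints from Lemma \ref{lemma_bic} appear, solve that relaxed problem pointwise, and then verify that the solution is in fact a voting-with-evidence mechanism (which, by Remark \ref{rem_expost_cutoff}, is automatically incentive compatible, hence feasible). First I would fix, for each agent $i$, the interim worst-off probability levels: let $p_i^+ := \inf_{t_i'\in T_i^+}\E_{\ti,s}[d(t_i',\ti,s)]$ and $p_i^- := \sup_{t_i'\in T_i^-}\E_{\ti,s}[d(t_i',\ti,s)]$. By Lemma \ref{lemma_bic}, BIC is equivalent to $\E_{\ti,s}[d(t_i,\ti,s)(1-a_i(t_i,\ti,s))]\le p_i^+$ for all $t_i$ and $\E_{\ti,s}[d(t_i,\ti,s)(1-a_i(t_i,\ti,s))]+\E_{\ti,s}[d(t_i,\ti,s)a_i(t_i,\ti,s)]\ge p_i^-$ for all $t_i$. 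In particular, given a target decision rule $d$, the verification rule must deliver enough verification probability, and since verification is costly the principal will use the \emph{minimal} $a_i$ consistent with these constraints at each report $t_i$: verify $t_i\in T_i^+$ just enough that the unverified-implementation probability drops to $p_i^+$, and verify $t_i\in T_i^-$ just enough that $p_i^-$ is reached from below.

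\textbf{Key steps.} Step 1: Show that it suffices to let the principal choose, for each $i$, the two scalars $p_i^\pm$ and the decision rule $d$, with the verification rule pinned down as the cost-minimizing response described above; this turns \eqref{P} into an optimization over $d$ and $(p_i^\pm)_i$ only. Step 2: Substitute the optimal $a_i$ into the objective. The expected verification cost contributed by agent $i$ at report $t_i$, integrated against $f_i$, becomes a function of $d$ and $p_i^+$ (for $t_i\in T_i^+$) or $p_i^-$ (for $t_i\in T_i^-$); crucially, because $a_i$ enters the objective linearly with coefficient $-c_i$ and enters the constraints linearly, one can rewrite the principal's payoff so that implementing the policy ``because of'' a verified agent $i$ yields net contribution $t_i-c_i$ rather than $t_i$ (and symmetrically $t_i+c_i$ when status quo is retained because of a verified opposing agent). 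This is where the net types $t_i\mp c_i$ emerge. Step 3: For fixed $(p_i^\pm)_i$, solve for the optimal $d$ pointwise in $t$ (and $s$): the principal implements the new policy whenever the sum of the relevant net-type contributions is positive. Checking which types are ``cheap'' to push above/below threshold shows that types with $t_i\in T_i^+$ and $t_i\le \omega_i^++c_i$ are bunched at weight $\omega_i^+$ (where $\omega_i^+$ is an endogenous cutoff tied to $p_i^+$) and symmetrically for $T_i^-$; this reproduces exactly the weight function $w_i$. Step 4: Optimize over the cutoffs $(p_i^\pm)_i$, i.e.\ over $(\omega_i^+,\omega_i^-)_i$ with $\omega_i^-\le\omega_i^+$, and argue an optimum exists (using the finite-moment assumption on $F_i$ for integrability and a compactness/continuity argument on the cutoffs); call the maximizers $\omega_i^{+*},\omega_i^{-*}$. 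Step 5: Observe the resulting mechanism is precisely a voting-with-evidence mechanism with these weights, and verify directly that an agent is verified under the cost-minimal $a_i$ exactly when he is decisive in the sense defined before Theorem \ref{thm_opt}. Combined with Remark \ref{rem_expost_cutoff} (feasibility) and the fact that the construction is an upper bound on the value of \eqref{P} (since every BIC mechanism satisfies the relaxed constraints with \emph{some} $p_i^\pm$ and uses at least the minimal verification), this proves optimality.

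\textbf{Main obstacle.} The delicate point is Step 2--Step 3: showing that, after substituting the cost-minimizing verification rule, the principal's problem ``separates'' enough that the optimal decision rule is a threshold rule in net types. The subtlety is that $a_i(t_i,\ti,s)$ must be chosen to hit the interim target $p_i^+$ \emph{on average over $\ti$ and $s$}, not state by state, so there is freedom in how verification probability is spread across $(\ti,s)$; one must argue that it is optimal to concentrate verification on exactly those profiles where agent $i$ is pivotal at his cutoff, and that doing so is consistent across agents simultaneously. Handling this for all agents at once — making sure the verification rules and the single decision rule are mutually consistent and that the pointwise optimization over $d$ for given cutoffs genuinely yields the stated $w_i$ — is the technical heart of the argument. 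A secondary obstacle is ruling out, or correctly handling, ties ($\sum w_i(t_i)=0$) and boundary/degenerate type spaces, and establishing existence of the optimal cutoffs rather than merely an infimum.
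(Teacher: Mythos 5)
Your overall architecture matches the paper's: relax \eqref{P} by keeping only the worst-off constraints from Lemma \ref{lemma_bic}, substitute the cost-minimizing verification probabilities (which is where the net types $t_i\mp c_i$ and the terms involving $\inf/\sup$ of the interim decision rule appear, exactly as in the paper's relaxed problem \eqref{r} and Lemma \ref{relax}), and at the end recover feasibility by verifying decisive agents so that the relaxation's upper bound is attained. However, there is a genuine gap at your Step 3, and it is precisely the step the paper spends most of its effort on. Once you fix the worst-off levels $p_i^{\pm}$, the optimization over $d$ is \emph{not} pointwise: the substitution of minimal verification costs is only valid together with the interim constraints $\E_{t_{-i}}[d(t_i,t_{-i})]\ge p_i^+$ for all $t_i\in T_i^+$ and $\E_{t_{-i}}[d(t_i,t_{-i})]\le p_i^-$ for all $t_i\in T_i^-$, and these couple the decision across profiles. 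Pointwise maximization of the substituted objective would give a threshold rule in the raw net types with no bunching; the bunching at $\omega_i^{\pm}$ is created by those interim constraints, and ``checking which types are cheap to push above/below threshold'' is an assertion, not an argument. The paper handles this by applying the Karush--Kuhn--Tucker theorem to a finite-type auxiliary problem, obtaining multiplier-adjusted weights $h_i^*(t_i)=t_i-\tilde c_i(t_i)+\lambda_i^*(t_i)/f_i(t_i)$, and then proving (Claim \ref{claim_omega}, via a majorization/Schur-convexity argument) that these weights can be replaced by their conditional averages on the bunching regions without disturbing optimality of $d^*$; only this yields constant weights $\omega_i^{\pm}$ there. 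Without that step you cannot conclude that the optimum lies in the voting-with-evidence class, so your Step 4 (optimizing over cutoffs within that class) does not deliver optimality over all BIC mechanisms. The infinite type space also needs the paper's separate approximation argument (Lemma \ref{prop:opt_inf}), not just compactness in the cutoffs.

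Relatedly, you misidentify the main obstacle. How the verification probability $a_i(t_i,t_{-i},s)$ is spread across $(t_{-i},s)$ is immaterial at the level of the relaxation: by Lemma \ref{lemma_bic} only the interim quantities $\E_{t_{-i},s}[d(1-a_i)]$ and $\E_{t_{-i},s}[a_i]$ enter both the incentive constraints and the expected cost, so any spreading with the right interim totals works, and the ``verify exactly when decisive'' rule is checked only at the very end (as in the paper's proof of Theorem \ref{thm_opt}, where it makes inequalities \eqref{first_ineq} and \eqref{second_ineq} hold with equality). The real technical heart is the interim-constrained optimization of $d$ and the flattening of the Lagrangian weights into $\omega_i^{\pm}$; your proposal currently assumes its conclusion.
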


Appendix~\ref{sec_proofmainresult} contains the proof of Theorem~\ref{thm_opt}. We first prove it for finite type spaces, and then extend the proof  to infinite type spaces through an approximation argument. Before finding optimal weights for a voting-with-evidence mechanism in a two-agent example, we will explain intuitively why these mechanisms are optimal. 

A voting-with-evidence mechanism differs in three respects from the first-best mechanism. We will argue that these inefficiencies have to be present in an optimal mechanism and that any additional inefficiencies will make the principal worse off. First, the principal verifies all decisive agents and incurs the corresponding costs, which he would not need to do if the information were public. Clearly, sometimes verifying agents is necessary to satisfy the incentive constraints for the given decision rule. Moreover, in a voting-with-evidence mechanism the verification rules are chosen such that the incentive constraints are in fact binding: if the principal were to reduce the audit probability for some report, types in the bunching region would have a strict incentive to send this report. Thus, the principal cannot implement the given decision rule with lower verification costs. 

The second inefficiency is introduced by replacing types with net types. Specifically, any report  $t_i \in T_i^+$ and above $\omega_i^+ +c_i$ is replaced by the net type $t_i-c_i$. Similarly, types $t_i \in T_i^-$ and below $\omega_i^- -c_i$ are replaced by the net type  $t_i+ c_i$.
Suppose we replace types of agent $i$ by  net types. Then, for a given profile of types, by replacing agent $i$'s type with his net type, the decision will either remain the same or it will change. First, if the decision remains the same it does not matter whether the type or net type is used. On the other hand, if the decision changes then agent $i$ must be decisive with type $t_i$, but not with the net type. Therefore, the principal has to verify the agent if he uses the type $t_i$ to decide on the policy in order to induce truthful reporting and incurs the cost of verification. Hence, the actual contribution of agent $i$ to the principal's utility is his net type, $t_i - c_i$, and not $t_i$. Thus, the principal is made better off by using $i$'s net type $t_i - c_i$ when determining his decision on the policy, anticipating that he will have to verify the agent whenever he is decisive.

The third inefficiency arises from the fact that all types below the cutoff $\omega_i^+ +c_i$ of an agent in favor of the policy are bunched together and receive the same weight, the baseline weight $\omega_i^+$. Similarly, all types above the cutoff $\omega_i^- -c_i$ and against the policy are bunched together into the baseline weight $\omega_i^-$. Suppose instead that in the optimal mechanism there was a type $t_i'\in T_i^+$ that uniquely had the lowest probability of getting his preferred decision, $\E[d(t_i',t_{-i})]<\E[d(t_i,t_{-i})]$ for all $t_i$. Increasing the probability with which this type gets his most preferred alternative does not affect the principal's expected utility directly (because this type is realized with probability 0). However, our characterization of incentive compatibility implies that changing this probability affects the audit probability for all other types $t_i\in T_i^+$:
\begin{align*}
\Ei[a(t_i,\ti)]\ge \Ei[d(t_i,\ti)] - \Ei[d(t_i',\ti)].
\end{align*}
Therefore, changing the allocation on a Null set will allow the principal to save verification costs with strictly positive probability. This contradicts that the original mechanism could be optimal and implies that any optimal mechanism will have bunching ``at the bottom''.\footnote{More specifically, assume there is an agent $i$ who is always in favor of the new policy, his type space is $T_i=[0,1]$ and suppose $\Ei[d(0,t_{-i})]<\Ei[d(t_i,t_{-i})]$, so $0$ is the only worst-off type. In particular, every report except 0 will sometimes be verified. Consider changing the decision rule so that, for any type $t_i\in[0,\varepsilon]$ and any $t_{-i}$, the probability of implementing the new policy is $\tilde{d}(t_i,t_{-i})=\E[d(z,t_{-i})|z\le \varepsilon]$, and the expected decision is unchanged for all other types of $i$ and all other agents.
It then follows from Lemma \ref{lemma_bic} that for any type above $\varepsilon $ the verification probability can be reduced by $\delta=\Ei[\tilde{d}(0,t_{-i})-d(0,t_{-i})]>0$ and no type of agent $i$ below $\varepsilon$ will ever be verified. For $\varepsilon$ sufficiently small, the saving in verification costs is in the order of $\delta(1- \varepsilon)$ and therefore outweighs the inefficiency induced to the decision rule, which is in the order of $\delta \varepsilon$. Hence, it could not have been optimal to have a unique worst-off type.}

\begin{remark}\rm
We comment briefly on the role of the assumption $t_i^-<t_i^+$ for all $t_i^-\in T_i^-$ and $t_i^+\in T_i^+$. Without this assumption we get a similar result to Theorem \ref{thm_opt} except for the conclusion $\omega_i^+\ge \omega_i^-$. We then have to check whether agents have an incentive to misreport their ordinal preference in this mechanism. As long as $\omega_i^+\ge \omega_i^-$ all incentive constraints are satisfied even if the assumption $t_i^-<t_i^+$ is violated. Only if preferences are strongly misaligned, so that an agent being in favor makes the principal less eager to implement the new policy, we have to augment the mechanism by either (i) verifying agents even if they report in the bunching region or (ii) adjusting the weights so that $\omega_i^+\ge \omega_i^-$ holds.
\end{remark}
%
%
 
\subsection{Optimal weights and comparative statics for two agents}
We will begin with characterizing the optimal weights in an utilitarian setting with two agents and then discuss comparative statics.\footnote{With more than two agents, the weight of agent $i$ not only affects the likelihood that $i$ is decisive, but also has non-trivial effects on the probability that other agents are decisive. It is therefore more difficult to find closed-form solutions for the optimal weights $\omega_i^+,\omega_i^-$.}  
   
\begin{proposition}\label{prop:opt_weights}
Suppose  $I=2$, $T_i^+=\{t_i\in T_i| t_i\ge0\}$, and $T_i^-=\{t_i\in T_i| t_i<0\}$. Let $\omega_i^+$ and $\omega_i^-$ be implicitly defined by 
\begin{align*}
\E[t_i|t_i\ge 0] &= \E[\max\{\omega_i^+,t_i-c_i\}|t_i\ge 0] \text{ and }\\
\E[t_i|t_i< 0] &= \E[\min\{\omega_i^-,t_i+c_i\}|t_i< 0].
\end{align*}
Then voting-with-evidence using weights $\omega_i^+$ and $\omega_i^-$ is optimal.
\end{proposition}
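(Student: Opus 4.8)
The plan is to use Theorem~\ref{thm_opt} to reduce to choosing the four weights $\omega\equiv(\omega_1^+,\omega_1^-,\omega_2^+,\omega_2^-)$, and then to show that the principal's payoff is single-peaked in each weight, with the peak located — independently of the values of the other three weights — at the value prescribed by the proposition; chaining the four single-peakedness statements yields global optimality.

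\emph{Reduction.} Write $V(\omega)=\E_t\big[\sum_i(d(t)t_i-a_i(t)c_i)\big]$ for the payoff under the voting-with-evidence mechanism with weights $\omega$, where $d(t)=\ind\{w_1(t_1)+w_2(t_2)>0\}$ and $a_i=\ind\{i\text{ decisive}\}$. I would first observe that the proposed weights satisfy $\omega_i^+>0>\omega_i^-$: if $\omega_i^+\le0$ then $\max\{\omega_i^+,t_i-c_i\}<t_i$ on the positive-measure set $\{t_i>0\}$, contradicting $\E[t_i\mid t_i\ge0]=\E[\max\{\omega_i^+,t_i-c_i\}\mid t_i\ge0]$, and symmetrically for $\omega_i^-$; hence $W_i:=w_i(t_i)$ is positive on $T_i^+$ and negative on $T_i^-$. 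Using this, $\omega_1^+$ enters $V$ only through the part of the expectation over $\{t_1\ge0\}$: for $t_1<0$, $W_1$ does not depend on $\omega_1^+$, agent~1 is never decisive ``in favor'', and agent~2 is decisive ``against'' — the only form of $2$'s decisiveness involving $W_1$ — only when $W_1>-\omega_2^->0$, \ie $t_1\ge0$. Symmetric statements hold for $\omega_1^-,\omega_2^+,\omega_2^-$. It therefore suffices to show that, for each weight and every feasible value of the other three, $V$ is nondecreasing up to the proposed value $\hat\omega$ of that weight and nonincreasing afterwards; since $V$ is continuous in $\omega$, chaining these (optimize coordinate~1, then~2,~3,~4, each time fixing the already-optimized coordinates) gives $V(\hat\omega)\ge V(\omega)$ for all $\omega$.

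\emph{The derivative in $\omega_1^+$.} Conditioning on agent~2's report, agent~1 faces a threshold $\theta:=-w_2(t_2)$, and integrating over $t_1$ first gives $V=\E_{t_2}[h_1(-w_2(t_2))]+\E_{t_1}[h_2(-w_1(t_1))]$, where $h_i(\theta)$ — agent~$i$'s expected decision value net of his own verification cost — is piecewise in $\theta$ with breakpoints at $\omega_i^+,\omega_i^-$:
\begin{align*}
  h_i(\theta)=\begin{cases}
    \int_{\theta+c_i}^{\infty}(t_i-c_i)f_i(t_i)\,dt_i, & \theta\ge\omega_i^+,\\[2pt]
    \int_{0}^{\infty}t_i f_i(t_i)\,dt_i, & \omega_i^-\le\theta<\omega_i^+,\\[2pt]
    \int_{\theta-c_i}^{\infty}t_i f_i(t_i)\,dt_i-c_iF_i(\theta-c_i), & \theta<\omega_i^-.
  \end{cases}
\end{align*}
In $\E_{t_1}[h_2(-w_1(t_1))]$ the only $\omega_1^+$-dependence is through the bunching block $t_1\in[0,\omega_1^++c_1]$, where $w_1=\omega_1^+$, and through the moving split point $t_1=\omega_1^++c_1$; the two boundary terms at that split point cancel ($h_2$ there equals $h_2(-\omega_1^+)$ from either side), leaving $-h_2'(-\omega_1^+)(F_1(\omega_1^++c_1)-F_1(0))$. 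In $\E_{t_2}[h_1(-w_2(t_2))]$ only the regime switch of $h_1$ at $\theta=\omega_1^+$ contributes; a Leibniz computation with endpoint $t_2=-\omega_1^+-c_2$ evaluates it to $f_2(-\omega_1^+-c_2)\big[\int_0^\infty t_1 f_1-\int_{\omega_1^++c_1}^\infty(t_1-c_1)f_1\big]$. Using $h_2'(-\omega_1^+)=\omega_1^+f_2(-\omega_1^+-c_2)$ on the active regime and recombining, everything collapses to
\begin{align*}
  \frac{\partial V}{\partial\omega_1^+}=f_2(-\omega_1^+-c_2)\,\psi_1(\omega_1^+),\qquad
  \psi_1(\omega_1^+):=\int_{0}^{\infty}\!\big(t_1-\max\{\omega_1^+,t_1-c_1\}\big)f_1(t_1)\,dt_1,
\end{align*}
for the main ordering of the breakpoints; the remaining orderings — in particular the range $\omega_1^+<-\omega_2^-$, where the default rule does not react to $\omega_1^+$ and the derivative is $0$, and the range $\omega_1^+<0$ — are checked one at a time, and in each $\partial V/\partial\omega_1^+$ turns out to be a \emph{nonnegative} multiple of $\psi_1(\omega_1^+)$. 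Hence $\sgn(\partial V/\partial\omega_1^+)\in\{0,\sgn\psi_1(\omega_1^+)\}$ for all $\omega_1^+$ and all feasible values of the other three weights.

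\emph{Conclusion and main obstacle.} The function $\psi_1$ depends only on $\omega_1^+$ (and $F_1,c_1$): it is continuous, nonincreasing (with $\psi_1'(\omega_1^+)=-(F_1(\omega_1^++c_1)-F_1(0))<0$ whenever $\omega_1^+>-c_1$), strictly positive for $\omega_1^+$ small and tending to $-\infty$, so it has a unique zero, which is positive. Multiplying the first defining equation in the proposition by $\Pr(t_1\ge0)$ shows that this zero is exactly $\hat\omega_1^+$. Combined with the sign statement above, $\partial V/\partial\omega_1^+\ge0$ for $\omega_1^+<\hat\omega_1^+$ and $\le0$ for $\omega_1^+>\hat\omega_1^+$, so (using continuity of $V$ in $\omega_1^+$) $\omega_1^+\mapsto V$ is single-peaked at $\hat\omega_1^+$. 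The mirror-image argument — condition on the other agent, replace ``$\max$'' by ``$\min$'', $T_i^+$ by $T_i^-$, $\omega_i^+$ by $\omega_i^-$, working with $\psi_i^-(\omega_i^-):=\int_{-\infty}^0(t_i-\min\{\omega_i^-,t_i+c_i\})f_i$, whose zero is given by the second defining equation — handles $\omega_1^-,\omega_2^+,\omega_2^-$, and the chaining from the reduction step concludes the proof. The hard part is the middle step: the two decisiveness indicators, the atom of $W_i$ at $\omega_i^\pm$, and the moving split points produce several boundary terms, and one must verify — in every ordering of the relevant thresholds and for $\omega_1^+$ of either sign — that these cancel so as to leave a nonnegative multiple of $\psi_1(\omega_1^+)$; the remaining ingredients (monotonicity of $\psi_1$, identifying its zero with the proposed weight, the chaining) are routine.
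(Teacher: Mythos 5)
Your first-order computation on the smooth part is right (it reproduces the heuristic FOC the paper states in the text: away from the atoms, $\partial V/\partial\omega_1^+$ is a nonnegative density times $\psi_1(\omega_1^+)$), but the load-bearing claims of the proof — that $V$ is continuous in the weights and that, \emph{for every} value of the other three weights, $V$ is single-peaked in $\omega_1^+$ with the peak at $\hat\omega_1^+$ — are false, and the chaining argument collapses with them. The distribution of $w_2(t_2)$ has atoms at $\omega_2^+$ and $\omega_2^-$ (the bunching regions have positive mass), so as $\omega_1^+$ crosses $-\omega_2^+$ or $-\omega_2^-$ the positive-probability event ``both agents bunched'' flips its decision and $V$ jumps by $\E\big[(t_1+t_2)\ind\{\text{both bunched}\}\big]$, a quantity whose sign is not controlled by $\psi_1$. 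Your derivative analysis only sees the absolutely continuous part of $V$ and misses these jumps entirely. Concretely: let $t_1,t_2$ be uniform on $[-1,1]$, $c_1=0.1$, $c_2=10$, and fix $(\omega_1^-,\omega_2^+,\omega_2^-)=(-0.5,-2,-3)$. Then $\hat\omega_1^+\approx 0.35$, and with $\omega_1^+=\hat\omega_1^+$ the policy is never implemented and $V=0$; with $\omega_1^+=2.5$ the policy is implemented exactly when both types are nonnegative, nobody is ever decisive, and $V=0.25$. So the coordinate-wise argmax is not at $\hat\omega_1^+$ when the opponent's weights are (feasibly but badly) chosen: the best $\omega_1^+$ genuinely depends on $\omega_2^\pm$, because inflating agent 1's default weight can counteract distorted defaults of agent 2. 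Hence ``optimize coordinate by coordinate, each peak independent of the rest'' is not a valid route to joint optimality. (Secondary, fixable issues: the feasible set $\{\omega_i^-\le\omega_i^+\}$ is not a box, so even the order of the chain needs care; and your branch formulas for $h_i$ are wrong in corner configurations such as $\theta+c_i<0$ or $\omega_i^->c_i$, though there the sign conclusions on the smooth part survive.)

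This is exactly the difficulty the paper's proof is built around. The paper does not differentiate; it fixes the candidate profile, considers an arbitrary alternative value of one weight (say $\omega_2^{-\prime}$), conditions on the events where the change matters, and splits into Case (a) ($\omega_1^+\le-\omega_2^{-\prime}$) and Case (b) ($\omega_1^+>-\omega_2^{-\prime}$) — Case (b) being precisely the regime where the two bunching atoms interact and your continuity/derivative argument breaks. There it signs the payoff difference globally by using \emph{both} defining equations (of $\omega_1^+$ and of $\omega_2^-$) as inequalities, i.e., it handles the cross-weight interaction that your proposal assumes away. To repair your approach you would have to either restrict the comparison set to weight profiles for which the jump terms can be signed (which again requires the joint, not coordinate-wise, structure), or replace the chaining by a global comparison of the kind the paper performs.
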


To gain some intuition for the result in Proposition~\ref{prop:opt_weights}, suppose $\omega_1^+> -\omega_2^-$ and consider slightly changing $\omega_1^+$. This only has an effect if $t_2+c_2=-\omega_1^+$, so we condition throughout on this event. If $\omega_1^+$ is slightly increased, then for any $t_1>0$ the project will be implemented and no one will be verified.
On the other hand, if $\omega_1^+$ is slightly decreased there are two cases: if $t_1-c_1+t_2+c_2\ge 0$ the project is implemented and  agent 1 is verified, otherwise the project is not implemented and agent 2 is verified.
We obtain that $\omega_1^+$ satisfies the first-order condition if 
\begin{align*}
\int_0^{\infty} t_1 +t_2 dF_1(t_1) =\int_0^{\infty} (t_1 +t_2-c_1)\mathbf{1}_{t_1-c_1+t_2+c_2\ge 0}(t_1) -c_2 \mathbf{1}_{t_1-c_1+t_2+c_2< 0}(t_2) dF_1(t_1).
\end{align*}
Using $t_2+c_2=-\omega_1^+$, this can be rewritten as
\begin{align*}
\int_0^{\infty} t_1 dF_1(t_1) =\int_0^{\infty} (t_1-c_1)\mathbf{1}_{t_1-c_1\ge \omega_1^+}(t_1) +\omega_1^+ \mathbf{1}_{t_1-c_1<\omega_1^+ }(t_2) dF_1(t_1),
\end{align*}
which yields the first condition in Proposition~\ref{prop:opt_weights}. An analgous argument heuristically explains  the second condition.

Given the characterization of the optimal weights in Proposition~\ref{prop:opt_weights}, we can study how a change in the cost parameter $c_i$ affects the optimal weights. Suppose that the cost of verifying agent $i$ increases. Then the optimal weight $\omega_i^+$ will increase in order for $\E[t_i|t_i\ge 0]$ to equal $\E[\max\{\omega_i^+,t_i-c_i\}|t_i\ge 0]$. Analogously, the increase in $c_i$ implies that $\omega_i^-$ decreases. We conclude that, as the cost of verifying an agent increases, the bunching region increases and the agent will be verified less often. Another possible comparative static result concerns a second-order stochastic dominance change. Suppose the expected value of agent $i$'s type $t_i$, conditional on him being in favor, increases. Then Proposition \ref{prop:opt_weights} implies that his optimal weight $\omega_i^+$ increases as well.

\section{BIC-EPIC equivalence}\label{sec_bicepic}

A voting-with-evidence mechanism is not only Bayesian incentive compatible, but it also  satisfies the stronger notion of ex-post incentive compatibility (see Remark \ref{rem_expost_cutoff}). This robustness of the voting-with-evidence mechanism is a desirable property of any mechanism that one  wish to use in real-life applications because optimal strategies are independent of beliefs and information structure. Reducing the number of assumptions about common knowledge and weakening the informational requirements places the theoretical analysis underpinning the design on firmer ground~(\citeasnoun{wilson1987} and~\citeasnoun{bergemann2005}).  

Because the optimal mechanism is ex-post incentive compatible we conclude that the principal cannot gain by weakening the incentive constraints. A natural question to ask is why the principal cannot save on verification costs by implementing the optimal mechanism in Bayesian equilibrium instead of ex-post equilibrium. We show that the answer lies in a general equivalence between Bayesian and ex-post incentive compatible mechanisms.  For \emph{every} BIC mechanism, there exists an ex-post incentive compatible mechanism that induces the same interim expected decision and verification rules; since the interim expected decision and verification rules determine the expected utility of the principal, this implies that an ex-post incentive compatible mechanism is optimal within the whole class of BIC mechanisms. 

  Recall that a mechanism $(d,a)$ is BIC if and only if, for all $i$ and $t_i$,
  \begin{align}
     \inf_{t_i' \in T_i^+} \E_{t_{-i},s}[d(t_i',t_{-i},s)] &\ge \E_{t_{-i},s}[d(t_i,t_{-i},s) [1-a_i(t_i,t_{-i},s)]]  \hspace{.5cm}\text{ and } \label{eq:bic1}\\
      \sup_{t_i' \in T_i^-} \E_{t_{-i},s}[d(t_i',t_{-i},s)] &\le \E_{t_{-i},s}[d(t_i,t_{-i},s) [1-a_i(t_i,t_{-i},s)] + a_i(t_i,t_{-i},s)].\label{eq:bic2}
  \end{align}
  
  Analogously, a mechanism $(d,a)$ is ex-post incentive compatible (EPIC) if and only if, for all $i, t_i$ and $t_{-i}$,
  \begin{align}
     \inf_{t_i' \in T_i^+} \E_{s}[d(t_i',t_{-i},s)] &\ge \E_{s}[d(t_i,t_{-i},s) [1-a_i(t_i,t_{-i},s)]]  \hspace{.5cm}\text{ and } \label{eq:epic1} \\
      \sup_{t_i' \in T_i^-} \E_{s}[d(t_i',t_{-i},s)] &\le \E_{s}[d(t_i,t_{-i},s) [1-a_i(t_i,t_{-i},s)] + a_i(t_i,t_{-i},s)]. \label{eq:epic2}
   \end{align} 

Not every BIC mechanism is EPIC. More important, not every decision rule that can be implemented in a Bayesian equilibrium can be implemented in an ex-post equilibrium with the same verification costs, as the following example illustrates. 

\newcommand{\mechaxis}{
  \draw[gray] (0,0) rectangle (-1,1);
  \draw[axis,->] (-1,0) -- (0.08,0) node[right] {$t_1$};
  \draw[axis,->] (-1,0) -- (-1,1.08) node[above] {$t_2$};
  \draw[dashed] (0,1/3) -- (-1,1/3);
  \draw[dashed] (0,2/3) -- (-1,2/3);
  \draw[dashed] (-1/3,0) -- (-1/3,1);
  \draw[dashed] (-2/3,0) -- (-2/3,1);
  \draw[thick,blue] (-.98,1.08) -- (0.2,1.08);
  \draw[thick,blue] (.08,1.2) -- (.08,0.02);

      \node[sm,blue] at (-1/6,1.18) {$0.5$};
      \node[sm,blue] at (-3/6,1.18) {$0.7$};
      \node[sm,blue] at (-5/6,1.18) {$0.2$};
      \node[sm,blue] at (.2,1/6) {$0.8$};
      \node[sm,blue] at (.2,3/6) {$0.2$};
      \node[sm,blue] at (.2,5/6) {$0.4$};
      }
\newcommand{\mechaxisnew}{
  \draw[gray] (0,0) rectangle (-1,1);
  \draw[axis,->] (-1,0) -- (0.08,0) node[right] {$t_1$};
  \draw[axis,->] (-1,0) -- (-1,1.08) node[above] {$t_2$};
  \draw[dashed] (0,1/3) -- (-1,1/3);
  \draw[dashed] (0,2/3) -- (-1,2/3);
  \draw[dashed] (-1/3,0) -- (-1/3,1);
  \draw[dashed] (-2/3,0) -- (-2/3,1);
  \draw[thick,blue] (.08,1.2) -- (.08,0.02);
      }

\begin{figure}\centering%
  \subfloat[Decision rule $d$ and its marginals in blue.]{%
    \label{fig:BICrule}%
    \begin{tikzpicture}[baseline,x=4.5cm,y=4.5cm]
    \mechaxis

      \fill[pattern=north west lines] (-2/3,1/3) rectangle (-3/3,2/3);
      \fill[pattern=north west lines] (-1/3,1/3) rectangle (-2/3,2/3);
      \fill[pattern=north west lines] (0,2/3) rectangle (-1/3,3/3);

      \node[sm] at (-1/6,5/6) {$0$};
      \node[sm] at (-3/6,5/6) {$1$};
      \node[sm] at (-5/6,5/6) {$0.2$};
      \node[sm] at (-1/6,3/6) {$0.5$};
      \node[sm] at (-3/6,3/6) {$0.1$};
      \node[sm] at (-5/6,3/6) {$0$};
      \node[sm] at (-1/6,1/6) {$1$};
      \node[sm] at (-3/6,1/6) {$1$};
      \node[sm] at (-5/6,1/6) {$0.4$};
    \end{tikzpicture}%
  } \hspace{3cm}
    \subfloat[Verification probabilities that are necessary for EPIC.]{%
    \label{fig:EPIC_ver}%
    \begin{tikzpicture}[baseline,x=4.5cm,y=4.5cm]
      \mechaxisnew

      \fill[pattern=north west lines] (-2/3,1/3) rectangle (-3/3,2/3);
      \fill[pattern=north west lines] (-1/3,1/3) rectangle (-2/3,2/3);
      \fill[pattern=north west lines] (0,1/3) rectangle (-1/3,2/3);

      \node[sm] at (-1/6,5/6) {$0$};
      \node[sm] at (-3/6,5/6) {$0.9$};
      \node[sm] at (-5/6,5/6) {$0.2$};
      \node[sm] at (-1/6,3/6) {$0.5$};
      \node[sm] at (-3/6,3/6) {$0$};
      \node[sm] at (-5/6,3/6) {$0$};
      \node[sm] at (-1/6,1/6) {$1$};
      \node[sm] at (-3/6,1/6) {$0.9$};
      \node[sm] at (-5/6,1/6) {$0.4$};
      \node[sm,blue] at (.2,1/6) {$\frac{2.3}{3}$};
      \node[sm,blue] at (.2,3/6) {$\frac{0.5}{3}$};
      \node[sm,blue] at (.2,5/6) {$\frac{1.1}{3}$};

    \end{tikzpicture}%
  }\qquad%
  \caption{Failure of a naive BIC-EPIC equivalence.}%
  \label{fig:BICEPIC}


\end{figure}
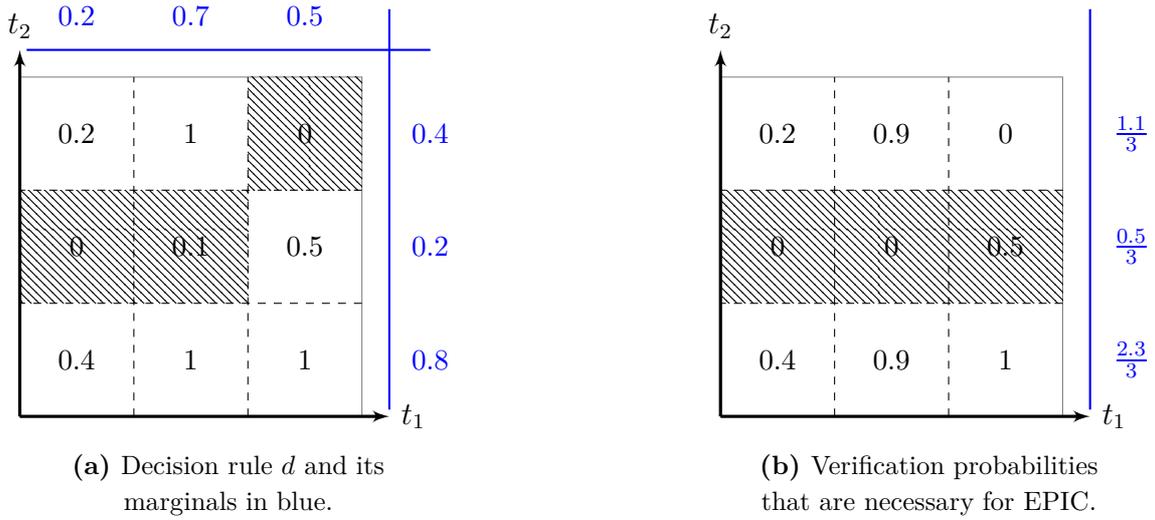
   \begin{example}\rm\label{ex_bic_epic_naive}
  Suppose that  $\I=\{1,2\}$ and that agent 2 is always in favor of the new policy. Each type profile is equally likely and the decision rule $d$ is shown in Figure \ref{fig:BICrule}. The shaded areas indicate type profiles that induce the lowest probabilities of accepting the new policy for agent 2. We focus on incentive constraints for agent 2.

  Lemma~\ref{lemma_bic} shows that it is sufficient to ensure incentive compatibility for the ``worst-off'' types, which are the intermediate types in this example. Since the intermediate types are the worst-off, they never need to be verified. If high (low) types are verified with probability 0.2 (0.6), then the Bayesian incentive constraints for the worst-off types are exactly binding.
  If we instead want to implement the decision rule $d$ in an ex-post equilibrium, the cost of verification increases. For example, intermediate types must be verified with probability 0.5 if agent 1's type is high. In expectation, agent 2 must be verified with probability $\frac{0.5}{3}$ if he has an intermediate type, with probability $\frac{1.1}{3}$ if he has a high type, and with probability $\frac{2.3}{3}$ if he has a low type (the verification probabilities for each profile of reports are given in Figure~\ref{fig:EPIC_ver}).
\end{example}

As Example~\ref{ex_bic_epic_naive} above illustrates, we cannot simply take a BIC mechanism, maintain the same decision rule, and expect that the mechanism will also be EPIC without increasing the verification costs. This is in line what should  be expected since for a mechanism to be EPIC, incentive constraints must hold pointwise and not only in expectation. 
The reason for this is that in general the left-hand side of \eqref{eq:bic1} is greater than the expected value of the left-hand side of \eqref{eq:epic1}; that is, $\inf_{t_i' \in T_i^+} \E_{t_{-i},s}[d(t_i',t_{-i},s)]$ is generally larger than $\E_{t_{-i}}\big[\inf_{t_i' \in T_i^+} \E_{s}[d(t_i',t_{-i},s)]\big]$. 
A decision rule can be implemented in ex-post equilibrium at the same costs as in Bayesian equilibrium if and only if the expectation operator commutes with the infimum/supremum operator, which is a strong requirement.
However, it turns out that for every function, there exists another function that induces the same marginals and for which the expectation operator commutes with the infimum/supremum operator. We will use this result to establish an equivalence between BIC and EPIC mechanisms.

\begin{theorem} \label{thm:commute}
Let $A = \bigtimes_i A_i \subseteq \mathbb{R}^I$, let $t_i$ be independently distributed with an absolutely continuous distribution function $F_i$, and let $g: A \rightarrow [0,1]$ be a measurable function. Then there exists a function $\hat{g}: A \rightarrow [0,1]$ with the same marginals, \iec, for all $i$, $\E_{t_{-i}}[g(\cdot,t_{-i})]  = \E_{t_{-i}}[\hat{g}(\cdot,t_{-i})] $ almost everywhere, such that for all $B \subseteq A_i$,
  \begin{align*}
    \inf_{t_i \in B}\E_{t_{-i}}[\hat{g}(t_i,t_{-i})] &= \E_{t_{-i}}[\inf_{t_i \in B}\hat{g}(t_i,t_{-i})] \text{ and }\\
    \sup_{t_i \in B}\E_{t_{-i}}[\hat{g}(t_i,t_{-i})] &= \E_{t_{-i}}[\sup_{t_i \in B}\hat{g}(t_i,t_{-i})].
  \end{align*}
\end{theorem}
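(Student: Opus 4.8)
The plan is to construct $\hat g$ by a monotone rearrangement of $g$ in each coordinate $t_i$, performed one coordinate at a time, and to exploit the fact that a function which is monotone in $t_i$ automatically makes the expectation over $t_{-i}$ commute with $\inf$/$\sup$ over $t_i$. The starting observation is this: suppose $h:A\to[0,1]$ is such that, for each fixed $t_{-i}$, the map $t_i\mapsto h(t_i,t_{-i})$ is nondecreasing (or nonincreasing) on $A_i$. Then for any $B\subseteq A_i$ with $\beta=\inf B$ we have $\inf_{t_i\in B}h(t_i,t_{-i})=h(\beta,t_{-i})$ pointwise in $t_{-i}$ (up to the usual care when the infimum is not attained, handled by a limit along a decreasing sequence in $B$ using monotone/dominated convergence), and likewise $\inf_{t_i\in B}\E_{t_{-i}}[h(t_i,t_{-i})]=\E_{t_{-i}}[h(\beta,t_{-i})]$ since the interim expectation inherits monotonicity in $t_i$; hence the two sides agree. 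The symmetric statement handles $\sup$. So it suffices to produce a $\hat g$ with the same interim marginals that is monotone in each coordinate separately.

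Next I would construct the rearrangement. Fix coordinate $1$. For each $t_{-1}$, let $\phi_{t_{-1}}:A_1\to[0,1]$ be the nondecreasing rearrangement of $g(\cdot,t_{-1})$ with respect to the distribution $F_1$, i.e.\ the (generalized) quantile function of the pushforward of $F_1$ under $g(\cdot,t_{-1})$, composed with $F_1$; concretely, $\phi_{t_{-1}}(t_1)$ is the unique nondecreasing function of $t_1$ that is equimeasurable with $g(\cdot,t_{-1})$ under $F_1$. By equimeasurability, $\E_{t_1}[\phi_{t_{-1}}(t_1)]=\E_{t_1}[g(t_1,t_{-1})]$ for every $t_{-1}$, so taking expectations over $t_{-1}$ shows the coordinate-$1$ interim marginal of $g^{(1)}(t_1,t_{-1}):=\phi_{t_{-1}}(t_1)$ equals that of $g$. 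Measurability of $(t_1,t_{-1})\mapsto g^{(1)}(t_1,t_{-1})$ follows from a Fubini/monotone-class argument (the quantile construction is jointly measurable because $F_1$ is absolutely continuous, so one can write $\phi_{t_{-1}}$ via $t_1\mapsto \sup\{y:\, F_1(\{g(\cdot,t_{-1})\le y\})\le F_1(t_1)\}$, which is measurable in both arguments). Crucially, the rearrangement in coordinate $1$ does not change the interim marginal in any other coordinate $j\ne 1$: by Fubini, $\E_{t_{-j}}[g^{(1)}(t_j,t_{-j})]=\E_{t_1}\big[\E_{t_{-1,-j}}[\phi_{(t_1,t_{-1,-j})}(t_j)]\big]$, and because rearranging in $t_1$ preserves the $F_1$-integral of any measurable function of $t_1$ (for each fixed value of the remaining variables), this equals $\E_{t_{-j}}[g(t_j,t_{-j})]$ for a.e.\ $t_j$. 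Then I would iterate: apply the coordinate-$2$ rearrangement to $g^{(1)}$ to get $g^{(2)}$, and so on, obtaining $\hat g:=g^{(I)}$ after $I$ steps. Each step makes the function monotone in the coordinate just treated while preserving monotonicity already established in earlier coordinates (rearrangement of a function that is monotone in $t_1$ along the $t_2$-direction leaves it monotone in $t_1$, since the rearrangement is a composition with a monotone map determined by level sets, and here a short argument is needed), and preserves all interim marginals. The resulting $\hat g$ is nondecreasing in each coordinate separately, so by the first paragraph it satisfies the commuting identities for every $B\subseteq A_i$.

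The main obstacle I anticipate is the joint-measurability and monotonicity-preservation bookkeeping across the iterated rearrangements: one must check (i) that $(t_i,t_{-i})\mapsto g^{(k)}(t_i,t_{-i})$ is genuinely jointly measurable at every stage so that the Fubini manipulations are legitimate, and (ii) that rearranging in a new coordinate does not destroy monotonicity in the previously-rearranged coordinates — this is the delicate point, since it is not automatic that monotone rearrangement in $t_2$ of a function that happens to be monotone in $t_1$ stays monotone in $t_1$; the right way to see it is that if $h$ is nondecreasing in $t_1$, then for $t_1\le t_1'$ the level sets satisfy $\{h(t_1',\cdot,t_{-1,-2})\le y\}\subseteq\{h(t_1,\cdot,t_{-1,-2})\le y\}$ as subsets of $A_2$, so the quantile functions are ordered and hence the rearranged function stays nondecreasing in $t_1$. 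The absolute continuity of each $F_i$ is used to guarantee that the quantile/rearrangement construction is well defined and jointly measurable without atoms causing ambiguity, and also to pass from "equimeasurable" to "equal interim expectations" cleanly. The rest — the limiting argument when $\inf B$ or $\sup B$ is not attained, and the reduction of the commuting identity to coordinatewise monotonicity — is routine.
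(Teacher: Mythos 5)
Your reduction in the first paragraph (coordinatewise monotone $\Rightarrow$ the expectation commutes with $\inf$/$\sup$ over any $B$, via a sequence decreasing to $\inf B$ and dominated convergence) is fine, as is the observation that rearranging in a new coordinate preserves monotonicity already obtained in earlier ones. But the construction itself has a fatal gap: the slicewise monotone rearrangement in coordinate $1$ does \emph{not} preserve the coordinate-$1$ marginal. Equimeasurability gives $\E_{t_1}[\phi_{t_{-1}}(t_1)]=\E_{t_1}[g(t_1,t_{-1})]$ for each fixed $t_{-1}$, which preserves the marginals in every coordinate $j\neq 1$; ``taking expectations over $t_{-1}$'' of that identity only preserves the grand mean, not the function $t_1\mapsto\E_{t_{-1}}[g(t_1,t_{-1})]$, which is what the theorem requires. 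A concrete counterexample: on $[0,1]^2$ with uniform marginals let $g(t_1,t_2)=\mathbf{1}[t_1\le \tfrac12,\,t_2\le\tfrac12]+\mathbf{1}[t_1>\tfrac12,\,t_2>\tfrac12]$. Its coordinate-$1$ marginal is identically $\tfrac12$, but rearranging each slice $g(\cdot,t_2)$ to be nondecreasing in $t_1$ yields $g^{(1)}(t_1,t_2)=\mathbf{1}[t_1>\tfrac12]$, whose coordinate-$1$ marginal is $\mathbf{1}[t_1>\tfrac12]\neq\tfrac12$. Iterating over coordinates does not repair this: each pass destroys the marginal of the coordinate just rearranged.

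This missing piece is exactly the substance of the theorem: one must monotonize while preserving \emph{all} coordinate marginals simultaneously, and no sequence of independent one-dimensional quantile rearrangements achieves that. The paper's proof supplies this ingredient from outside: it discretizes each $A_i$ into cells of equal $F_i$-probability, averages $g$ over cells to obtain a tensor, invokes Theorem 6 of \citeasnoun{gutmann91} (which produces a $[0,1]$-valued tensor with the \emph{same} row/column sums in every dimension that is monotone in each coordinate), and then passes to a weak$^*$ limit of the resulting step functions, checking that the marginals and the commuting identities survive the limit. Your proposal would need a substitute for the Gutmann--Kemperman--Reeds--Shepp step (or a direct simultaneous-rearrangement argument of comparable depth); without it the claimed $\hat g$ simply does not have the same marginals as $g$.
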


We will illustrate the idea behind the proof of Theorem \ref{thm:commute} by assuming that $A$ is finite. The argument in our proof uses Theorem 6 in \citeasnoun{gutmann91}. This theorem shows that for any matrix with elements between 0 and 1 and with increasing row and column sums, there exists another matrix consisting of elements between 0 and 1 with the same row and column sums, and  whose elements are increasing in each row and column. To use this result, we reorder $A$ such that the marginals of $g$ are weakly increasing. Then, Theorem 6 in \citeasnoun{gutmann91} implies that there exists a function $\hat{g}$ that induces the same marginals and is pointwise increasing. For this function, there is an argument $t_i$ for each $i$ that independent of $t_{-i}$ minimizes $\hat{g}(\cdot, t_{-i})$. This implies that the expectation operator commutes with the infimum operator, i.e.,  $\E_{t_{-i}}[ \inf_{t_i \in A} \hat{g}(t_i,t_{-i})] = \inf_{t_i \in A} \E_{t_{-i}}[\hat{g}(t_i,t_{-i})]$. 
This basic idea sketched above is extended via an approximation argument to a complete proof in Appendix~\ref{proof_bicepic}.

\medskip
 
Building on Theorem \ref{thm:commute}, we can establish an equivalence between BIC and EPIC mechanisms.
To define this equivalence formally, we call $\E_{t_{-i}}[d(t_i,t_{-i})]$ the \textit{interim decision rule} and $\E_{t_{-i}}[a_i(t_i,t_{-i})]$ the \textit{interim verification rules} of a mechanism $(d,a)$.

  \begin{definition}\rm
    Two mechanisms $(d,a)$ and $(\hat{d},\hat{a})$ are \emph{equivalent} if they induce the same interim decision and verification rules almost everywhere.
  \end{definition}
  
Now we can state the equivalence between BIC and EPIC mechanisms. 

  \begin{theorem}
    For any BIC mechanism $(d,a)$, there exists an equivalent EPIC mechanism $(\hat{d}, \hat{a})$. \label{prop:bic-epic}
  \end{theorem}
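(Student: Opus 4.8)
The plan is to pass to reduced‑form objects, invoke Theorem~\ref{thm:commute} once on the decision rule, and then rebuild a compatible verification rule by hand. For a mechanism $(d,a)$ write $D(t):=\E_s[d(t,s)]\in[0,1]$, $A_i(t):=\E_s[a_i(t,s)]$, and split the latter into the per‑profile masses $b_i^+(t):=\E_s[d(t,s)a_i(t,s)]$ (``audited and policy implemented'') and $b_i^-(t):=\E_s[(1-d(t,s))a_i(t,s)]$ (``audited and status quo''), so $A_i=b_i^++b_i^-$; let $\bar D_i(t_i):=\E_{t_{-i}}[D(t_i,\cdot)]$, $\bar A_i(t_i):=\E_{t_{-i}}[A_i(t_i,\cdot)]$. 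First I would apply Theorem~\ref{thm:commute} to $g=D$ to obtain $\hat D:T\to[0,1]$ with the same marginals, $\E_{t_{-i}}[\hat D(t_i,\cdot)]=\bar D_i(t_i)$ a.e., for which $\inf_{t_i'\in B}$ and $\sup_{t_i'\in B}$ commute with $\E_{t_{-i}}$ for every $B\subseteq T_i$; implement it via $\hat d(t,s):=\1[s<\hat D(t)]$. Set $\phi_i^+(t_{-i}):=\inf_{t_i'\in T_i^+}\hat D(t_i',t_{-i})$ and $\phi_i^-(t_{-i}):=\sup_{t_i'\in T_i^-}\hat D(t_i',t_{-i})$. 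By the EPIC characterization \eqref{eq:epic1}--\eqref{eq:epic2}, a pair $(\hat d,\hat a)$ is EPIC exactly when, at every profile $t$, $\E_s[\hat d(1-\hat a_i)](t)\le\phi_i^+(t_{-i})$ and $\E_s[\hat d(1-\hat a_i)+\hat a_i](t)\ge\phi_i^-(t_{-i})$; writing $\hat b_i^\pm$ for the masses above but under $(\hat d,\hat a)$, this is equivalent to $\hat b_i^+(t)\ge(\hat D(t)-\phi_i^+(t_{-i}))^+$ and $\hat b_i^-(t)\ge(\phi_i^-(t_{-i})-\hat D(t))^+$.

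So the problem reduces to choosing measurable $\hat b_i^+(t)\in[(\hat D(t)-\phi_i^+(t_{-i}))^+,\hat D(t)]$ and $\hat b_i^-(t)\in[(\phi_i^-(t_{-i})-\hat D(t))^+,1-\hat D(t)]$ with $\E_{t_{-i}}[\hat b_i^+(t_i,\cdot)+\hat b_i^-(t_i,\cdot)]=\bar A_i(t_i)$ for a.e.\ $t_i$. The admissible range of the per‑profile audit total $\hat A_i(t):=\hat b_i^+(t)+\hat b_i^-(t)$ is exactly $[L_i(t),1]$, where $L_i(t):=(\hat D(t)-\phi_i^+(t_{-i}))^++(\phi_i^-(t_{-i})-\hat D(t))^+$, so a valid choice exists iff $\E_{t_{-i}}[L_i(t_i,\cdot)]\le\bar A_i(t_i)\le1$. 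The crux is that $\E_{t_{-i}}[L_i(t_i,\cdot)]$ equals precisely the interim audit bound that BIC already imposes on $(d,a)$: using $(\hat D(t_i,t_{-i})-\phi_i^+(t_{-i}))^+=\hat D(t_i,t_{-i})-\inf_{t_i'\in\{t_i\}\cup T_i^+}\hat D(t_i',t_{-i})$ together with the commute property for the set $\{t_i\}\cup T_i^+$ gives $\E_{t_{-i}}[(\hat D(t_i,\cdot)-\phi_i^+)^+]=(\bar D_i(t_i)-\inf_{T_i^+}\bar D_i)^+$, and symmetrically $\E_{t_{-i}}[(\phi_i^--\hat D(t_i,\cdot))^+]=(\sup_{T_i^-}\bar D_i-\bar D_i(t_i))^+$. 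On the other hand Lemma~\ref{lemma_bic} applied to $(d,a)$ yields, in interim form, $b_i^+(t_i)\ge(\bar D_i(t_i)-\inf_{T_i^+}\bar D_i)^+$ and $b_i^-(t_i)\ge(\sup_{T_i^-}\bar D_i-\bar D_i(t_i))^+$, hence $\bar A_i(t_i)=b_i^+(t_i)+b_i^-(t_i)\ge\E_{t_{-i}}[L_i(t_i,\cdot)]$, as needed.

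It then remains to make the choice explicit. A one‑parameter ``water‑filling'' assignment $\hat A_i(t)=L_i(t)+\lambda_i(t_i)\bigl(1-L_i(t)\bigr)$ with $\lambda_i(t_i):=\bigl(\bar A_i(t_i)-\E_{t_{-i}}[L_i(t_i,\cdot)]\bigr)/\bigl(1-\E_{t_{-i}}[L_i(t_i,\cdot)]\bigr)\in[0,1]$ (set $\hat A_i\equiv1$ on the null event where the denominator vanishes) hits the interim target exactly; split $\hat A_i(t)$ into $\hat b_i^+,\hat b_i^-$ inside their intervals, and realize for each $t$ the induced joint law of $(\hat d,\hat a_1,\dots,\hat a_I)$ on $s\sim U[0,1]$ — e.g.\ put $\hat d=1$ on $s\in[0,\hat D(t))$ and $\hat a_i=1$ on $s\in[0,\hat b_i^+(t))\cup[\hat D(t),\hat D(t)+\hat b_i^-(t))$; only each agent's own marginal $(\hat d,\hat a_i)$ matters for incentives and cost, so cross‑correlations among the $\hat a_i$ are immaterial. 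The resulting mechanism $(\hat d,\hat a)$ satisfies \eqref{eq:epic1}--\eqref{eq:epic2} by the displayed bounds on $\hat b_i^\pm$, and is equivalent to $(d,a)$ since $\hat D$ preserves the interim decision rule and $\hat A_i$ the interim verification rule.

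The main obstacle is the bookkeeping in the second paragraph: fixing the decomposition into $b_i^\pm$, and recognizing that the pointwise EPIC floor $L_i$ integrates — via the commute identity for $\{t_i\}\cup T_i^+$ and $\{t_i\}\cup T_i^-$ — to exactly the interim quantity that Lemma~\ref{lemma_bic} already controls for $(d,a)$; once that identity is in place the rest is a selection argument. A secondary technical point is measurability and the ``almost everywhere'' qualifiers: one must check $t_{-i}\mapsto\phi_i^\pm(t_{-i})$ is measurable (it is, since Theorem~\ref{thm:commute} delivers $\hat D$ monotone in $t_i$ after a reordering, so the infimum/supremum over $T_i^\pm$ reduces to one over a countable set) and that the a.e.\ agreement of $\bar D_i$ with $\E_{t_{-i}}[\hat D(\cdot,\cdot)]$ does not corrupt the infima/suprema over $T_i^\pm$. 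The cleanest route, mirroring the proof of Theorem~\ref{thm_opt}, is to establish the statement first for finite type spaces — where Theorem~\ref{thm:commute} gives literal equality of marginals and all extrema are attained, so none of these issues arise — and then pass to general type spaces by the same approximation argument.
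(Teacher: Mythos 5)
Your proposal is correct and follows essentially the same route as the paper: apply Theorem~\ref{thm:commute} to the decision rule to obtain $\hat{d}$ with the same marginals and commuting extrema, then construct $\hat{a}$ from the pointwise-minimal verification needed for EPIC and pad it up to match the interim verification rule, whose feasibility is exactly what Lemma~\ref{lemma_bic} applied to the original BIC mechanism guarantees. Your bookkeeping via $b_i^{\pm}$, the positive-part identity obtained by applying the commute property to the sets $\{t_i\}\cup T_i^{+}$ and $\{t_i\}\cup T_i^{-}$, and the water-filling step are a more explicit (and somewhat more careful about nonnegativity of the audit probabilities) rendering of the paper's construction, which instead makes \eqref{eq:epic1}--\eqref{eq:epic2} bind and then adds verifications to equalize the interim audit rates.
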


There are two steps in the construction of an equivalent EPIC mechanism $(\hat{d},\hat{a})$. In the first step, we use Theorem \ref{thm:commute} to obtain a decision rule $\hat{d}$ with the same interim decisions as $d$ and such that for $\hat{d}$ the expectation operator commutes with the infimum/supremum. This implies that the left-hand sides of \eqref{eq:bic1} and \eqref{eq:bic2}, respectively  are equal to the expected values of the left-hand sides of \eqref{eq:epic1} and \eqref{eq:epic2}, respectively. In the second step, we construct a verification rule $\hat{a}$ such that all incentive constraints hold as equalities for $(\hat{d},\hat{a})$. By potentially adding some verification, we obtain a verification rule $\hat{a}$ with the same interim verification rule as $a$. Thus, we have constructed an equivalent EPIC mechanism $(\hat{d}, \hat{a})$ from the BIC mechanism $(d,a)$. 

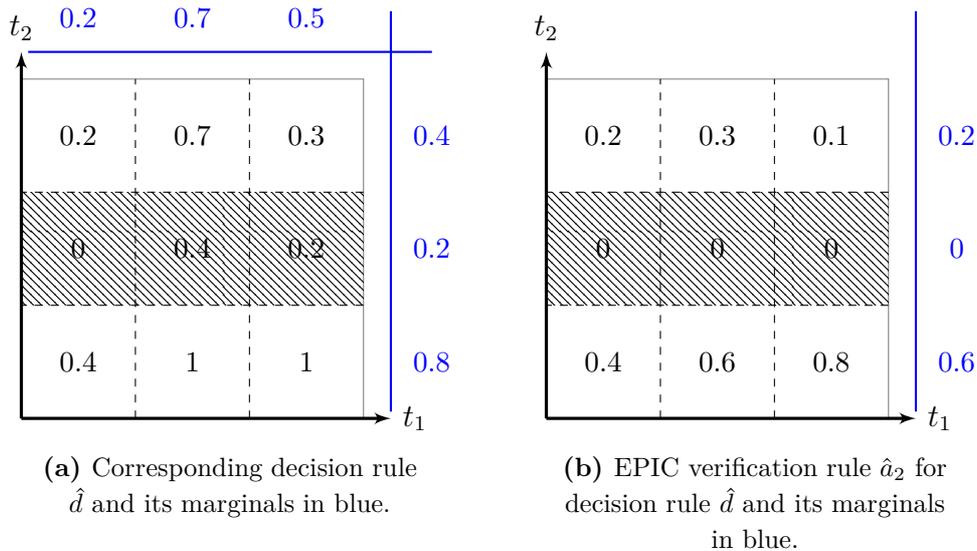
\begin{figure}\centering
    \subfloat[Corresponding decision rule $\hat{d}$ and its marginals in blue.]{%
    \label{fig:EPIC_decision_rule}%
    \begin{tikzpicture}[baseline,x=4.5cm,y=4.5cm]
      \mechaxis

      \fill[pattern=north west lines] (-2/3,1/3) rectangle (-3/3,2/3);
      \fill[pattern=north west lines] (-1/3,1/3) rectangle (-2/3,2/3);
      \fill[pattern=north west lines] (0,1/3) rectangle (-1/3,2/3);

      \node[sm] at (-1/6,5/6) {$0.3$};
      \node[sm] at (-3/6,5/6) {$0.7$};
      \node[sm] at (-5/6,5/6) {$0.2$};
      \node[sm] at (-1/6,3/6) {$0.2$};
      \node[sm] at (-3/6,3/6) {$0.4$};
      \node[sm] at (-5/6,3/6) {$0$};
      \node[sm] at (-1/6,1/6) {$1$};
      \node[sm] at (-3/6,1/6) {$1$};
      \node[sm] at (-5/6,1/6) {$0.4$};

    \end{tikzpicture}%
  }\qquad%
    \subfloat[EPIC verification rule $\hat{a}_2$ for decision rule $\hat{d}$ and its marginals in blue.]{%
    \label{fig:EPICrule}%
    \begin{tikzpicture}[baseline,x=4.5cm,y=4.5cm]
      \mechaxisnew

      \fill[pattern=north west lines] (-2/3,1/3) rectangle (-3/3,2/3);
      \fill[pattern=north west lines] (-1/3,1/3) rectangle (-2/3,2/3);
      \fill[pattern=north west lines] (0,1/3) rectangle (-1/3,2/3);

      \node[sm] at (-1/6,5/6) {$0.1$};
      \node[sm] at (-3/6,5/6) {$0.3$};
      \node[sm] at (-5/6,5/6) {$0.2$};
      \node[sm] at (-1/6,3/6) {$0$};
      \node[sm] at (-3/6,3/6) {$0$};
      \node[sm] at (-5/6,3/6) {$0$};
      \node[sm] at (-1/6,1/6) {$0.8$};
      \node[sm] at (-3/6,1/6) {$0.6$};
      \node[sm] at (-5/6,1/6) {$0.4$};
      \node[sm,blue] at (.2,1/6) {0.6};
      \node[sm,blue] at (.2,3/6) {0};
      \node[sm,blue] at (.2,5/6) {0.2};
    \end{tikzpicture}%
  }\qquad%
  \caption{Illustration of the BIC-EPIC equivalence.}%
  \label{fig:EPIC_decision_rule}
\end{figure}
  \addtocounter{example}{-1}
  \begin{example}\rm[ctd]
    Figure \ref{fig:EPICrule} depicts the decision rule $\hat{d}$, which has the same marginals as $d$.  Note that intermediate types of agent 2 always induce the lowest probability of accepting the proposal, independent of the type of agent 1. This implies that the expected value of the infimum equals the infimum of the expected value, that is, 
    \[\inf_{t_2} \E_{t_1}[\hat{d}(t)] = \E_{t_1}[\inf_{t_2} \hat{d}(t)].\]
    Figure \ref{fig:EPICrule} shows a verification rule $\hat{a}$ such that $(\hat{d},\hat{a})$ is EPIC. The expected verification probabilities are the same as those necessary for implementation in Bayesian equilibrium.
  \end{example}

The economic mechanisms behind our equivalence are different from those underlying the BIC-DIC equivalence in a standard social choice setting with transfers (with linear utilities and one-dimensional, private types~\cite{gershkov13}). In the standard setting, an allocation rule can be implemented with appropriate transfers in Bayesian equilibrium if and only if its marginals are increasing and in dominant strategies if and only if it is pointwise increasing. In contrast, monotonicity is neither necessary nor sufficient for implementability in our model. 

Note that there is no equivalence between Bayesian and dominant-strategy incentive compatible mechanisms in our setting, as the following example illustrates. The lack of private goods to punish agents if there are multiple deviators implies that agents care whether the other agents are truthful. 

  \begin{example}\rm\label{ex:not_bic_dic}
    Suppose that  $\mathcal{I}=\{1,2,3\}$, verification costs are 0 for each agent, and $T_i^+ = \{t_i | t_i \ge 0\}$ and $T_i^- = \{t_i | t_i < 0\}$. Consider the voting-with-evidence mechanism with cutoffs $\omega_i^+ +c_i=1$ and $\omega_i^- -c_i=-1$ for all $i$. Let $ t = (-5,2,2)$. Given truthful reporting, the voting-with-evidence mechanism   specifies that $d(t)=0$. Suppose that agent 2 deviates from truth-telling and instead reports being of type $t_2'=6$. Now he is decisive, and the principal verifies him. After observing the true types $(-5,2,2)$, the principal has to punish the lie by agent 2 and maintain the status quo to induce truthful reporting. However, this creates an incentive for agent 3 to misreport. He could report $t_3'= 6$, and then no agent is decisive; hence, no one is verified, and the voting-with-evidence mechanism specifies  that $d(t_1,t_2',t_3')=1$. The voting-with-evidence mechanism is therefore not dominant-strategy incentive compatible, no matter how we specify the mechanism off-equilibrium.
  \end{example}

The equivalence between Bayesian and ex-post incentive compatible mechanisms can be established in other models without money but with verification. We believe that the tools we used in this paper can prove useful in similar settings with verification. In fact, we can use arguments paralleling those used to prove Theorem \ref{thm:commute} (but using Theorem 1 in \citeasnoun{gershkov13} instead of the result by \citeasnoun{gutmann91}) to show that there is an equivalence of Bayesian and dominant-strategy incentive compatible mechanisms in BDL.

\section{Imperfect Verification and  Robustness}\label{sec_rob_ext}

In this section we  discuss the robustness of our results from various angles. In the first part, we relax the assumption of perfect verification. In the second part, we discuss briefly type-dependent costs of verification,  interdependent preferences, a continuous decision on the level of the public good, and limited commitment.

\subsection{Imperfect verification} \label{sec:imperfect}  
Thus far, we have assumed that the verification technology works perfectly, that is, whenever the principal audits an agent, he will learn the true type with probability one. We now explore the extent to which the above results are robust to imperfect verification. We will study a reduced form model and assume that in the event of an audit of agent $i$, the verification technology reveals the true type of agent $i$ only with probability $p$, and with probability $1-p$, the technology fails, in which case the output of the technology equals the report by the agent. Consequently, if the verification output differs from the reported type the principal knows that the agent lied. However, if the output of the verification technology coincides with the reported type the principal only knows that the agent was truthful or that the verification technology failed, but not which of these two cases applies. Moreover, we assume that multiple verifications of the same agent reveal no additional information.

To find the optimal mechanism we first characterize Bayesian incentive compatibility in this new setting with imperfect verification. Similar to the case of perfect verification the key incentive constraints are those for the worst-off types. The additional uncertainty of whether the verification technology managed to detect a lie implies that the worst-off type must get a higher expected probability of getting the preferred alternative. 

\begin{lemma}\label{lemma_bic_impp}
A mechanism $(d,a)$ is Bayesian incentive compatible if and only if, for all $i\in \I$ and all $t_i\in T_i$,
\begin{align*}
     \inf_{t_i'\in T_i^+} \E_{t_{-i},s}[d(t_i',t_{-i},s)] &\ge \E_{t_{-i},s}[d(t_i,t_{-i},s) [1-p\cdot a_i(t_i,t_{-i},s)]]  \hspace{.5cm} \\
      \sup_{t_i'\in T_i^-} \E_{t_{-i},s}[d(t_i',t_{-i},s)] &\le \E_{t_{-i},s}[d(t_i,t_{-i},s) [1-p\cdot a_i(t_i,t_{-i},s)] + p \cdot a_i(t_i,t_{-i},s)].
\end{align*}
\end{lemma}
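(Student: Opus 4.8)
The plan is to mimic the proof of Lemma~\ref{lemma_bic}, tracking carefully how the imperfect verification probability $p$ enters the deviation payoff. Fix an agent $i$ and consider first the case $t_i' \in T_i^+$, so $u_i(t_i') > 0$ and we may set the penalty $\ell_i(\cdot,t_i',\cdot)=0$ off path. The key is to compute the interim expected utility when a type $t_i' \in T_i^+$ reports $t_i \in T_i$. Under report $t_i$, the decision is taken according to $d(t_i,t_{-i},s)$ unless the agent is audited (probability $a_i(t_i,t_{-i},s)$) \emph{and} the technology succeeds (probability $p$), in which case the lie is detected and the status-quo penalty applies; if the agent is audited but the technology fails (probability $1-p$), the output coincides with the report and the principal cannot distinguish this from truth-telling, so the decision $d(t_i,t_{-i},s)$ is still implemented. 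Hence the effective probability that $t_i' \in T_i^+$ obtains the new policy by reporting $t_i$ is
\[
\E_{t_{-i},s}\big[d(t_i,t_{-i},s)\,[\,(1-a_i(t_i,t_{-i},s)) + a_i(t_i,t_{-i},s)(1-p)\,]\big]
= \E_{t_{-i},s}\big[d(t_i,t_{-i},s)\,[\,1 - p\,a_i(t_i,t_{-i},s)\,]\big].
\]
Since $u_i(t_i')>0$, the no-deviation condition for $t_i'$ is exactly that $\E_{t_{-i},s}[d(t_i',t_{-i},s)]$ weakly exceeds this quantity, for every target report $t_i$; requiring it for all $t_i' \in T_i^+$ and taking the infimum over $T_i^+$ on the left yields the first displayed inequality.

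The case $t_i' \in T_i^-$ is symmetric: now $u_i(t_i')<0$, we set $\ell_i(\cdot,t_i',\cdot)=1$ off path, and the agent prefers the status quo, so we rewrite the no-deviation condition in terms of the probability of \emph{not} getting the new policy, equivalently we bound $\E_{t_{-i},s}[d(t_i',t_{-i},s)]$ from above. When a type $t_i'\in T_i^-$ reports $t_i$, the new policy is implemented with probability $d(t_i,t_{-i},s)$ if there is no audit or the audit fails, and—because the penalty for a detected lie by a type in $T_i^-$ is to implement the new policy—also with probability $1$ on the event that the audit succeeds and $d$ would not already have implemented it; more simply, the new-policy probability is $d(t_i,t_{-i},s)(1-p\,a_i) + p\,a_i(t_i,t_{-i},s)$. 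The no-profitable-deviation requirement for all $t_i'\in T_i^-$ and the sup over $T_i^-$ then gives the second displayed inequality. The converse direction is immediate by reversing these steps, exactly as in Lemma~\ref{lemma_bic}.

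The main thing to get right—and the only place the argument is not purely mechanical—is the bookkeeping on the audit-failure event: one must argue that when the technology fails, the principal's information set is genuinely the same as under truthful reporting (the output equals the report, and by assumption repeated audits are uninformative), so no additional punishment can be conditioned on a failed audit, and hence the decision defaults to $d(t_i,t_{-i},s)$. Given that, setting $p=1$ recovers Lemma~\ref{lemma_bic} verbatim, which is a useful consistency check. I do not anticipate any real obstacle beyond stating this information-set argument cleanly.
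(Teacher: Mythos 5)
Your proof is correct and follows essentially the same route as the paper, which simply states that the proof is analogous to that of Lemma~\ref{lemma_bic}; your explicit bookkeeping of the detection probability $p$ (decision $d(t_i,t_{-i},s)$ on the audit-failure event, worst-case penalties $\ell_i=0$ for $T_i^+$ and $\ell_i=1$ for $T_i^-$ on detection) is exactly the intended adaptation, and the $p=1$ consistency check confirms it.
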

\begin{proof}
The proof is analogous to the proof of Lemma \ref{lemma_bic}.
\end{proof}
\noindent
The imperfectness of the verification technology implies  that it is harder to satisfy the incentive constraints. Moreover, there is an upper bound on how much influence an agent can have in expectation. Since, by feasibility $a_i(t,s)\le 1$ and using  Lemma \ref{lemma_bic_impp} we get that any Bayesian incentive compatible mechanism satisfies
\begin{align}
\forall t_i \in T_i^+: \ \E_{t_{-i},s}[d(t_i,t_{-i},s)] &\le \frac{1}{1-p} \inf_{t_i'\in T_i^+} \E_{t_{-i},s}[d(t_i',t_{-i},s)]  \label{eq:bic_imp1}\\
\forall t_i \in T_i^-: \ \E_{t_{-i},s}[d(t_i,t_{-i},s)] &\ge \frac{1}{1-p} \left[\sup_{t_i'\in T_i^-} \E_{t_{-i},s}[d(t_i',t_{-i},s)] - p\right]. \label{eq:bic_imp2}
\end{align}
This adds an additional constraint to the relaxed problem that essentially restricts the maximal influence an agent could have on the decision rule in any incentive compatible mechanism. The higher the probability of failure $1-p$ of the verification technology the tighter the bound is and the less influence an agent can have. 

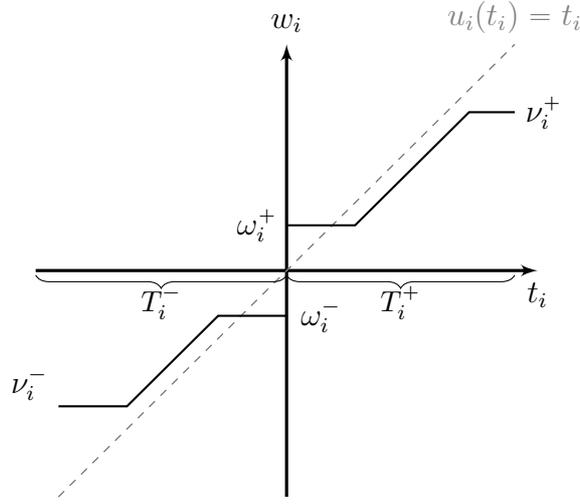
\begin{figure}[htbp!]%
\begin{center}
  \begin{tikzpicture}[decoration={brace},baseline,x=3cm,y=3cm]
      \draw[axis,->] (-1.1,0) -- (1.1,0) node[below] {$t_i$};
      \draw[axis,->] (0,-1) -- (0,1) node[above] {$w_i$};
       \draw[dashed,gray] (-1,-1) -- (1,1) node[above]{$u_i(t_i)=t_i$};
      \draw[dashed,gray] (-1,-1) -- (1,1);
      \draw[thick] (-1,-.6) -- (-.7,-.6) -- (-.3,-.2) -- (0,-.2);
      \draw[thick]  (0,.2) -- (.3,.2) -- (.8,.7) -- (1,.7);
      \draw (-1,-.5) node[left] {$\nu_i^{-}$};
      \draw (1,.7) node[right] {$\nu_i^{+}$};
      \draw (.15,-.2) node {$\omega_i^-$};
      \draw (-.13,.2) node {$\omega_i^+$};
      \draw[decorate,decoration={amplitude=5pt,mirror}] (0,-.02) -- (1,-.02) node[midway,below] {\small $T_i^+$};
      \draw[decorate,decoration={amplitude=5pt}] (0,-.02) -- (-1.1,-.02) node[midway,below] {\small $T_i^-$};
  \end{tikzpicture}
\end{center}
\caption{Example illustrating weights for imperfect verification with utility $u_i(t_i)= t_i$.}%
\label{fig:imp}%
\end{figure}

\begin{theorem}\label{thm:imp}
With imperfect verification as described above, an optimal mechanism sets $d(t) = 1$ if and only if  $\sum_i w_i(t_i) > 0$, where 
\begin{align*}
  w_i(t_i) = \begin{cases}
    \omega_i^{+} \hspace{.3cm} &\text{ if } t_i \in T_i^+ \text{ and }  t_i \le \omega_i^{+} + \frac{c_i}{p} \\
    \omega_i^{-} &\text{ if } t_i \in T_i^- \text{ and } t_i \ge \omega_i^{-}-\frac{c_i}{p} \\
    t_i - \frac{c_i}{p} &\text{ if } t_i \in T_i^+ \text{ and } \nu_i^+ > t_i- \frac{c_i}{p}   > \omega_i^{+}\\
    t_i + \frac{c_i}{p} &\text{ if } t_i \in T_i^- \text{ and } \nu_i^- < t_i +\frac{c_i}{p}  < \omega_i^{-} \\
    \nu_i^+  &\text{ if } t_i \in T_i^+ \text{ and } t_i \ge \nu_i^+ +\frac{c_i}{p}   \\
    \nu_i^-  &\text{ if } t_i \in T_i^- \text{ and } t_i \le \nu_i^- - \frac{c_i}{p} 
  \end{cases}
\end{align*}
for some constants $\{\omega_i^+,\omega_i^-,\nu_i^+,\nu_i^-\}$ satisfying $\omega_i^-\le \omega_i^+$.
\end{theorem}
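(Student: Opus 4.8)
\textbf{Proof strategy for Theorem \ref{thm:imp}.} The plan is to follow the same three-step template that underlies Theorem \ref{thm_opt}: (i) relax the problem by replacing the incentive constraints with the worst-off-type constraints from Lemma \ref{lemma_bic_impp}, together with the derived influence bounds \eqref{eq:bic_imp1}--\eqref{eq:bic_imp2}; (ii) solve the relaxed problem pointwise to obtain the candidate decision rule and show it has the claimed threshold form; and (iii) verify that the candidate is implementable at the relaxed cost, i.e., that the relaxed solution is in fact BIC and achieves the relaxed objective. As before, I would first do everything for finite type spaces and then pass to the general case by the same approximation argument used for Theorem \ref{thm_opt}.

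\textbf{Setting up the relaxed problem.} Fix the interim ``worst-off'' values $\underline{d}_i^+ := \inf_{t_i'\in T_i^+}\Ei[d(t_i',\ti)]$ and $\underline{d}_i^- := \sup_{t_i'\in T_i^-}\Ei[d(t_i',\ti)]$ as parameters. By Lemma \ref{lemma_bic_impp}, for a type $t_i\in T_i^+$ the cheapest way to sustain an interim decision $\Ei[d(t_i,\ti)]$ is to set the verification probability so that the first inequality binds; this costs $c_i\,\Ei[a_i(t_i,\ti)] = \frac{c_i}{p}\bigl(\Ei[d(t_i,\ti)] - \underline{d}_i^+\bigr)^+$ in expectation, and the influence bound \eqref{eq:bic_imp1} caps $\Ei[d(t_i,\ti)]$ at $\frac{1}{1-p}\underline{d}_i^+$. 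Symmetrically for $t_i\in T_i^-$. Substituting these into the principal's objective, the contribution of agent $i$ at a profile where he is ``pivotal enough'' to be verified becomes $t_i - \tfrac{c_i}{p}$ (in favor) or $t_i+\tfrac{c_i}{p}$ (against), which is exactly why the net-type adjustment is $c_i/p$ rather than $c_i$; the cap $\nu_i^{\pm}$ is the Lagrange multiplier shadow value at which the influence constraint \eqref{eq:bic_imp1}--\eqref{eq:bic_imp2} starts to bind. I would then argue, profile by profile, that the optimal decision implements the new policy iff $\sum_i w_i(t_i)>0$ with $w_i$ of the stated form: below the lower cutoff the type is bunched at $\omega_i^{\pm}$ (so cheap to sustain, no verification), in the middle region the effective weight is the net type $t_i\mp c_i/p$, and once $\Ei[d]$ would hit the influence ceiling the weight is capped at $\nu_i^{\pm}$. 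The constants $\omega_i^{\pm},\nu_i^{\pm}$ are pinned down by the requirement that $\underline{d}_i^{\pm}$ be consistent (a fixed-point/market-clearing condition analogous to Proposition \ref{prop:opt_weights}), and $\omega_i^-\le\omega_i^+$ follows from $t_i^-<t_i^+$ exactly as in the perfect-verification case.

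\textbf{Implementation and the main obstacle.} For step (iii), I would define the verification rule by verifying a decisive agent with the probability that makes his worst-off constraint in Lemma \ref{lemma_bic_impp} bind, and check ex-post incentive compatibility by the same pivotality argument as in Remark \ref{rem_expost_cutoff} — the only new wrinkle is that a deviation is now \emph{detected} only with probability $p$, which is precisely what the factor $c_i/p$ and the caps $\nu_i^{\pm}$ are calibrated to absorb. The main obstacle I anticipate is step (ii): unlike the perfect-verification problem, the relaxed program here has the extra inequality constraints \eqref{eq:bic_imp1}--\eqref{eq:bic_imp2}, so one must show that the pointwise optimum is still a clean threshold rule rather than something that randomizes or ``spreads mass'' to exploit the ceiling. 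Handling this requires a careful ironing/bang-bang argument showing the ceiling is hit in a connected top region of each $T_i^{\pm}$ (giving the constant weight $\nu_i^{\pm}$ there) and that the Lagrangian for the decision is linear in $d(t)$ at each profile, so $d$ is determined by the sign of $\sum_i w_i(t_i)$; I would also need to rule out ties mattering on a positive-measure set, which the absolute continuity of $F_i$ secures. Once the decision rule is shown optimal for the relaxed problem and implementable at the relaxed cost, optimality for the original problem is immediate since the relaxation only enlarged the feasible set.
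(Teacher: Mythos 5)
Your steps (i) and (ii) track the paper's argument closely: the paper forms the same relaxed problem (net types adjusted by $c_i/p$, plus the influence bounds \eqref{eq:bic_imp1}--\eqref{eq:bic_imp2}), fixes the worst-off values, and handles the ceiling not by a separate ironing argument but by running the Karush--Kuhn--Tucker theorem with a second family of multipliers for the upper bounds and then repeating the averaging/majorization (Schur-convexity) step of Lemma \ref{prop:opt} on both the bottom regions $A_i^{\pm}$ and the top regions $B_i^{\pm}$; this is where the constants $\nu_i^{\pm}$ come from. That is essentially the "bang-bang plus connected top bunching region" structure you anticipate, so step (ii) is a matter of execution rather than a gap.

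Step (iii), however, contains a genuine error. You propose to verify only \emph{decisive} agents (with a probability chosen to make the constraint in Lemma \ref{lemma_bic_impp} bind) and to check ex-post incentive compatibility by the pivotality argument of Remark \ref{rem_expost_cutoff}. With $p<1$ this cannot work, and the paper explicitly points this out: a worst-off type who overstates his claim is verified only when decisive, in which case being caught merely reverts the decision to what truth-telling would have produced, while with probability $1-p$ the lie goes undetected and he gains. So the deviation is profitable no matter how the decisive reports are audited; quantitatively, binding the constraint requires $p\,\E_{t_{-i}}[a_i(t_i,t_{-i})d(t_i,t_{-i})] \ge \E_{t_{-i}}[d(t_i,t_{-i})]-\inf_{t_i'\in T_i^+}\E_{t_{-i}}[d(t_i',t_{-i})]$, whereas restricting audits to decisive events caps the left-hand side by $p$ times the right-hand side. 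Consequently the optimal mechanism with imperfect verification is \emph{not} ex-post incentive compatible, and agents must sometimes be verified when they are not decisive. The paper's construction instead defines an interim verification probability $q_i(t_i)$, applied whenever agent $i$ obtains his preferred decision (decisive or not), chosen so that the Bayesian constraints of Lemma \ref{lemma_bic_impp} bind; existence of $q_i(t_i)\in[0,1]$ follows from the intermediate value theorem precisely because the decision rule satisfies \eqref{eq:bic_imp1}--\eqref{eq:bic_imp2}, and since verification occurs only when the preferred outcome obtains, the relaxed-objective inequalities hold as equalities and $V_{\tilde{P}}(d^*,a^*)=V_{\tilde{R}}(d^*)$. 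You would need to replace your implementation step by such a construction (and drop the EPIC claim) for the proof to go through.
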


Compared to the optimal mechanism in the benchmark model with perfect verification, the optimal mechanism with imperfect verification can feature additional bunching regions at the extremes (see Figure \ref{fig:imp}). The reason is that any incentive compatible mechanism must restrict the maximal weight of an agent compared to the worst-off types. If a worst-off type could misreport his type and thereby increase the probability of getting his preferred outcome by too much, this misreport would be a profitable deviation even if this agent was always verified, simply because the verification technology sometimes fails to detect the lie. Therefore, any incentive compatible mechanism must cap the maximal weight an agent could get, inducing bunching for the extreme types.\footnote{This is reminiscent of the optimal mechanism in \citeasnoun{mylovanov17}, who study the optimal allocation of a prize when the winner is subject to a limited penalty if he makes a false claim. In their model, the limit on the penalty similarly requires that agents with the highest possible type are merely short-listed and will not win the prize with certainty.} 

In contrast to the optimal mechanism with perfect verification it is not enough to only verify decisive agents. Clearly, one should only verify an agent that gets his preferred outcome, but to induce truth-telling as a Bayes-Nash equilibrium one sometimes needs to verify agents who are not decisive. Suppose only  decisive agents were verified. Clearly, agents with worst-off types would have an incentive to overstate their types because this could never hurt them (they are only verified if they are decisive, in which case the penalty is the outcome they would have obtained under truth-telling), and benefits them whenever the verification technology fails to detect their lie. Thus, agents sometimes need to be verified even so they are not decisive; by doing this sufficiently often we can ensure that the mechanism is BIC. There are several ex-post auditing rules that can make the optimal mechanism BIC, and we have not specified exactly which agents are going to be verified for a given realization of reports. We establish the existence of a feasible auditing rule in Lemma~\ref{lemma:imp_feas}.
This reasoning also implies that the optimal mechanism is not ex-post incentive compatible if the verification technology is imperfect.

\begin{remark}\rm
The additional bunching regions are the main qualitative difference of the optimal decision rule compared to the model with perfect verification. However, in many settings this difference will not even arise: if an optimal decision rule $d$ (as described in Theorem \ref{thm:imp}) satisfies, for each $i$,
\begin{align*}
(1-p) \sup_{t_i\in T_i^+} \E_{\ti}d(t_i,\ti) &< \inf_{t_i\in T_i^+} \E_{\ti}d(t_i,\ti) \text{ and }\\
(1-p) \inf_{t_i\in T_i^-} \E_{\ti}d(t_i,\ti) &> \sup_{t_i\in T_i^-} \E_{\ti}d(t_i,\ti),
\end{align*}
then $\nu_i^+ = \infty$ and $\nu_i^-=-\infty$ (see the proof of Lemma \ref{lemma:imperfect_opt}). Therefore, the weight function looks qualitatively as in the case of perfect verification. For example, if $p> \frac{1}{2}$ then the above conditions are always satisfied for a symmetric mechanism ($d$ is symmetric around 0) in a symmetric environment ($f_i(t_i)=f_i(-t_i)$ and $T_i^+=-T_i^-$), because $\inf_{t_i\in T_i^+} \E_{\ti}d(t_i,\ti)\ge \frac{1}{2}$ and $\sup_{t_i\in T_i^-} \E_{\ti}d(t_i,\ti)\le \frac{1}{2}$. 
\end{remark}

\subsection{Robustness}\label{sec_robust}
In the remainder we are going to keep the assumption of perfect verification and change some of our other assumptions to inquire which features of our analysis are robust.
 
\paragraph{Type-dependent cost function}
In our benchmark model we assume that the cost of verifying an agent only depends on the agent's identity and not on his true type. Alternatively, one could argue that it's more expensive to audit an agent that claims to have a large type and provides extensive documentation substantiating his claim. 
Similarly, one could argue that it is easier, and therefore cheaper, to verify an agent with a low type. Here, we explain how our conclusions are altered if we allow for the audit cost to depend on  the true type.  Let $c_i(t_i)$ denote the audit cost for verifying agent $i$ if his true type is $t_i$. 

We observe first that the revelation principle still applies, and we can restrict attention to Bayesian incentive compatible direct mechanisms. Also, this change only affects the principal's utility and we can therefore use the characterization of Bayesian incentive compatibility as before. On the equilibrium path, the principal will only verify agents that are truthful, so the cost of verification is $c_i(t_i)$.
To simplify the discussion, we assume that the net type $t_i-c_i(t_i)$ is increasing in $t_i$.\footnote{If the net type $t_i-c_i(t_i)$ is not increasing similar arguments can be applied after the types have been reordered.}  Using the same arguments as in our benchmark model we can conclude that the optimal mechanism uses a weighting rule as in a voting-with-evidence mechanism, except that the weight of a report outside of the bunching region is now $t_i-c_i(t_i)$ instead of $t_i-c_i$ (respectively, $t_i+c_i(t_i)$). The part of the weighting function outside the bunching region is therefore no longer a straight line with a slope of one but a potentially nonlinear increasing function instead. Other than that the optimal mechanism is like a voting-with-evidence mechanism: the project is implemented if the sum of the weighted reports is positive and an agent is verified if and only if he is decisive.

\paragraph{Choosing the level of the public good} 
In our benchmark model we assume the principal takes a binary action by deciding whether or not to implement a public project. We relax this assumption here and analyze a principal who decides on the quantity $d\in [0,1]$ of a public good and assume that the principal pays a cost of $C(d)$ for providing the public good at level $d$. Assume that the cost function $C:[0,1]\rightarrow \mathbb{R}$ is continuously differentiable, increasing, convex, and satisfies $C'(0)=0$ and $\lim_{d\rightarrow 1}C(d)=+ \infty$.  All agents have preferences of the form  $u(t_i,d)= t_i\cdot d$  and $t_i\in[0,1]$, i.e.,  agents always prefers more of the public good to less. The objective function of the principal is 
\begin{align}\label{eq:mult}
 \E_t\big[\sum_i [d(t)t_i  - a_i(t)c_i] -C(d(t))\big]
\end{align}
since he incurs the additional costs $C(d)$ of providing the public good.

We begin the discussion with the simplest setting of having only one agent. 
It follows again from Lemma 1 that any incentive compatible mechanism satisfies $ \inf_{t'} d(t') \ge d(t)(1-a(t)) $ and, since audits are costly, it is optimal to choose the verification rule such that this holds as an equality. Plugging this into the objective function, we get that the principal maximizes $\E_t\big[d(t)t  -(1-\frac{\inf_{t'} d(t') }{d(t)}) c -C(d(t))\big]$.\footnote{Observe that, for an optimal decision rule, $\inf_{t'} d(t')>0$. Suppose instead $\inf_{t'} d(t')=0$. Given $\varepsilon>0$, let $\delta(\varepsilon)=\Prob(\{t|d(t)\le \varepsilon\})$. If there is $\varepsilon > 0$ such that $\delta(\varepsilon)<1$, we can change the decision rule such that $\inf_{t'}d(t')=\varepsilon$ by only changing the decision for types in $\{t|d(t)<\varepsilon\}$. This change will increase the cost of public good provision by at most $\delta(\varepsilon) \varepsilon C'(\varepsilon)$, but will decrease the cost of verifications by at least $(1- \delta(\varepsilon)) \varepsilon c$. Therefore, for $\varepsilon$ small enough this increases the principal's expected payoff. On the other hand, if $\delta(\varepsilon)=1$ for all $\varepsilon>0$ then $d(t)=0$ for almost every $t$. Changing the decision rule to $d(t)=\varepsilon$  increases the cost of public good provision by at most $\varepsilon C'(\varepsilon)$ and increases the expected welfare of the principal by $\varepsilon \E[t]$. Since $C$ is continuously differentiable and $C'(0)=0$ we can therefore choose $\varepsilon$ small enough such that this change increases the principal's expected welfare. We conclude that in any 
optimal mechanism $\inf_{t'} d(t')>0$.}

The optimal decision rule $d$ must therefore satisfy, for almost every $t$ such that $d(t)>\inf_{t'} d(t')$, the following first-order condition:
\begin{align}\label{eq:foc} 
 t- c  \frac{\inf_{t'} d(t') }{d^2(t)} - C'(d(t))=0.
 \end{align} 
Therefore, as before there is a bunching region, and outside the bunching region we have downward-distortions, i.e., too little public good is provided and this distortion is increasing in the verification cost. 
Note that in contrast to the previous analysis, where the quantity was either $0$ or $1$, optimal audits are now stochastic (as they satisfy $a(t)=1-\frac{\inf_{t'} d(t') }{d(t)}$).
Intuition suggests that these conclusions for one agent carry over to the case with multiple agents if we impose ex-post incentive compatibility instead of Bayesian incentive compatibility. 

Let us now look briefly at the case with several agents and Bayesian incentive constraints. The characterization of Bayesian incentive compatibility in Lemma~\ref{lemma_bic} continues to hold in this setting. Although incentive constraints remain tractable, solving the principal's problem turns out to be less tractable.
The principal's optimization problem is to maximize \eqref{eq:mult} subject to
\begin{align}\label{eq:bicmult}
\E_{t_{-i}} [d(t_i,t_{-i})]\ge \inf_{t_i'}\E_{t_{-i}} [d(t_i',t_{-i})] \ge \E_{t_{-i}} \big[d(t_i,t_{-i})(1-a_i(t_i,t_{-i}))\big].
\end{align}
Consider first how to construct the optimal audit rule for a given decision rule $d$. Again, the optimal audit rule will satisfy the second inequality in \eqref{eq:bicmult} as an equality. To achieve this in the most cost efficient way, we set, for each $i$ and $t_i$,  $a_i(t_i,t_{-i})=1$ for those $t_{-i}$ such that $d(t_i,t_{-i})$ is largest until the second inequality in \eqref{eq:bicmult} binds. Thus, the optimal verification rule is deterministic. This is in contrast to the analysis above for the case of one agent.
It also implies that there is no simple way to compute the verification costs necessary to implement a given decision rule and we cannot formulate the problem in a simple way with the decision rule being the only choice variable. Because of this, a complete analysis of the optimal decision rule in this case is beyond the scope of our paper.

\paragraph{Interdependent preferences}
Independent private values allow for a simple characterization of incentive compatibility: a mechanism is Bayesian incentive compatible if and only if it is Bayesian incentive compatible for the worst-off types. This observation does not carry over to models with interdependent preferences. While a complete analysis of this case is beyond the scope of this paper, we discuss below the incentives to misreport in a voting-with-evidence mechanism and possible improvements of this mechanism when preferences are interdependent. To fix ideas, for each $i\in \I$, suppose $T_i=[-1,1]$ and agent $i$'s utility is given by $u_i(t)= t_i +\alpha\sum_{j\neq i} t_j$ if the policy is implemented and the type profile is $t$, where $\alpha$ satisfies $0<\alpha< 1$. For our discussion below consider a fixed voting-with-evidence mechanism.

If the level of interdependence, $\alpha$, is high enough then incentive constraints in the voting-with-evidence mechanism are not binding.
Recall that in our benchmark model reports are verified exactly to make worst-off types indifferent between lying and being truthful.
If preferences are sufficiently interdependent, an agent with a small positive type might not want to deviate and send a large report even without verifications: his utility is mainly determined by other agents' types, and claiming a high type might lead to implementation of the new policy although all others have negative types. This implies that one can reduce the verification probability of large reports without creating any incentives to misreport. However, one cannot lower the verification probability
all the way to zero, as otherwise intermediate types will have an incentive to send high reports.
Which incentive constraints will be binding in the optimal mechanism therefore depends on the details of preferences and type distributions. This implies that it is difficult to find the  optimal mechanism. But the arguments so far suggest that one way to improve upon a voting-with-evidence mechanism might be to reduce the verification probabilities for high reports, at least if $\alpha$, the degree of preference interdependence, is sufficiently large. 

For moderate degrees of interdependence, $\alpha$, all types above a threshold will prefer the new policy no matter what the types of all others are since they are only moderately affected by others' types. For these types, incentives are as in our benchmark model since these types will send a report to maximize the probability that the new policy is implemented.
Furthermore, for small enough $\alpha$ this is even true for some types in the bunching region of the voting-with-evidence mechanism. Since the worst-off types were used to determine the verification probabilities, we cannot reduce the verification probabilities of voting-with-evidence at all for small degrees of interdependence.

This suggests that there are only limited ways to improve upon voting-with-evidence if $\alpha$ is small. One particularly simple way to improve upon a voting-with-evidence mechanism is to allow agents to abstain in this mechanism. Consider a setting where $F_i$ is symmetric around $0$ for each $i$ and adjust the given voting-with-evidence mechanism by allowing for abstention and giving abstentions a weight of 0. Now, an agent with a positive type close enough to 0 strictly prefers to abstain instead of casting a vote in favor, which would give weight $\omega_i^+$. This allows for more information being transmitted to the principal without
adding verification costs and this mechanism can therefore increase the principal's expected utility compared to a voting-with-evidence mechanism. 

\paragraph{Limited commitment}
Following the standard approach in mechanism design, we assume the principal commits to a mechanism. 
There are several ways in which our optimal mechanism uses commitment of the principal and our results would change if the principal could not commit. Most importantly, the principal commits to costly verifications although in equilibrium he will never find an agent lying. Secondly, as explained above, the decision rule is not the first-best for the principal since he distorts the decision by bunching agents and by using net types. This is similar to the use of commitment in standard mechanism design, where principal's often commit to ex-post inefficient outcomes. Thirdly, in our model the principal commits to penalize an agent that is found lying. Note however that it is not necessary to use this third component to commit to unreasonable penalties. Suppose in a voting-with-evidence mechanism agent $i$ deviates and reports $t_i'$ although his true type is $t_i$. This will only be relevant if agent $i$'s report changes the outcome to the more preferred one for him, which implies that agent $i$'s report is decisive. In this case, his report is audited and the penalty for agent $i$ will be to do the opposite of what agent $i$ prefers. This coincides with the decision if agent $i$ would have been truthful and reported $t_i$ in the first place because agent $i$ is decisive. In this sense it is not necessary to use commitment to carry out unreasonable penalties.

The fact that commitment matters is typical for models of costly verification, and contrasts with some models of evidence which show that commitment is not necessary (see, e.g., \citeasnoun{ben-porath17}). One reason for the difference is that, with costly verification, the principal will anticipate the verification costs induced by a given decision rule and deviate from the first-best rule in order to reduce these costs. This effect is not present in models with evidence that have no verification costs.

\newpage
\part*{}
\begin{appendix}
\section{Appendix}

\subsection{Revelation principle}\label{sec_rev_principle}

In this section of the Appendix we show that it is without loss of generality to restrict attention to the class of direct mechanisms as we define them in Section~\ref{sec_model}. Similar versions of the revelation principle have been obtained in \citeasnoun{townsend88} and \citeasnoun{ben-porath14}. We will proceed in two steps. The first step is a revelation principle argument where we establish that any indirect mechanism can be implemented via a direct mechanism. In the second step we show that direct mechanisms can be expressed as a tuple  $(d,a,\ell)$, where $d$ specifies the decision, $a_i$ specifies if agent $i$ is verified, and $\ell_i$ specifies what happens if agent $i$ is revealed to be lying. 

\medskip

\emph{Step 1: It is without loss of generality to  restrict attention to direct mechanisms in which truth-telling is a Bayes-Nash equilibrium.}

Let $(M_1,...,M_I, \tilde{x},\tilde{y})$ be an indirect mechanism, and  $M = \bigtimes_{i \in \mathcal{I}} M_i$, where each $M_i$ denotes the message space for agent $i$, $\tilde{x}: M \times T \times [0,1] \rightarrow \{0,1\}$ is the decision function specifying whether the policy is implemented, and $\tilde{y}: M \times T \times \mathcal{I} \times [0,1] \rightarrow \{0,1\}$ is the verification function specifying whether an agent is verified.\footnote{To describe possibly stochastic mechanisms we introduce a random variable $s$ that is uniformly distributed on $[0,1]$ and only observed by the principal. This random variable is one way to correlate the verification and the decision on the policy.} Fix a Bayes-Nash equilibrium $\sigma$ of the game induced by the indirect mechanism.\footnote{In the game induced by the indirect mechanism, whenever the principal verifies agent $i$ nature draws a type $\tilde{t}_i \in T_i$ as the outcome of the verification. Perfect verification implies that $\tilde{t}_i$ equals the true type of agent $i$ with probability 1. The strategies $m_i \in M_i$ specify an action for each information set where agent $i$ takes an action, even if this information set is never reached with strictly positive probability. In particular, they specify actions for information sets in which the outcome of the verification does not agree with the true type.} 

In the corresponding direct mechanism, let $T_i$ be the message space for agent $i$. Define $x: T \times T \times [0,1] \rightarrow \{0,1\}$ as $x(t',t,s) = \tilde{x}( \sigma(t'),t,s )$ and  $y: T \times T \times \mathcal{I} \times [0,1] \rightarrow \{0,1\}$ as $y(t',t,i,s) = \tilde{y}( \sigma(t'),t,i,s )$. Since $\sigma$ is a Bayes-Nash equilibrium in the original game, truth-telling is a Bayes-Nash equilibrium in the game induced by the direct mechanism. This implies that in both equilibria the same decision is taken and the same agents are verified.

Note that in any feasible direct mechanism the decision whether or not to verify an agent cannot depend on his true type, hence $y(t'_i, t_{-i},t'_i,t_{-i},i,s)=y(t'_i, t_{-i},t,i,s)$. Also, if agent $i$ was not verified, the implementation decision cannot depend on his true type, $x(t,t,s) = x(t,t_i',t_{-i},s)$.

\medskip

\emph{Step 2: Any direct mechanism can be written as a tuple $(d,a,\ell)$, where \\
$d:T\times [0,1] \rightarrow \{0,1\}$, $a_i:T\times [0,1] \rightarrow \{0,1\}$, and $\ell_i:T \times T_i \times [0,1] \rightarrow \{0,1\}$.}

Let
\begin{align*}
  d(t,s) &= x(t,t,s)\\
  a_i(t,s) &= y(t,t,i,s) \text{ and}\\
  \ell_i(t_i',t_{-i},t_i,s) &= x(t_i',t_{-i},t_i,t_{-i},s).
\end{align*}

On the equilibrium path $(d,a,\ell)$ implements the same outcome as $(x,y)$ by definition. Suppose instead agent $i$ of type $t_i$ reports $t_i'$ and all other agents report $t_{-i}$ truthfully. Denoting $t'=(t_i',t_{-i})$, the decision taken in the mechanism $(d,a, \ell)$ if the type profile is $t$ and the report profile is $t'$ is
\begin{align*}
  &[1-a_i(t',s)] d(t',s) + a_i(t',s) \ \ell_i(t_i',t_i,t_{-i},s)\\
  =\ &[1-y(t',t',i,s)] x(t',t',s) + y(t',t',i,s)\ x(t',t,s)\\
  =&\begin{cases}
    x(t',t,s) \text{ if } y(t',t',i,s) = 1 \\
    x(t',t',s) \text{ if } y(t',t',i,s) = 0,
  \end{cases}
\end{align*}

If $y(t',t',i,s) = 1$, the decision is $x(t',t,s)$ under both formulations. Instead, if $y(t',t',i,s) = 0$ then $y(t',t,i,s) = 0$ (since the decision to verify agent $i$ cannot depend on his true type), and hence the decision on the policy must coincide with the case when agent $i$ is verified and reports $t'_i$,  $x(t',t',s) = x(t',t,s)$  We conclude that the decision is the same in both formulations of the mechanism if one agent deviates. Since truth-telling is an equilibrium in the mechanism $(x,y)$, it is therefore an equilibrium in the mechanism $(d,a,\ell)$, which consequently implements the same decision and verification rules.

\subsection{Proof of Theorem \ref{thm_opt}\label{sec_proofmainresult}}
  
In this section of the Appendix  we show that a voting-with-evidence mechanism maximizes the expected utility of the principal. The first step in the  proof of  Theorem \ref{thm_opt}  is to construct a relaxed problem for the principal where the optimization is only over decision rules, compared to maximizing jointly of decision and verification rules in the original problem. The solution to the relaxed problem always yields weakly higher value than the solution to the original optimization problem (Lemma \ref{relax}).  In the second step we show that the solution to the relaxed problem is a voting-with evidence mechanism: first we establish this for finite type spaces (Lemma \ref{prop:opt}) and then extend the result to infinite type spaces (Lemma \ref{prop:opt_inf}). To finish the proof we construct verification rules such that the solution to the relaxed problem is feasible for the original problem and achieves the same objective value. This proves Theorem \ref{thm_opt}.

\renewcommand{\c}{\tilde{c}}
We will show that the problem below is a relaxed version of the principal's  maximization problem as defined in~(\ref{P}):
\begin{align*} \label{r}
  \underset{0\leq d \leq 1} \max  \E_{t}  \Big[\textstyle \sum\limits_id(t)[t_i-\c_i(t_i)] + c_i\Big(\mathbbm{1}_{T_i^+}(t_i) \underset{t_i' \in T_i^+} \inf \E_{t_{-i}}[d(t_i',t_{-i})] - \mathbbm{1}_{T_i^-}(t_i) \underset{t_i' \in T_i^-}\sup \E_{t_{-i}}[d(t_i',t_{-i})]\Big)\Big] \tag{R}
\end{align*}
where $\mathbbm{1}_{T_i^+}(t_i)$ denotes the indicator function for $T_i^+$, $\mathbbm{1}_{T_i^-}(t_i)$ the indicator function for $T_i^-$, and  $\c_i(t_i)=c_i$ if $t_i\in T_i^+$ and  $\c_i(t_i)=-c_i$ if $t_i\in T_i^-$.  

For each mechanism $(d,a)$  let $V_P(d,a)$ denote value of the objective in problem~(\ref{P}), and for each decision rule $d$ let $V_R(d)$ denote the objective value  in problem \eqref{r}.

\begin{lemma}\label{relax}
  For any Bayesian incentive compatible mechanism $(d,a)$, $V_P(d,a) \le V_R(d)$.
\end{lemma}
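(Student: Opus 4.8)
The plan is to show that the objective value of the original problem \eqref{P} at any BIC mechanism $(d,a)$ is at most the value of the relaxed objective \eqref{r} evaluated at the \emph{same} decision rule $d$. The key idea is that the verification term $-a_i(t)c_i$ in \eqref{P} is ``expensive'' for the principal, and the BIC constraints from Lemma \ref{lemma_bic} give a lower bound on how much verification is needed, which translates into an upper bound on the principal's payoff.

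\medskip

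\textbf{Step 1: Rewrite the principal's objective in terms of $d$ and $a$.} Start from
\[
V_P(d,a) = \E_t\Big[\sum_i \big(d(t)t_i - a_i(t)c_i\big)\Big],
\]
and split the sum over agents and, for each agent, over $t_i \in T_i^+$ and $t_i \in T_i^-$. For a fixed agent $i$, I would compute $\E_t[a_i(t)c_i]$ by first taking the expectation over $t_{-i}$ conditional on $t_i$; using independence this is $c_i \,\E_{t_i}\big[\E_{t_{-i}}[a_i(t_i,t_{-i})]\big]$.

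\medskip

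\textbf{Step 2: Bound the interim verification probability using Lemma \ref{lemma_bic}.} For $t_i \in T_i^+$, the first inequality of Lemma \ref{lemma_bic} rearranges (after absorbing the correlation device into the interim expectations, which is legitimate since $a_i(t_i,t_{-i},s)\le 1$ and $d(t_i,t_{-i},s)\in\{0,1\}$, so $d(1-a)\ge d - a$... more carefully $\E_{t_{-i},s}[d(t_i,t_{-i},s)a_i(t_i,t_{-i},s)] \ge \E_{t_{-i},s}[d(t_i,t_{-i},s)] - \inf_{t_i'\in T_i^+}\E_{t_{-i},s}[d(t_i',t_{-i},s)]$, using $d[1-a] = d - da$ and the BIC inequality. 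Hence
\[
\E_{t_{-i},s}[a_i(t_i,t_{-i},s)] \ge \E_{t_{-i},s}[d(t_i,t_{-i},s)a_i(t_i,t_{-i},s)] \ge \E_{t_{-i},s}[d(t_i,t_{-i},s)] - \inf_{t_i'\in T_i^+}\E_{t_{-i},s}[d(t_i',t_{-i},s)],
\]
and symmetrically for $t_i\in T_i^-$, where the second BIC inequality gives $\E_{t_{-i},s}[a_i(t_i,t_{-i},s)] \ge \sup_{t_i'\in T_i^-}\E_{t_{-i},s}[d(t_i',t_{-i},s)] - \E_{t_{-i},s}[d(t_i,t_{-i},s)]$. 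Multiplying by $c_i$ and taking expectations over $t_i$ yields a lower bound on $\E_t[a_i(t)c_i]$, which produces exactly the $c_i$-terms (the worst-off infimum/supremum terms and the $-c_i d(t)$ contributions) appearing in \eqref{r}.

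\medskip

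\textbf{Step 3: Assemble.} Subtracting the lower bound on $\sum_i\E_t[a_i(t)c_i]$ from $\sum_i\E_t[d(t)t_i]$ gives precisely $V_R(d)$: the $d(t)t_i$ term combines with the $-c_i d(t)$ term on $T_i^+$ (and $+c_i d(t)$ on $T_i^-$) to form $d(t)[t_i - \tilde c_i(t_i)]$, and what remains is $c_i\big(\mathbbm{1}_{T_i^+}(t_i)\inf_{t_i'\in T_i^+}\E_{t_{-i}}[d(t_i',t_{-i})] - \mathbbm{1}_{T_i^-}(t_i)\sup_{t_i'\in T_i^-}\E_{t_{-i}}[d(t_i',t_{-i})]\big)$. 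Therefore $V_P(d,a)\le V_R(d)$. I expect the only delicate points to be bookkeeping: correctly handling the indicator split between $T_i^+$ and $T_i^-$, and being careful that $\E_{t_i}$ of the constant interim infimum/supremum just multiplies by $\Prob(t_i\in T_i^+)$ resp.\ $\Prob(t_i\in T_i^-)$ — these are the weights that make the $c_i$ terms in \eqref{r} come out right. The main obstacle is really just making sure the algebraic identity $V_P(d,a) = V_R(d) - \sum_i c_i\E_t\big[(\text{interim audit bound slack})\big]$ is written cleanly, after which the inequality is immediate from the sign of $c_i$ and the BIC bound.
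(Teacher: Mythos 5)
Your proposal is correct and follows essentially the same route as the paper: the identical decomposition of the payoff into $d(t)[t_i-\tilde c_i(t_i)]$ plus verification terms, the same pointwise bounds $a_i \ge d\,a_i \ge 0$, and the same application of the BIC characterization in Lemma~\ref{lemma_bic} for the worst-off types. Framing it as a lower bound on $\E_t[a_i(t)c_i]$ rather than a chain of inequalities on the full objective is only a cosmetic reorganization of the paper's argument.
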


\begin{proof}
\small
  \begin{align} 
  &V_{P}(d,a) =\E_{t}  \left[\sum_i d(t)[t_i-\c_i(t_i)] + c_i \mathbbm{1}_{T_i^+}(t_i) [d(t)-a_i(t)] - c_i \mathbbm{1}_{T_i^-}(t_i) [d(t)+a_i(t)] \right]\nonumber \\
  \le &\E_{t}  \left[\sum_i d(t)[t_i-\c_i(t_i)] + c_i \mathbbm{1}_{T_i^+}(t_i) [d(t) (1- a_i(t))] - c_i \mathbbm{1}_{T_i^-}(t_i) [d(t) (1- a_i(t))+a_i(t)] \right] \label{first_ineq}\\
  \le &\E_{t}  \left[\sum_i d(t)[t_i-\c_i(t_i)] + c_i \mathbbm{1}_{T_i^+}(t_i) \inf_{t_i' \in T_i^+} \E_{t_{-i}}[d(t_i',t_{-i})] - c_i \mathbbm{1}_{T_i^-}(t_i) \sup_{t_i' \in T_i^-} \E_{t_{-i}}[d(t_i',t_{-i})] \right] \label{second_ineq}\\
  = &V_{R}(d)\nonumber.
\end{align}
\normalsize
The first inequality holds because $-a_i(t) \le -d(t)a_i(t)$ and $d(t)a_i(t)\ge 0$.
The second inequality follows from the fact that $(d,a)$ is BIC.
\end{proof}

The significance of the relaxed problem lies in the fact that for any optimal solution $d$ to problem \eqref{r}, we can construct verification rules $a$ such that $(d,a)$ is feasible and $V_P(d,a)=V_R(d)$. This implies that $d$ is part of an optimal solution to problem \eqref{P}.

We now describe an optimal solution to the relaxed problem for finite type spaces.

\begin{lemma}\label{prop:opt}
Suppose that the type space $T$ is finite. Problem \eqref{r} is solved by a voting-with-evidence mechanism.
\end{lemma}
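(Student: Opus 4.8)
The goal is to show that the relaxed problem \eqref{r} with a finite type space is solved by a voting-with-evidence mechanism. Since the objective in \eqref{r} is linear in $d$ and the feasible set $\{d : T \times [0,1] \to [0,1]\}$ is convex and compact (in finite dimensions, after integrating out $s$ it is just a product of intervals), an optimal solution exists, and it suffices to characterize its structure. The plan is to set up the Lagrangian / first-order analysis pointwise. The key observation is that the term $\inf_{t_i' \in T_i^+}\E_{t_{-i}}[d(t_i',t_{-i})]$ is concave in $d$, so its ``subgradient'' is a probability distribution $\mu_i^+$ supported on the worst-off types in $T_i^+$ (those attaining the infimum); likewise $-\sup_{t_i'\in T_i^-}\E_{t_{-i}}[d(t_i',t_{-i})]$ is concave with subgradient a distribution $\mu_i^-$ on the worst-off types in $T_i^-$. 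So at an optimum there exist such distributions, and $d$ must pointwise maximize the linearized objective.

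First I would write the linearized objective. Fix an optimal $d^*$ and let $\mu_i^\pm$ be the corresponding worst-off-type distributions. For a profile $t$ with $d^*(t)$ not forced to a boundary, the coefficient multiplying $d(t)$ in the linearized problem is
\[
\sum_i \big(t_i - \tilde c_i(t_i)\big) \;+\; \text{(terms coming from the $\mu$'s hitting coordinate $t_i$)}.
\]
The $\mu$-terms contribute, for each $i$, a quantity that depends only on $t_i$: if $t_i$ is a worst-off type of agent $i$ (in the relevant direction) there is an extra additive constant, otherwise nothing. Define $\omega_i^+ := $ the relevant combination of $c_i$ and the Lagrange data so that the effective weight on $d(t)$ becomes exactly $\sum_i w_i(t_i)$, where $w_i$ is the weight function from the definition of a voting-with-evidence mechanism. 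Concretely: for $t_i \in T_i^+$ above the cutoff the contribution is $t_i - c_i$ (pay the cost, get the type); for $t_i$ in the bunching region the optimal thing is that all such types are worst-off and receive the common baseline weight $\omega_i^+$; symmetrically for $T_i^-$. Then pointwise maximization of a linear function over $[0,1]$ forces $d^*(t)=1$ when $\sum_i w_i(t_i)>0$ and $d^*(t)=0$ when $\sum_i w_i(t_i)<0$, which is exactly a voting-with-evidence mechanism.

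The main obstacle — and the step requiring real care — is establishing the consistency between the endogenous cutoffs and the worst-off sets: one must verify that there exist constants $\omega_i^\pm$ with $\omega_i^-\le\omega_i^+$ such that the types the candidate $d^*$ makes worst-off are precisely the bunched types $\{t_i\in T_i^+ : t_i \le \omega_i^+ + c_i\}$ (and analogously for $T_i^-$), and that the $\mu_i^\pm$ can be chosen to make this a genuine fixed point. I would handle this by a monotone-comparative-statics / lattice argument on the candidate weight functions, or by arguing that decreasing $d^*$ slightly on a small worst-off set strictly improves the objective unless the bunching structure above holds (mirroring the informal argument in the footnote after Theorem~\ref{thm_opt}); this pins down that the worst-off set must be a ``downward'' interval of types, hence of the stated cutoff form. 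A secondary technical point is handling ties ($\sum_i w_i(t_i)=0$), where $d^*(t)$ is indeterminate — but this is a measure-zero issue in the continuous case and harmless in the finite case since it does not affect the objective, so one simply picks any tie-break. Once the cutoff structure is in hand, reading off that $d^*$ is a voting-with-evidence mechanism is immediate from the definitions.
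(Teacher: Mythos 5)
Your setup — existence of an optimum, supergradient/KKT linearization of the concave terms with multipliers (your $\mu_i^\pm$) supported on the worst-off sets, and the conclusion that $d^*$ must maximize a linear functional with per-type weights $t_i-\tilde{c}_i(t_i)$ plus an adjustment on worst-off types — is essentially the paper's first move (the paper fixes $\varphi_i^\pm$, passes to an auxiliary problem, and invokes Karush--Kuhn--Tucker; your subgradient formulation is equivalent). The gap comes right after. You assert that on the bunching region each type ``receives the common baseline weight $\omega_i^+$,'' but the linearization does not deliver this: the adjusted weight is $h_i(t_i)=t_i-\tilde{c}_i(t_i)+\lambda_i(t_i)/f_i(t_i)$, and nothing forces the multipliers to be such that $h_i$ is constant across the worst-off set. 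A maximizer of $g(d,h^*)=\sum_t\sum_i d(t)f(t)h_i^*(t_i)$ therefore need not have the cutoff/common-weight structure, and this is exactly the step the paper spends most of its proof on (Claim~\ref{claim_omega}): it replaces $h_i^*$ on the bunching region by its conditional average $\bar{h}_i$ and proves that $d^*$ \emph{still} maximizes $g(\cdot,\bar{h})$, using complementary slackness (multipliers vanish off the bunching region), the fact that $d^*$ has constant interim value $\varphi_i^\pm$ on the bunching region (so $g(d^*,h^*)=g(d^*,\bar{h})$), and a Schur-convexity/majorization argument showing $\max_d g(d,h^*)\ge \max_d g(d,\bar{h})$. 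Only after that does the voting-with-evidence form follow from pointwise maximization, together with the final bookkeeping that $\sup_{A_i^+}\{t_i-c_i\}\le\omega_i^+\le\alpha_i^+$ (and symmetrically), which lets one state the cutoffs in terms of $\omega_i^\pm$.

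Your proposed fixes do not close this hole. The perturbation argument you cite (the footnote after Theorem~\ref{thm_opt}) shows that the worst-off set cannot be a singleton — i.e., that some bunching must occur — but it does not show that the optimal $d^*$ treats all bunched types through a single constant weight, nor does it rule out an optimum in which worst-off types with the same interim probability $\varphi_i^+$ enter the ex-post decision asymmetrically. The ``monotone-comparative-statics / lattice argument'' is not specified enough to evaluate. So the proposal is correct in outline up to the linearization, but the central step — why the multiplier-adjusted weights can be taken constant on each bunching region (equivalently, why $d^*$ also maximizes the averaged Lagrangian) — is missing, and it is the substantive content of the lemma.
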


\begin{proof}
Let $d^*$ denote an optimal solution to~\eqref{r}, let $\varphi_i^+\equiv \inf_{t_i'\in T_i^+}  \E_{t_{-i}}[d^*(t'_i,t_{-i})] $ and $\varphi_i^-\equiv \sup_{t_i'\in T_i^-} \E_{t_{-i}}[d^*(t'_i,t_{-i})]$, and observe that $\varphi_i^-\le \varphi_i^+$.

Consider the following auxiliary maximization problem:
  \begin{align*} \label{aux}
    \underset{0 \leq d\leq 1 }\max \ &\E_{t}  \big[\textstyle\sum_id(t)[t_i-\c_i(t_i)] \big]\tag{Aux}\\
    &   \hspace{1cm}  \text{s.t.  for all $i\in\I$:  }    \\
    &   \hspace{1cm}  \text{$ \E_{t_{-i}} [d(t)]\geq \varphi_i^+ $ for all $t_i\in T_i^+$, and }\\
    &   \hspace{1cm} \text{$ \E_{t_{-i}} [d(t)]\leq \varphi_i^-  $ for all $ t_i\in T_i^-$,}
  \end{align*}
Clearly, $d^*$ also solves the auxiliary problem. The Karush-Kuhn-Tucker theorem \cite{arrow61,luenberger69} implies that there exist Lagrange multipliers $\lambda^*_i(t_i)$, such that $\lambda^*_i(t_i)\ge 0$ for $t_i\in T_i^+$ and $\lambda^*_i(t_i)\le 0$ for $t_i \in T_i^-$ and such that $d^\ast$ maximizes

\begin{align*}
  \mathcal{L}(d,\lambda^*) &= \E_t\Big[\sum_i d(t)(t_i-\c_i(t_i))\Big] + \sum_i \sum_{t_i \in T_i} \Big( \lambda_i^*(t_i) \big( \E_{t_{-i}}[d(t_i,t_{-i})] -\varphi_i(t_i) \big) \Big)\\
  &=\sum_{t \in T}{ d(t) \sum_i \Big(t_i - \c_i(t_i) + \frac{\lambda_i^*(t_{i})}{ f_i(t_i) } \Big) f(t) } + constant,
\end{align*}
where 
\begin{minipage}{7cm}
  \[
  \varphi_i(t_i):=
  \left\{
  \begin{array}{ccc}
  \varphi_i^+ &\text{ if }& t_i\in T_i^+\\
  \varphi_i^- & \text{ if } &t_i\in T_i^-.
  \end{array}
  \right.
  \]
  \end{minipage}
\vspace{5mm}

Setting $h^*_i(t_i) := t_i -c_i(t_i) + \frac{\lambda_i^*(t_{i})}{f_i(t_i)}$ and ignoring the constant in the Lagrangian, we observe that $d^*$ maximizes the function
\begin{equation}\label{lagrangian}
 g(d,h^*) = \sum_{t \in T} \sum_i d(t)f(t)  h^*_i(t_i).
\end{equation}

Let
\begin{eqnarray}
  \alpha_i^+ =& \inf_{t_i \in T_i^+}\{ t_i | \E_{t_{-i}}[d^*(t_i,t_{-i})]  > \varphi_i^+ \} - c_i \\
  \alpha_i^- =& \sup_{t_i \in T_i^-}\{ t_i | \E_{t_{-i}}[d^*(t_i,t_{-i})] < \varphi_i^- \}+c_i
  \end{eqnarray}
and define
  \begin{align*}
    \bar{h}_i(t_i) := \begin{cases}
      \frac{1}{\mu_i(A_i^+)} \sum_{t_i\in A_i^+} f_i(t_i) h_i^*(t_i) \hspace{.3cm} &\text{ if } t_i \in T_i^+ \text{ and }  t_i \le \alpha_i^{+} +c_i \\
     \frac{1}{\mu_i(A_i^-)} \sum_{t_i\in A_i^-} f_i(t_i) h_i^*(t_i) \hspace{.3cm} &\text{ if } t_i \in T_i^- \text{ and } t_i \ge \alpha_i^{-}-c_i \\
      t_i-\c_i(t_i) &\text{ otherwise},
    \end{cases}
  \end{align*}
where  $A_i^+= \{t_i\in T_i^+| t_i< \alpha^+_i +c_i \}$, $A_i^-= \{t_i\in T_i^-| t_i > \alpha^-_i -c_i \}$, and $\mu_i(A)$ denotes the measure induced by $F_i$. Let  $A^c_i= T_i\setminus(A_i^+\cup A_i^-)$ and $A_i=A_i^+\cup A_i^-$.

\begin{claim}\label{claim_omega}
$d^*$ also maximizes $g(d,\bar{h}) = \sum_{t \in T} \sum_i d(t)f(t) \bar{h}_i(t_i)$.
\end{claim}

\noindent
\textbf{Step 1:} $\lambda^*(t_i)=0$ for $t_i\in A_i^c$.

Complementary slackness implies $\lambda_i^*(\alpha_i^+ +c_i) =0$. Moreover, for every $t_i\in T_i^+$ such that $t_i > \alpha_i^+ +c_i$, we get $t_i-c_i+\frac{\lambda^*_i(t_i)}{f_i(t_i)}\ge \alpha_i^+$ and hence for every optimal solution to the Lagrangian $d$ that $\E_{t_{-i}}[d(t_i,t_{-i})] \ge \E_{t_{-i}}[d(\alpha_i^++c_i,t_{-i})] > \varphi_i^+$.
 This implies that for $t_i\in T_i^+\cap A_i^c$, $\lambda_i^*(t_i) =0$  by complementary slackness. Analogous arguments for $t_i \in T_i^-\cap A_i^c$ apply. Thus, $\lambda^*(t_i)=0$ for $t_i\in A_i^c$.

\noindent 
\textbf{Step 2:}  $g(d^*,h^*) = g(d^*,\bar{h})$.

First, observe that $h^*_i(t_i)=\bar{h}_i(t_i)$ for $t_i \in A_i^c$, $\varphi_i^+=\Ei [d^*(t_i,t_{-i})]$ for $t_i\in A_i^+$, and  $\varphi_i^-=\Ei [d^*(t_i,t_{-i})]$ for $t_i\in A_i^-$. This implies 

\[
\begin{split}
g(d^*,h^*)&
=\sum_i \Big[\sum_{t_i\in A_i}h^*_i(t_i)f_i(t_i)\Ei [d^*(t)] + \sum_{t_i\in A_i^c}h^*_i(t_i)f_i(t_i)\Ei [d^*(t)]\Big] \\
&= \sum_i \Big[\sum_{t_i\in A_i^+}h^*_i(t_i)f_i(t_i) \varphi_i^++ \sum_{t_i\in A_i^-}h^*_i(t_i)f_i(t_i) \varphi_i^-+ \sum_{t_i\in A_i^c}\bar{h}_i(t_i)f_i(t_i)\Ei [d^*(t)]\Big] \\
&= \sum_i \Big[\sum_{t_i\in A_i^+} \bar{h}_i(t_i)f_i(t_i) \varphi_i^++ \sum_{t_i\in A_i^-} \bar{h}_i(t_i)f_i(t_i)\varphi_i^-+ \sum_{t_i\in A_i^c}\bar{h}_i(t_i)f_i(t_i)\Ei [d^*(t)]\Big] \\
&= \sum_i \Big[\sum_{t_i\in A_i}\bar{h}_i(t_i)f_i(t_i)\Ei [d^*(t)] + \sum_{t_i\in A_i^c}\bar{h}_i(t_i)f_i(t_i)\Ei [d^*(t)]\Big] = g(d^*,\bar{h}).
\end{split} 
 \]
\vspace{-.4cm}

\noindent
\textbf{Step 3:} $g(d^*,\bar{h})=g(d^*,h^*)=\max_{0\le d\le 1} g(d,h^*) \ge \max_{0\le d\le 1} g(d,\bar{h})$.

The first equality follows from Step 2 and the second holds because $d^*$ maximizes $g(d,h^*)$ by construction. 

Let  $h_i:T_i\rightarrow \mathbb{R}$ be any real-valued function, and for each such  function $h_i$ define    $H_i(t_i):= h_i(t_i)f_i(t_i)$ and denote by $H_i\equiv (H_i(t_i))_{t_i\in T_i}$. Fix an agent  $i\in \I$, and define a function $\Psi:\mathbb{R}^{|T_i|}\rightarrow\mathbb{R}$, as $\Psi(H_i):= \max_{0\le d\le 1} \sum_{t\in T} d(t)\big[   f_{-i}(t_{-i}) H_i(t_i) + \sum_{j\in \I_{-i}} f(t)h^*_j(t_j)\big ]  $.  The function $\Psi$ is convex, since it is a maximum over linear functions. It is also  symmetric, since permuting the vector $H_i$ does not change  the value of $\Psi$. Thus, $\Psi $ is Schur-convex. By construction, $H^*_i$ (defined as $H^*_i(t_i)=h^*_i(t_i)f_i(t_i)$)  majorizes $\bar{H}_i$ (defined as $\bar{H}_i(t_i)=\bar{h}_i(t_i)f_i(t_i)$).
Therefore we obtain that,
\[
 \Psi(H_i^*)\geq \Psi(\bar{H}_i) 
\]
We have now shown that if we replace $h_i^*$  for agent $i$ with its average $\bar{h}_i$ we have that $d^*$ remains the maximizer of   $ \max_{0\le d\le 1 }g(d,h^*_{\I_{-i}}h_i)$. By repeating this argument agent by agent we can conclude that,
\[ \max_{0\le d\le 1} g(d,h^*) =\max_{0\le d\le 1} \sum_{t\in T} \sum_{i\in\I} d(t)f_{-i}(t_{-i})H^*_i(t_i) \ge \max_{0\le d\le 1} \sum_{t\in T} \sum_{i\in\I} d(t)f_{-i}(t_{-i})\bar{H}_i(t_i) = \max_{0\le d\le 1} g(d,\bar{h}). \]
This proves the Claim~\ref{claim_omega}.

\bigskip

Hence, every solution to the Lagrangian can be described as follows:
  \begin{align*}
       d(t) = \begin{cases}
              1 \hspace{.3cm} &\text{ if } \sum{w_i(t_i)} > 0 \\
              0 &\text{ if } \sum{w_i(t_i)} < 0,
    \end{cases}
  \end{align*}
where
  \begin{align}\label{eq_mech}
    w_i(t_i) = \begin{cases}
     \omega_i^{+} \hspace{.3cm} &\text{ if } t_i \in T_i^+ \text{ and }  t_i \le \alpha_i^{+} +c_i \\
      \omega_i^{-} &\text{ if } t_i \in T_i^- \text{ and } t_i \ge \alpha_i^{-}-c_i \\
      t_i - c_i(t_i) &\text{ otherwise}
    \end{cases}
  \end{align}
for constants $\{\omega_i^+,\omega_i^-\}_{i\in I}$. Since $d^*$ maximizes the Lagrangian by assumption, we conclude that it takes this form.

Note that $\omega_i^+ \ge \sup_{t_i\in A_i^+}\{t_i-c_i\}$ since $\lambda_i^*(t_i)\ge 0$ for $t_i \in A_i^+$. Also, $\omega_i^+\le \alpha_i^+$, since otherwise we would get, for $t_i\in A_i^+$, $\Ei[d^*(t_i,t_{-i})]\ge\Ei[d^*(\alpha_i^+-c_i,t_{-i})]>\varphi_i^+$, contradicting the definition of $A_i^+$. Analogous arguments imply $\inf_{t_i\in A_i^-}\{t_i+c_i\}\le \omega^- \le \alpha_i^-$.  This implies that  we can replace $\alpha_i^+$ ($\alpha_i^-$ ) with $\omega_i^+$ ($\omega_i^-$) in the definition of the weight function $w_i$ in \eqref{eq_mech} above without changing the outcome of the  mechanism in any way.
\end{proof}

As the next step in the proof we show that voting-with-evidence mechanisms are also optimal for infinite type space.

\begin{lemma}\label{prop:opt_inf}
 Suppose that $T$ is an infinite type space.  Problem \eqref{r} is solved by a voting-with-evidence mechanism.
\end{lemma}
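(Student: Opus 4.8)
The plan is to deduce the infinite-type case from the finite-type case already settled in Lemma~\ref{prop:opt}, via the discretization announced after Theorem~\ref{thm_opt}. First I would, for each $n$, partition each $T_i$ into finitely many measurable cells of diameter at most $1/n$ (collapsing the tails $\{|t_i|>n\}$ into two unbounded end cells, and aligning the partition with the split into $T_i^+$ and $T_i^-$), choose a representative point in each cell to form a finite set $T_i^n\subseteq T_i$, and let $F_i^n$ be the push-forward of $F_i$ under the cell-to-representative map. Since $F_i$ has finite moments, $F_i^n\Rightarrow F_i$ and $\int|t_i|\,dF_i^n\to\int|t_i|\,dF_i$, so for any uniformly bounded decision rule the first term $\E_t[\sum_i d(t)(t_i-\tilde c_i(t_i))]$ of the objective in~\eqref{r} is continuous along the approximation. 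By Lemma~\ref{prop:opt}, the relaxed problem~\eqref{r} with type space $T^n$ is solved by a voting-with-evidence mechanism $d^n$ with weights $\{\omega_i^{+,n},\omega_i^{-,n}\}_{i\in\I}$, $\omega_i^{-,n}\le\omega_i^{+,n}$.

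Next I would pass to a limit. The infimum and supremum terms in~\eqref{r} lie in $[0,1]$, so the verification-cost part of the objective is bounded in absolute value by $\sum_i c_i$; together with the finite first moments this keeps the weights $\{\omega_i^{\pm,n}\}$ in a bounded set, uniformly in large $n$ (a weight far outside this set would make $d^n$ implement the new policy on almost all of agent $i$'s bunching event irrespective of $t_{-i}$, at an expected loss exceeding the $\sum_i c_i$ slack, contradicting optimality of $d^n$). Passing to a subsequence, $\omega_i^{\pm,n}\to\omega_i^{\pm,*}$ with $\omega_i^{-,*}\le\omega_i^{+,*}$. Let $d^*$ be a voting-with-evidence mechanism on $T$ with these limiting weights, with the behavior on the tie set $\{\sum_i w_i(t_i)=0\}$ chosen to break ties according to the sign of $\sum_i(t_i-\tilde c_i(t_i))$ (this is unconstrained by the definition of a voting-with-evidence mechanism). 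Because each $F_i$ is atomless and the weight functions depend continuously on the weights, $d^n\to d^*$ pointwise off a null set, and the objective value of a voting-with-evidence mechanism depends continuously on its weights and on the underlying product distribution; hence $V_R^n(d^n)\to V_R(d^*)$, where $V_R^n$ denotes the objective of~\eqref{r} under $F^n$.

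To conclude optimality, I would take an arbitrary feasible $\tilde d:T\to[0,1]$ and let $\tilde d^n=\E[\tilde d\mid\mathcal G^n]$ be its conditional expectation given the product $\sigma$-algebra generated by the $n$-th partition; then $\tilde d^n$ is feasible for~\eqref{r} on $T^n$. Its first term converges to that of $\tilde d$ by $L^1$-convergence of conditional expectations, while $\E_{t_{-i}}[\tilde d^n(t_i,t_{-i})]$ is exactly the cell-average of $t_i'\mapsto\E_{t_{-i}}[\tilde d(t_i',t_{-i})]$, so averaging can only raise the relevant infimum over $T_i^n\cap T_i^+$ and only lower the supremum over $T_i^n\cap T_i^-$; since these appear in the objective with coefficients $c_i\,\Prob(t_i\in T_i^+)\ge 0$ and $-c_i\,\Prob(t_i\in T_i^-)\le 0$ respectively, this gives $\liminf_n V_R^n(\tilde d^n)\ge V_R(\tilde d)$. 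Combining with $V_R^n(d^n)\ge V_R^n(\tilde d^n)$ and letting $n\to\infty$ yields $V_R(d^*)\ge V_R(\tilde d)$, so the voting-with-evidence mechanism $d^*$ solves~\eqref{r}.

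The main obstacle is the interchange of limits with the operators $\inf_{t_i'\in T_i^+}$ and $\sup_{t_i'\in T_i^-}$, which are not continuous under weak-type convergence: handling these verification-cost terms simultaneously under discretization of the type space and along $\omega^n\to\omega^*$ is where the analytic care is concentrated, and it rests on the monotonicity of infima/suprema under averaging (for the competitor $\tilde d$) and on continuity of the voting-with-evidence objective in the weights away from the degenerate configurations in which several agents are simultaneously bunched with canceling weights. A secondary technical point is the uniform-in-$n$ bound on the weights, which leans on the finite-moments assumption.
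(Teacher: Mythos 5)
Your proposal is correct and follows essentially the same route as the paper: discretize the type spaces, invoke the finite-type result (Lemma~\ref{prop:opt}) to get voting-with-evidence solutions $d^n$ with weights $\omega_i^{\pm,n}$, pass to a convergent subsequence of weights to define the limiting rule, and establish optimality by approximating an arbitrary competitor with cell-averaged ($\mathcal{F}^n$-measurable) rules, using that averaging can only improve the infimum/supremum verification terms. The only differences are cosmetic — you push forward to a genuinely finite type space with representative points and equal-diameter cells, whereas the paper restricts to $\mathcal{F}^n$-measurable rules on the original space with equal-probability cells and conditional-mean representatives — and your spelled-out competitor-approximation step is exactly the content behind the paper's terser closing claim.
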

\begin{proof}
Let $F_i^+$ and $F_i^-$ denote the conditional distributions induced by $F_i$ on $T_i^+$ and $T_i^-$, respectively.
We first construct a discrete approximation of the type space: For $i \in \I$, $n \ge 1$, $l_i = 1,\dots,2^{n+1}$, let 
\begin{align*}
	S_i(n,l_i) := \begin{cases} \{t_i \in T_i^+ | \frac{l_i-1}{2^n} \le F_i^+(t_i) < \frac{l_i}{2^n} \} &\text{ for } l_i \le 2^n \\
	\{t_i \in T_i^- | \frac{l_i-2^n-1}{2^n} \le F_i^-(t_i) < \frac{l_i-2^n}{2^n} \} &\text{ for } l_i > 2^n,
	\end{cases}
\end{align*}
 which form partitions of $T_i^+$ and $T_i^-$, and denote by $\mathcal{F}_i^n$ the set consisting of all possible  unions of the $S_i(n,l_i)$. Let $l = (l_1, ..., l_n)$ and $S(n,l) = \prod_{i \in \I} S_i(n,l_i)$, which defines a partition of $T$, and denote by $\mathcal{F}^n$ the induced $\sigma$-algebra. 

Let $(R^n)$ denote the relaxed problem with the additional restriction that $d$ is measurable with respect to $\mathcal{F}^n$. Then the constraint set has non-empty interior and an optimal solution to $(R^n)$ exists.
Define $\tilde{t}_i(t_i)~:=~\frac{1}{\mu_i(S_i(n,l_i))} \int_{S_i(n,l_i)} s dF_i $ for $t_i \in S_i(n,l_i)$, where $\mu_i$ denotes the measure induced by $F_i$. The arguments for finite type spaces imply that the following rule is an optimal solution to $(R^n)$ for some $\omega_{i}^{+,n}, \omega_i^{-,n}$:
\begin{align*}
	r^n_i(t_i) = \begin{cases}
		\omega_i^{+,n} - c_i \hspace{.3cm} &\text{ if } t_i \in T_i^+  \text{ and } \tilde{t}_i(t_i) \le \omega_i^{+,n} \\
		\omega_i^{-,n} +c_i &\text{ if } t_i \in T_i^- \text{ and } \tilde{t}_i(t_i)\ge \omega_i^{-,n}\\
		\tilde{t}_i(t_i) - c_i(t_i) &\text{ otherwise}
	\end{cases}
\end{align*}
\begin{align*}
	d^n(t) = \begin{cases}
		1 \hspace{.3cm} &\text{ if } \sum{r_i^n(t_i)} > 0 \\
		0 &\text{ if } \sum{r_i^n(t_i)} < 0.	
	\end{cases}
\end{align*}

Let $\omega_i^+ := \underset{n\rightarrow \infty}\lim \, \omega_i^{+n}$ and $\omega_i^- := \underset{n\rightarrow \infty}\lim \,\omega_i^{-,n}$ (by potentially choosing a convergent subsequence). 
Define 
\begin{align*}
	r_i(t_i) = \begin{cases}
		\omega_i^+ -c_i \hspace{.3cm} &\text{ if } t_i \in T_i^+  \text{ and } \tilde{t}_i(t_i) \le \omega_i^{+,n} \\
		\omega_i^- + c_i &\text{ if }t_i \in T_i^- \text{ and } \tilde{t}_i(t_i)\ge \omega_i^{-,n}\\
		t_i - \c_i(t_i) &\text{ otherwise}
	\end{cases}
\end{align*}
\begin{align*}
	d(t) = \begin{cases}
		1 \hspace{.3cm} &\text{ if } \sum{r_i(t_i)} > 0 \\
		0 &\text{ if } \sum{r_i(t_i)} < 0.
	\end{cases}
\end{align*}

Then, for all $i$ and $t_i$, $\E_{t_-i}[d^n(t_i,t_{-i})] = \Prob[ \sum_{j \neq i} r_j^n(t_j) \ge - r^n_i(t_i) ]$ converges pointwise almost everywhere to $ \E_{t_{-i}}[d(t_i,t_{-i})] $. This implies that the marginals converge in $L^1$-norm and hence the objective value of $d^n$ converges to the objective value of $d$. 
This implies that $d$ is an optimal solution to \eqref{r}, since if there was a solution achieving a strictly higher objective value, there would exist $\mathcal{F}^n$-measurable solutions achieving a strictly higher objective value for all $n$ large enough. Therefore, a voting-with-evidence mechanism solves problem \eqref{r}.
\end{proof}

Now we have all the parts required to establish our main result Theorem~\ref{thm_opt} that voting-with-evidence mechanisms are optimal. 
\begin{proof}[Proof of Theorem \ref{thm_opt}]\label{pf_thm_opt} 
  Denote by $d^*$ the solution to problem \eqref{r}. We first construct a verification rule $a^*$ such that $(d^*,a^*)$ is Bayesian incentive compatible and then argue that $V_{P}(d^*,a^*) = V_{R}(d^*)$. Given that $V_{P}(d,a) \le V_{R}(d)$ holds for any incentive compatible mechanism, this implies that $(d^*,a^*)$ solves \eqref{P}.

  Let $a^*$ be such that agent $i$ is verified whenever he is decisive. Then $a^*_i(t) = a^*_i(t) d^*(t)$ for all $t_i \in T_i^+$ (if $d^*(t)=0$ then type $t_i \in T_i^+$ is not decisive), and $d^*(t) = d^*(t) [1-a^*_i(t)] $ for all $t_i \in T_i^-$ (if $a^*_i(t)=1$ then $d^*(t)=0$). Hence, inequality \eqref{first_ineq} holds as an equality for $(d^*,a^*)$. 

  Note that in mechanism $(d^*,a^*)$, all incentive constraints are binding and therefore inequality \eqref{second_ineq} holds as an equality as well. We therefore conclude $V_{P}(d^*,a^*) = V_{R}(d^*)$.
\end{proof}

\begin{proof}[Proof of Proposition \ref{prop:opt_weights}]
Without loss of generality, suppose $\omega_1^+\le-\omega_2^-$ and consider changing $\omega_2^-$ (the other cases are analogous). This matters only if agent $2$ has a negative type and 1 has a positive type. We consider two cases: (a) a change to $\omega_2^{-'}$ such that $\omega_1^+\le-\omega_2^{-'}$; (b) a change such that $\omega_1^+>-\omega_2^{-'}$.

Case (a): 

Using weight $\omega_2^{-'}$ such that $-\omega_1^+\ge\omega_2^{-'}>\omega_2^-$ instead of $\omega_2^-$ matters only if agent 1's type satisfies $\omega_2^{-}\le -t_1+c_1 \le \omega_2^{-'}$. Conditional on such a type, the expected utility of the principal from using weight $\omega_2^-$ is 0. On the other hand, using weight $\omega_2^{-'}$ gives conditional expected utility of
\begin{align*}
\int_{-\infty}^0 (t_1+t_2-c_1)\1_{t_1-c_1+t_2+c_2\ge 0}-c_2 \1_{t_1-c_1+t_2+c_2<0} dF_2.
\end{align*}
The definition of $\omega_2^-$ implies
\begin{align*}
\int_{-\infty}^0 t_2 dF_2 &= \int_{-\infty}^0 \min\{\omega_2^-, t_2+c_2\} dF_2 \\
& \le \int_{-\infty}^0 \min\{-t_1+c_1, t_2+c_2\} dF_2 \\
&=\int_{-\infty}^0 (-t_1+c_1)\1_{t_1-c_1+t_2+c_2\ge 0}+(t_2+c_2) \1_{t_1-c_1+t_2+c_2<0} dF_2.
\end{align*}
Subtracting $\int_{-\infty}^0 t_2 dF_2$ from both sides and multiplying by $-1$, this implies
\begin{align*}
0\ge\int_{-\infty}^0 (t_1+t_2-c_1)\1_{t_1-c_1+t_2+c_2\ge 0}-c_2 \1_{t_1-c_1+t_2+c_2<0} dF_2,
\end{align*}
and hence the principal is better off using weight $\omega_2^-$. Similar arguments also show that the principal is worse off using a cutoff $\omega_2^{-'}<\omega_2^-$.

Case (b): We can think of this case in two steps. First, a change such that $\hat{\omega}_2^-=-\omega_1^+$. As shown in Case (a), this reduces the principal's welfare. Second, a further change to $\omega_2^{-'}$, which only changes the decision if both agents cast a vote. The effect of this second change non-positive if and only if
\begin{align}
0\ge &\int_0^{\omega_1^+ + c_1} \int_{-\omega_1^- - c_2}^0 t_1+t_2 dF_2\ dF_1\nonumber \\
&+ c_1 [1-F_1(\omega_1^+ +c_1)][F_2(0)-F_2(-\omega_1^- -c_2)] - c_2 [F_1(\omega_1^+ + c_1)-F_1(0)]F_2(-\omega_1^- -c_2). \nonumber
\end{align}
This is equivalent to
\begin{align}
0\ge & \big\{\E[t_1|0\le t_1\le \omega_1^+ + c_1] +\E[t_2|0\ge t_2\ge -\omega_1^+ - c_2] \big\}[F_1(\omega_1^+ + c_1)-F_1(0)][F_2(0)-F_2(-\omega_1^+ - c_2)]\nonumber \\
&+ c_1 [1-F_1(\omega_1^+ +c_1)][F_2(0)-F_2(-\omega_1^- -c_2)] - c_2 [F_1(\omega_1^+ + c_1)-F_1(0)]F_2(-\omega_1^- -c_2) \nonumber,
\end{align}
or to
\begin{align}
0\ge & \E[t_1|0\le t_1\le \omega_1^+ + c_1] +\E[t_2|0\ge t_2\ge -\omega_1^+ - c_2] \nonumber \\
 &+ c_1 \frac{1-F_1(0)}{F_1(\omega_1^+ + c_1)-F_1(0)} -c_1  - c_2 \frac{F_2(0)}{F_2(0)-F_2(-\omega_1^+ - c_2)}+c_2 \label{eq:n2}
\end{align}
However, the definition of $\omega_1^+$ implies
\begin{align}
\int_0^{\infty} t_1 dF_1 &= \int_{\omega_1^+ + c_1}^{\infty} t_1 - c_1 d F_1 + [F(\omega_1^+ + c_1)-F(0)]\omega_1^+\nonumber\\
\Leftrightarrow  \omega_1^+ &= \E[t_1|0\le t_1\le \omega_1^+ + c_1] -c_1+ c_1 \frac{1-F_1(0)}{F_1(\omega_1^+  + c_1)-F_1(0)}.\label{eq:n2a}
\end{align}
Similarly, the definition of $\omega_2^-$ and the fact that $\omega_2^-\le -\omega_1^+$ imply
\begin{align*}
\E[t_2|t_2< 0] &= \E[\min\{\omega_2^- -c_2,t_2\} +c_2| t_2<0]\\
& \le \E[\min\{-\omega_1^+ -c_2,t_2\} +c_2| t_2<0].
\end{align*}
Rearranging this inequality yields
\begin{align}\label{eq:n2b}
\E[t_2|-\omega_1^+- c_2\le t_2< 0] -c_2 \frac{F_2(0)}{F_2(0)-F_2(\omega_1^+ -c_2)}\le -\omega_1^+ -c_2.
\end{align}
Plugging \eqref{eq:n2a} and \eqref{eq:n2b} in \eqref{eq:n2}, we see that \eqref{eq:n2} holds. We conclude that the principal is better off using weight $\omega_2^-$.
\end{proof}

\subsection{Omitted proofs from Section \ref{sec_bicepic}} \label{proof_bicepic}

\begin{proof}[Proof of Theorem \ref{thm:commute}] \ 
\newline		
The proof applies Theorem 6 in \cite{gutmann91} to a discrete approximation of $A$ and by taking limits we establish Theorem~\ref{thm:commute}. 
		
		Let $S_i(n,l_i)$ denote the interval,
		\[
		S_i(n,l_i) := [F^{-1}_i((l_i-1)2^{-n}), F^{-1}_i(l_i 2^{-n})),\hspace{1 cm} i \in \I,\, n \ge 1 \text{ and } l_i = 1,...,2^n.
		\]
For a given $n$ the function $S_i(n,\cdot)$ form a partition of $A_i$ such that each partition element $S_i(n,k)$ has the same likelihood. Let $\mathcal{F}_i^n$ denote  the set consisting of all possible  unions of the $S_i(n,l_i)$. Note further that $\mathcal{F}_i^n\subset \mathcal{F}_i^{n+1}$. Let $l =(l_1,\dots,l_I)$ and $S(n,l):= \prod_{i\in\I}S_i(n,l_i)$. Thus, for a given $n$ the function $S(n,\cdot)$ defines a partition of $A$ such that each partition element $S(n,l)$  has the same likelihood.

Define the following averaged function,
\[
g(n,l) := 2^{In} \int_{S(n,l)} g(t) dF. 
\]
The function $g(n,l)$ is an $I$-dimensional tensor. Now consider the marginals of $g(n,l)$ with respect to $l_{-i}$, i.e., $\E_{l_{-i}}[g(n,l_i,l_{-i})]$, each such marginal in dimension $i$ is nondecreasing in $l_i$. By Theorem 6 in \cite{gutmann91} there exists another tensor $g'(n,l)$ with the same marginals as $g(n,l)$ such that $g'(n,l)$ is nondecreasing in $l$. Now define $g_n': T \rightarrow [0,1]$ by letting $g_n'(t) := g'(n,l)$ for all $t \in S(n,l)$.

		Note that $g_n'$ is nondecreasing in each coordinate and hence satisfies 
		\begin{align}
			 \int \essinf_{t_i \in B} g_n'(t_i,t_{-i}) dF_{-i} &= \essinf_{t_i \in B} \int g_n'(t_i,t_{-i}) dF_{-i} \label{eq:inf2}\\
			 \int \esssup_{t_i \in B} g_n'(t_i,t_{-i}) dF_{-i} &= \esssup_{t_i \in B} \int g_n'(t_i,t_{-i}) dF_{-i}.\label{eq:sup2}
		\end{align}
		Moreover, 
		\begin{align} \label{integrate_to_zero}
      \int_{S_i(n,l_i)} \int_{A_{-i}} g(t_i,t_{-i}) dF_{-i} \ dF_i  = \int_{S_i(n,l_i)} \int_{A_{-i}} g_n'(t_i,t_{-i}) dF_{-i} \ dF_i,
    \end{align}
		and hence $g(t)-g_n'(t)$ integrates to zero over sets of the form $S_i(n,l_i) \times A_{-i}$ for every $S_i(n,l_i) \in \mathcal{F}^n_i$.

		\bigskip

		Draw a weak$^*$-convergent subsequence from  the sequence $\{g'_n\}$ (which is possible by Alaoglu's theorem) and denote its limit by $\hat{g}$. This function  rule satisfies $0\le \hat{g} \le 1$ and its marginals  are equal almost everywhere to the marginals of $g$ because of \eqref{integrate_to_zero}. 

		Since $g_n' \rightarrow^* \hat{g}$, we get \\
		$\essinf_{t_i \in B} g_n'(t_i, t_{-i}) \rightarrow \essinf_{t_i \in B} \hat{g}(t_i, t_{-i})$ for almost every $t_{-i}$. Moreover, \\
		$\essinf_{t_i \in B} \int_{A_{-i}}g_n'(t_i, t_{-i}) dF_{-i} \rightarrow \essinf_{t_i \in B} \int_{A_{-i}}\hat{g}(t_i, t_{-i}) dF_{-i}$. 
    Note further that,\\  $\E_{t_{-i}} [\inf_{t_i \in T_i^+} \hat{g}(t_i,t_{-i})] \le \inf_{t_i \in T_i^+} \E_{t_{-i}}[\hat{g}(t_i,t_{-i})]$  always holds. 
    By way of contradiction suppose now  that for some $i$,
		\[ \int \essinf_{t_i \in B} \hat{g}(t_i,t_{-i}) dF_{-i} < \essinf_{t_i \in B} \int \hat{g}(t_i,t_{-i}) dF_{-i}.\footnote{If the inequality only holds for the infimum but not for the essential infimum, we can adjust $\hat{g}$ on a set of measure zero such that our claim holds.} \]
		This implies 
		\[ \int \essinf_{t_i \in B} g_n'(t_i,t_{-i}) dF_{-i} < \essinf_{t_i \in B} \int g_n'(t_i,t_{-i}) dF_{-i}\]
		for $n$ large enough, contradicting \eqref{eq:inf2} and thereby proving the first equality in the theorem. Analogous arguments apply for the second equality in the theorem, thus establishing our claim.
\end{proof}

		\bigskip

  \begin{proof}[Proof of Theorem \ref{prop:bic-epic}] \ 
  \newline
It follows from Theorem \ref{thm:commute} that there exists a decision rule $\hat{d}: T \times [0,1] \rightarrow \{0,1\}$ that induces the same marginals almost everywhere and for which 
  \begin{align*}
    \inf_{t_i \in T_i^+}\E_{t_{-i},s}[\hat{d}(t_i,t_{-i},s)] &= \E_{t_{-i}}[\inf_{t_i \in T_i^+}\E_{s}\hat{d}(t_i,t_{-i},s)] \text{ and }\\
    \sup_{t_i \in T_i^-}\E_{t_{-i},s}[\hat{d}(t_i,t_{-i},s)] &= \E_{t_{-i}}[\sup_{t_i \in T_i^-}\E_{s} \hat{d}(t_i,t_{-i},s)].
  \end{align*}

We now construct a verification rule $\hat{a}$ such that the mechanism $(\hat{d},\hat{a})$ satisfies the claim.
By setting 
		\begin{align*}
		 	\hat{a}_i(t,s) := \begin{cases}
		 	\frac{1}{\Prob_s(\hat{d}(t,s)=1)}  \left( \E_{s'}[\hat{d}(t,s')] - \inf_{t_i' \in T_i^+} \E_{s'}[\hat{d}(t_i',\ti,s')] \right) \hspace{.3cm} &\text{ if $\hat{d}(t,s) =1$} \\
		 	\frac{1}{\Prob_s(\hat{d}(t,s)=0)}  \left( \sup_{t_i' \in T_i^-} \E_{s'}[\hat{d}(t_i',\ti,s')] - \E_{s'}[\hat{d}(t,s')] \right) &\text{ if $\hat{d}(t,s) =0$,}
		 	\end{cases}
		 \end{align*}  
		the mechanism $(\hat{d},\hat{a})$ satisfies \eqref{eq:epic1} as an equality for all $t_i$, $t_{-i}$:
		\begin{align*}
		 	&\E_s[\hat{d}(t,s) (1-\hat{a}_i(t,s))] =  \int\limits_{s:\hat{d}(t,s)=1} 1 - \frac{1}{\Prob_s(\hat{d}(t,s)=1)} \ \left[ \E_{s'}[\hat{d}(t,s')] - \inf_{t_i' \in T_i^+} \E_{s'}[\hat{d}(t_i',\ti,s')] \right] ds \\		 	
		 	& = \int\limits_{s:\hat{d}(t,s)=1} 1 - \frac{1}{\Prob_s(\hat{d}(t,s)=1)} \ \left[ \int\limits_{s':\hat{d}(t,s')=1}\Prob_{s'}(\hat{d}(t,s')=1) ds' - \inf_{t_i' \in T_i^+} \E_{s'}[\hat{d}(t_i',\ti,s')] \right] ds\\
		 	 &= \int\limits_{s:\hat{d}(t,s)=1} \frac{1}{\Prob_s(\hat{d}(t,s)=1)} \left[\inf_{t_i' \in T_i^+} \E_{s'}[\hat{d}(t_i',\ti,s')] \right]ds \\
		 	 &= \inf_{t_i' \in T_i^+} \E_{s}[\hat{d}(t_i',t_{-i},s)].
		 \end{align*} 
		Similarly, the mechanism satisfies \eqref{eq:epic2} as an equality and hence it is EPIC. 

		Moreover, 
		\begin{align*}
			\E_{t_{-i},s}[\hat{a}_i(t,s)] &= \E_{t_{-i},s}\Big[\hat{a}_i(t,s) + \hat{d}(t,s) [1-\hat{a}_i(t,s)] -\hat{d}(t,s) [1-\hat{a}_i(t,s)] \Big] \\
			 &= \E_{t_{-i}}\Bigg[\sup_{t_i'\in T_i^-} \E_{s}\hat{d}(t_i',\ti,s) - \inf_{t_i' \in T_i^+} \E_{s}\hat{d}(t_i',t_{-i},s)\Bigg] \\
			  & = \sup_{t_i' \in T_i^-} \E_{t_{-i},s}[d(t_i',\ti,s)] - \inf_{t_i'\in T_i^+} \E_{t_{-i},s}[d(t_i',t_{-i},s)] \\
			&\le \E_{\ti,s}[a_i(t,s)],
		\end{align*}
		where the second equality follows from the fact that \eqref{eq:epic1} and \eqref{eq:epic2} are binding, the third equality follows from Step 1 and the fact that $d$ and $\hat{d}$ induce the same marginals, and the inequality follows from the fact that $(d,a)$ is BIC. Hence, by potentially adding additional verifications one obtains an EPIC mechanism that induces the same interim decision and verification probabilities.
	\end{proof}

\subsection{Proof of Theorem \ref{thm:imp}} \label{proof_imp}
Consider the relaxed problem
\begin{align*}
  \underset{0\leq d \leq 1} \max  \E_{t}  \Big[\textstyle \sum\limits_id(t)[t_i-\frac{ \tilde{c}_i(t_i)}{p}] + \frac{c_i}{p}\Big(\mathbbm{1}_{T_i^+}(t_i) \underset{t_i' \in T_i^+} \inf \E_{t_{-i}}[d(t_i',t_{-i})] &- \mathbbm{1}_{T_i^-}(t_i) \underset{t_i' \in T_i^-}\sup \E_{t_{-i}}[d(t_i',t_{-i})]\Big)\Big]\\
  \text{s.t. } \eqref{eq:bic_imp1} \text{ and } \eqref{eq:bic_imp2} \tag{$\tilde{R}$}\label{relax_imp}
\end{align*}
For any mechanism $(d,a)$, let $V_{\tilde{P}}(d,a)$ denote the expected utility of the principal given mechanism $(d,a)$ and let $V_{\tilde{R}}(d)$ denote the value achieved by the decision rule $d$ in the relaxed problem. 
\begin{lemma}
For any mechanism $(d,a)$ that is Bayesian incentive compatible in the imperfect verification setting, $V_{\tilde{P}}(d,a)\le V_{\tilde{R}}(d)$.
\end{lemma}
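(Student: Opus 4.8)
The plan is to mirror the proof of Lemma~\ref{relax} from the perfect-verification benchmark, inserting the imperfect-verification incentive characterization (Lemma~\ref{lemma_bic_impp}) wherever the perfect one was used, and carrying the factor $p$ through. The argument is a three-link chain of (in)equalities: an algebraic rewriting of $V_{\tilde P}(d,a)$, a first inequality using only feasibility ($0\le d\le 1$, $0\le a_i\le 1$), and a second inequality using the incentive constraints; the endpoint of the chain is exactly the objective of \eqref{relax_imp}, i.e.\ $V_{\tilde R}(d)$.

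For the first link I would split the objective according to whether $t_i\in T_i^+$ or $t_i\in T_i^-$ and add and subtract the scaled net type, using $d(t)t_i = d(t)[t_i-\tfrac{\tilde{c}_i(t_i)}{p}]+\tfrac{\tilde{c}_i(t_i)}{p}d(t)$ with $\tilde{c}_i(t_i)=c_i$ on $T_i^+$ and $\tilde{c}_i(t_i)=-c_i$ on $T_i^-$. This rewrites the principal's payoff as
\[
V_{\tilde P}(d,a)=\E_t\Big[\sum_i d(t)\big[t_i-\tfrac{\tilde{c}_i(t_i)}{p}\big]+\tfrac{c_i}{p}\mathbbm{1}_{T_i^+}(t_i)\big[d(t)-p\,a_i(t)\big]-\tfrac{c_i}{p}\mathbbm{1}_{T_i^-}(t_i)\big[d(t)+p\,a_i(t)\big]\Big].
\]
For the second link I would use $a_i(t)\ge d(t)a_i(t)$ and $d(t)a_i(t)\ge 0$ — the same two elementary facts that drive the first inequality in Lemma~\ref{relax} — to replace $d(t)-p\,a_i(t)$ by $d(t)\,[1-p\,a_i(t)]$ and $d(t)+p\,a_i(t)$ by $d(t)\,[1-p\,a_i(t)]+p\,a_i(t)$, each substitution weakly increasing the bracketed expression. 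For the third link I would take conditional expectations over $t_{-i}$ inside and invoke Lemma~\ref{lemma_bic_impp}, which gives $\E_{t_{-i},s}[d(t)(1-p\,a_i(t))]\le \inf_{t_i'\in T_i^+}\E_{t_{-i},s}[d(t_i',t_{-i},s)]$ on $T_i^+$ and $\E_{t_{-i},s}[d(t)(1-p\,a_i(t))+p\,a_i(t)]\ge \sup_{t_i'\in T_i^-}\E_{t_{-i},s}[d(t_i',t_{-i},s)]$ on $T_i^-$; substituting these produces precisely $V_{\tilde R}(d)$.

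Finally I would note that $d$ is indeed feasible for \eqref{relax_imp}: its interim decision rule satisfies \eqref{eq:bic_imp1}--\eqref{eq:bic_imp2}, which is exactly the derivation from Lemma~\ref{lemma_bic_impp} and $a_i\le 1$ already displayed in the text, so $V_{\tilde R}(d)$ is well defined and the chain of inequalities yields $V_{\tilde P}(d,a)\le V_{\tilde R}(d)$. I do not expect a real obstacle; the only point requiring care is the bookkeeping of the factor $p$ — the relevant net type is $t_i-c_i/p$ rather than $t_i-c_i$, the audit term appears as $p\,a_i$, and one must keep the inequality directions straight when applying the $T_i^+$ constraint (an upper bound) versus the $T_i^-$ constraint (a lower bound, hence a sign flip after multiplying the $T_i^-$ term by $-c_i/p$).
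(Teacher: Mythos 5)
Your proposal is correct and follows essentially the same route as the paper: both mirror the perfect-verification argument, first bounding the audit cost using only $0\le d\le 1$ and $a_i\ge 0$ (your substitutions are algebraically identical to the paper's replacement of $a_i$ by $\mathbbm{1}_{T_i^+}d\,a_i+\mathbbm{1}_{T_i^-}(1-d)a_i$), and then invoking Lemma~\ref{lemma_bic_impp} for the worst-off types to arrive at $V_{\tilde R}(d)$. The only nitpick is the phrase ``each substitution weakly increasing the bracketed expression'': for the $T_i^-$ term the bracket weakly decreases and the overall term increases only because of the $-c_i/p$ coefficient, a sign issue you in fact flag correctly at the end.
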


\begin{proof}
Note that Lemma \ref{lemma_bic_impp} implies that 
\begin{align}
\forall t_i \in T_i^+: \ \Eis[a_i(t_i,\ti,s)d(t_i,\ti,s)]\ge \frac{1}{p} \left[\Eis[d(t_i,\ti)] - \inf_{t_i \in T_i^+} \Eis[d(t_i',\ti,s)]\right] \text{ and}\\
\forall t_i \in T_i^-: \ \Eis[a_i(t_i,\ti,s)[1-d(t_i,\ti,s)]] \ge \frac{1}{p}\left[ \sup_{t_i \in T_i^+} \Eis[d(t_i',\ti,s)]-\Eis[d(t_i,\ti)]\right]
\end{align}
Hence,
\small
  \begin{align} 
  V_{\tilde{P}}(d,a) &=\E_{t}  \left[\sum_id(t)t_i- a_i(t)c_i \right]\nonumber \\
  &\le \E_{t}  \left[\sum_id(t)t_i- \mathbbm{1}_{T_i^+}(t_i) d(t)a_i(t)c_i - \mathbbm{1}_{T_i^-}(t_i)  [1-d(t)]a_i(t)c_i \right]\label{eq:ineq_imp1} \\
    &\le \E_{t_i}  \left[\sum_i\Ei [d(t)]t_i- \mathbbm{1}_{T_i^+}(t_i) \frac{1}{p} \left[\Eis[d(t_i,\ti)] - \inf_{t_i \in T_i^+} \Eis[d(t_i',\ti,s)\right]\right. c_i \nonumber \\
     & \hspace{3cm}\left. -\mathbbm{1}_{T_i^-}(t_i)  \frac{1}{p}\left[ \sup_{t_i \in T_i^+} \Eis[d(t_i',\ti,s)]-\Eis[d(t_i,\ti)]\right] c_i \right]\label{eq:ineq_imp2} \\    
  &=V_{\tilde{R}}(d) \nonumber
\end{align}
\end{proof}
\normalsize

\begin{lemma}\label{lemma:imperfect_opt}
Suppose $T$ is finite. The decision rule stated in Theorem \ref{thm:imp} solves problem \eqref{relax_imp}.
\end{lemma}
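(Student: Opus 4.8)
The plan is to adapt the proof of Lemma~\ref{prop:opt} almost verbatim. The only genuinely new feature is that the extra constraints \eqref{eq:bic_imp1}--\eqref{eq:bic_imp2}, which cap an agent's maximal interim influence relative to his worst-off type, activate a second family of Lagrange multipliers and hence produce a second bunching region --- at the top of $T_i^+$ and at the bottom of $T_i^-$. So I would begin by fixing an optimal $d^*$ for \eqref{relax_imp} and recording the four boundary quantities $\varphi_i^+:=\inf_{t_i'\in T_i^+}\E_{t_{-i}}[d^*(t_i',t_{-i})]$, $\psi_i^+:=\sup_{t_i'\in T_i^+}\E_{t_{-i}}[d^*(t_i',t_{-i})]$, and $\varphi_i^-,\psi_i^-$ defined with $\sup$ and $\inf$ swapped over $T_i^-$. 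Feasibility of $d^*$ for \eqref{relax_imp} gives $\psi_i^+\le\varphi_i^+/(1-p)$ and $\psi_i^-\ge(\varphi_i^--p)/(1-p)$, and $\omega_i^-\le\omega_i^+$ will hold for the same reason as in Lemma~\ref{prop:opt}.

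Next, mirroring the passage to problem \eqref{aux}, I would show that $d^*$ also solves the finite linear program of maximizing $\E_t\big[\sum_i d(t)(t_i-\tilde c_i(t_i)/p)\big]$ over $0\le d\le 1$ subject to the box constraints $\varphi_i^+\le\E_{t_{-i}}[d(t_i,t_{-i})]\le\psi_i^+$ for $t_i\in T_i^+$ and $\psi_i^-\le\E_{t_{-i}}[d(t_i,t_{-i})]\le\varphi_i^-$ for $t_i\in T_i^-$. The point is that these box constraints, together with $\psi_i^+\le\varphi_i^+/(1-p)$ and $\psi_i^-\ge(\varphi_i^--p)/(1-p)$, imply \eqref{eq:bic_imp1}--\eqref{eq:bic_imp2}, so every $d$ feasible for the program is feasible for \eqref{relax_imp}; and on this feasible set the objective of \eqref{relax_imp} is the program objective plus a term that is everywhere at least, and at $d^*$ equal to, a constant fixed by the $\varphi_i^\pm$ (because the $\inf$ and $\sup$ defining $\varphi_i^\pm$ are attained at $d^*$). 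Hence a strictly better feasible point for the program would contradict optimality of $d^*$ in \eqref{relax_imp}.

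I would then apply the Karush--Kuhn--Tucker theorem to this linear program, obtaining for each $i$ nonnegative multipliers $\lambda_i^+(t_i)$ on the lower and $\kappa_i^+(t_i)$ on the upper box constraint for $t_i\in T_i^+$ (and mirrored multipliers on $T_i^-$) such that $d^*$ maximizes $\sum_t d(t)\sum_i h_i^*(t_i)f(t)$ up to a constant, where $h_i^*(t_i)=t_i-\tilde c_i(t_i)/p+(\lambda_i^+(t_i)-\kappa_i^+(t_i))/f_i(t_i)$ on $T_i^+$. Complementary slackness splits $T_i^+$ into the worst-off types ($\E_{t_{-i}}[d^*]=\varphi_i^+$, so $\kappa_i^+=0$ and $h_i^*(t_i)\ge t_i-c_i/p$), intermediate types ($\varphi_i^+<\E_{t_{-i}}[d^*]<\psi_i^+$, so both multipliers vanish and $h_i^*(t_i)=t_i-c_i/p$), and best-off types ($\E_{t_{-i}}[d^*]=\psi_i^+$, so $\lambda_i^+=0$ and $h_i^*(t_i)\le t_i-c_i/p$). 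From here I would reproduce Claim~\ref{claim_omega}: replacing $h_i^*$ by the $f_i$-weighted average of $h_i^*$ over each of the two bunching sets (leaving it $t_i-\tilde c_i(t_i)/p$ on the intermediate set) keeps $d^*$ a maximizer, since the resulting tensor $\bar H_i$ is majorized by $H_i^*$ and the value function $\Psi$ of the inner linear program is Schur-convex; iterating over $i$ shows $d^*$ is a threshold rule in $\sum_i\bar h_i(t_i)$. Identifying $\bar h_i$ as constant $\omega_i^+$ on the worst-off set, $t_i-c_i/p$ on the intermediate set, and constant $\nu_i^+$ on the best-off set (symmetrically on $T_i^-$), with the transitions at the $t_i$ where $t_i-c_i/p$ equals $\omega_i^+$ and $\nu_i^+$, yields precisely the six-case formula in Theorem~\ref{thm:imp}; and $\nu_i^+=+\infty$ (resp.\ $\nu_i^-=-\infty$) exactly when the upper (resp.\ lower) cap is slack, which is the content of the remark following the theorem.

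I expect the main obstacle to be the bookkeeping around the two bunching regions: verifying that the three-way partition of $T_i^+$ induced by complementary slackness really is an ordered partition into (possibly empty or unbounded) intervals, so that $\omega_i^+\le\nu_i^+$ and the intermediate linear piece is well defined; handling the degenerate configurations in which a bunching region collapses or a cap never binds; and confirming that averaging over two disjoint blocks still yields majorization (it does, being a composition of doubly stochastic averaging maps). The remainder is a routine transcription of the perfect-verification argument with $c_i$ replaced by $c_i/p$, so in the write-up I would cite the corresponding steps of Lemma~\ref{prop:opt} and Claim~\ref{claim_omega} rather than repeat them.
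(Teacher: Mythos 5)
Your proposal is correct and follows the paper's own proof essentially step for step: fix an optimal $d^*$ for \eqref{relax_imp}, pass to an auxiliary problem whose interim caps are pinned down by $d^*$, apply the Karush--Kuhn--Tucker theorem, average the resulting virtual weights over the (now four) bunching regions, and reuse the majorization/Schur-convexity argument of Claim~\ref{claim_omega} to conclude the six-case weight form with $c_i$ replaced by $c_i/p$. The only cosmetic difference is that you cap the interim decision by the attained $\sup/\inf$ of $d^*$ (your $\psi_i^{\pm}$), whereas the paper uses the bounds $\varphi_i^+/(1-p)$ and $(\varphi_i^- -p)/(1-p)$ taken directly from \eqref{eq:bic_imp1}--\eqref{eq:bic_imp2}; both choices support the same argument.
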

\begin{proof}
Let $d^*$ denote an optimal solution to the relaxed problem \eqref{relax_imp} above, and define $\varphi_i^+\equiv \inf_{t_i'\in T_i^+}  \E_{t_{-i}}[d^*(t'_i,t_{-i})] $ and $\varphi_i^-\equiv \sup_{t_i'\in T_i^-} \E_{t_{-i}}[d^*(t'_i,t_{-i})]$. Then $d^*$ also solves the following problem:
  \begin{align*} 
    \underset{0 \leq d\leq 1 }\max \ \E_{t}  &\big[\textstyle\sum_id(t)[t_i-\frac{\tilde{c}_i(t_i)}{p}] \big]&\tag{Aux}\\
       \hspace{1cm}  &\text{s.t.  for all } i\in\I:&    \\
       \hspace{1cm}  \varphi_i^+ &\le \E_{t_{-i}} d(t) \le \frac{\varphi_i^+}{1-p} &\text{ for all } t_i\in T_i^+,\text{ and }\\
      \hspace{1cm} \frac{\varphi_i^--p}{1-p} &\le \E_{t_{-i}} d(t)\leq \varphi_i^-  &\text{ for all } t_i\in T_i^-.
  \end{align*}
The Karush-Kuhn-Tucker theorem implies that there exist Lagrange multipliers $\lambda_i(t_i)$ and $\mu_i(t_i)$ such that $d^*$ maximizes the Lagrangian:
\begin{align*}
\mathcal{L}(d,\lambda,\mu) &= \E_t\Big[\sum_i d(t)(t_i-\frac{ \tilde{c}_i(t_i)}{p})\Big] \\
&+ \sum_i \sum_{t_i \in T_i^+} \Big( \lambda_i(t_i) \big( \E_{t_{-i}}[d(t_i,t_{-i})] -\varphi_i^+ \big) +\mu_i(t_i) \big(\frac{\varphi_i^+}{1-p} -\E_{t_{-i}}[d(t_i,t_{-i})] \big)\Big) \\
&+ \sum_i\sum_{t_i \in T_i^-} \Big( \lambda_i(t_i) \big( \E_{t_{-i}}[d(t_i,t_{-i})] -\varphi_i^- \big) +\mu_i(t_i) \big(\frac{\varphi_i^--p}{1-p} -\E_{t_{-i}}[d(t_i,t_{-i})] \big)\Big) 
\end{align*}

Define $h_i(t_i) := t_i -\frac{\tilde{c}_i(t_i)}{p} + \frac{\lambda_i(t_{i})+\mu_i(t_i)}{f_i(t_i)}$ and let
\begin{eqnarray*}
  \alpha_i^+ =& \inf_{t_i \in T_i^+}\{ t_i | \E_{t_{-i}}[d^*(t_i,t_{-i})]  > \varphi_i^+ \}  \\
  \alpha_i^- =& \sup_{t_i \in T_i^-}\{ t_i | \E_{t_{-i}}[d^*(t_i,t_{-i})] < \varphi_i^- \} \\
  \beta_i^+ =& \sup_{t_i \in T_i^+}\{t_i | \E_{t_{-i}}[d^*(t_i,t_{-i})]  < \frac{\varphi_i^+}{1-p} \} \\
  \beta_i^- =& \inf_{t_i \in T_i^-}\{ t_i | \E_{t_{-i}}[d^*(t_i,t_{-i})] > \frac{\varphi_i^--p}{1-p} \}.
  \end{eqnarray*}

Define $A_i^+ =\{t_i \in T_i^+| t_i<\alpha_i^+\}$, $A_i^-=\{t_i \in T_i^-| t_i>\alpha_i^-\}$, $B_i^+ = \{t_i \in T_i^+| t_i>\beta_i^+\}$, $B_i^- = \{t_i \in T_i^-| t_i<\beta_i^-\}$, and 
  \begin{align*}
    \bar{h}_i(t_i) := \begin{cases}
      \frac{1}{\mu_i(A_i^+)} \sum_{t_i\in A_i^+} f_i(t_i) h_i(t_i) \hspace{.3cm} &\text{ if } t_i \in A_i^+  \\
      \frac{1}{\mu_i(B_i^+)} \sum_{t_i\in B_i^+} f_i(t_i) h_i(t_i) \hspace{.3cm} &\text{ if } t_i \in B_i^+  \\
     \frac{1}{\mu_i(A_i^-)} \sum_{t_i\in A_i^-} f_i(t_i) h_i(t_i) \hspace{.3cm} &\text{ if } t_i \in A_i^-  \\
      \frac{1}{\mu_i(B_i^-)} \sum_{t_i\in B_i^-} f_i(t_i) h_i(t_i) \hspace{.3cm} &\text{ if } t_i \in B_i^-  \\
      t_i-\tilde{c}_i(t_i) &\text{ otherwise}.
    \end{cases}
  \end{align*}
The same arguments as in the proof of Lemma \ref{prop:opt} imply that $d^*$ maximizes $\sum_i \sum_t f(t)d(t)\bar{h}_i(t_i)$. 
\end{proof}

\begin{lemma}\label{lemma:imp_feas}
Suppose $T$ is infinite. The decision rule stated in Theorem \ref{thm:imp} solves problem \eqref{relax_imp}.
\end{lemma}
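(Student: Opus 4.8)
The plan is to follow the proof of Lemma~\ref{prop:opt_inf} almost verbatim, replacing the appeal to the finite‑type analysis of Lemma~\ref{prop:opt} by that of Lemma~\ref{lemma:imperfect_opt} and carrying the two ``influence‑cap'' constraints \eqref{eq:bic_imp1}--\eqref{eq:bic_imp2} along throughout. First, for each $i$ and $n\ge 1$, partition $T_i^+$ and $T_i^-$ into $2^n$ cells of equal conditional probability exactly as in the proof of Lemma~\ref{prop:opt_inf}, let $\mathcal F^n$ be the induced product $\sigma$‑algebra on $T$, and let $(\tilde R^n)$ denote problem~\eqref{relax_imp} with the added restriction that $d$ be $\mathcal F^n$‑measurable. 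The constant rule $d\equiv\tfrac12$ is feasible (it satisfies \eqref{eq:bic_imp1}--\eqref{eq:bic_imp2} because $p\ge 0$) and the feasible set has nonempty interior, so an optimal $d^n$ exists. Viewing an $\mathcal F^n$‑measurable rule as a function of the $2^{n+1}$ cells per agent, with the cell conditional means $\tilde t_i$ in the role of types and net weight $\tilde t_i-\tilde c_i(t_i)/p$ on a non‑bunched cell (this follows by iterated expectations, since $\tilde c_i$ is constant on $T_i^+$‑cells and on $T_i^-$‑cells), Lemma~\ref{lemma:imperfect_opt} applied to this finite problem lets us take $d^n$ to be a threshold rule of the form in Theorem~\ref{thm:imp} for constants $\omega_i^{\pm,n},\nu_i^{\pm,n}$ with $\omega_i^{-,n}\le\omega_i^{+,n}$.

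Second, pass to a subsequence along which $\omega_i^{\pm,n}\to\omega_i^\pm\in\mathbb R$ and $\nu_i^{\pm,n}\to\nu_i^\pm\in\mathbb R\cup\{\pm\infty\}$; a limit $\nu_i^+=+\infty$ (resp.\ $\nu_i^-=-\infty$) simply means there is no upper (resp.\ lower) extreme bunching region, which is permitted by the statement and is the situation discussed in the Remark following Theorem~\ref{thm:imp}. Let $w_i$ be the weight function of Theorem~\ref{thm:imp} built from the limiting constants and the true types, and set $d(t)=\mathbf 1\{\sum_i w_i(t_i)>0\}$ off the set $Z=\{t:\sum_i w_i(t_i)=0\}$. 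Since $\tilde t_i(t_i)\to t_i$ almost everywhere and the constants converge, $w_i^n\to w_i$ a.e., hence $\E_{t_{-i}}[d^n(t_i,t_{-i})]=\Prob\big[\sum_{j\ne i}w_j^n(t_j)\ge -w_i^n(t_i)\big]\to\E_{t_{-i}}[d(t_i,t_{-i})]$ for almost every $t_i$ such that $-w_i(t_i)$ is a continuity point of the law of $\sum_{j\ne i}w_j(t_j)$. Because every interim decision rule is bounded by $1$, dominated convergence upgrades this to $L^1$‑convergence of the interim decision rules; the $T_i$ have finite moments, so the objective of~\eqref{relax_imp}—which depends on $d$ only through its interim decision rules paired with integrable weights—is continuous in this topology, so $V_{\tilde R}(d^n)\to V_{\tilde R}(d)$. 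Feasibility of $d$ holds because, for a threshold rule, the infimum over $T_i^+$ in \eqref{eq:bic_imp1} is attained in the lower bunching region at weight $\omega_i^+$ and the supremum over $T_i^-$ in \eqref{eq:bic_imp2} at weight $\omega_i^-$, and both sides of these two inequalities are limits of the corresponding quantities for the feasible $d^n$; moreover $\omega_i^-\le\omega_i^+$.

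Third, to see that $d$ is optimal for~\eqref{relax_imp}, suppose some feasible $d'$ had $V_{\tilde R}(d')>V_{\tilde R}(d)$. The conditional expectations $d'_n:=\E[d'\mid\mathcal F^n]$ are $\mathcal F^n$‑measurable and converge to $d'$ in $L^1$ by the martingale convergence theorem, so $V_{\tilde R}(d'_n)\to V_{\tilde R}(d')$. They stay feasible: the interim decision of $d'_n$ on a cell is the conditional average of that of $d'$ over the cell, which lies between the global infimum $J$ and $\tfrac1{1-p}J$ of $d'$'s interim decision (using that $d'$ is feasible), and since every cell average is $\ge J$ we get that each cell average is $\le\tfrac1{1-p}J\le\tfrac1{1-p}\cdot(\text{infimum of cell averages})$, i.e.\ \eqref{eq:bic_imp1} holds for $d'_n$; \eqref{eq:bic_imp2} is preserved symmetrically. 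For $n$ large this yields $V_{\tilde R}(d'_n)>V_{\tilde R}(d^n)$, contradicting optimality of $d^n$ in $(\tilde R^n)$. Hence $d$, a decision rule of the form stated in Theorem~\ref{thm:imp}, solves~\eqref{relax_imp}.

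The only genuinely delicate step is the ties issue in the second paragraph: the law of $\sum_{j\ne i}w_j(t_j)$ is a continuous part convolved with atoms located at sums of bunching constants, so $-w_i(t_i)$ can fail to be a continuity point on a positive‑measure set of $t_i$—precisely when $t_i$ lies in a bunching region and $-\omega_i^\pm$ or $-\nu_i^\pm$ coincides with such an atom—and correspondingly $Z$ may have positive probability. The plan is to handle this by choosing the subsequence so that each $\omega_i^{\pm,n},\nu_i^{\pm,n}$ approaches its limit from a fixed side, avoiding the at most countably many atoms, and by fixing the value of $d$ on $Z$ as a weak‑$*$ subsequential limit of $d^n\mathbf 1_Z$; since $Z$ carries the same probability under every $d^n$ and under $d$, and since \eqref{relax_imp} and \eqref{eq:bic_imp1}--\eqref{eq:bic_imp2} depend on $d$ only through integrals, this bookkeeping does not affect convergence of the objective or of the constraint functionals. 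Making this atom accounting precise is the main obstacle; the remainder is a routine transcription of the proof of Lemma~\ref{prop:opt_inf}.
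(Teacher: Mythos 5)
Your proposal is correct and is essentially the paper's own argument: the paper simply states that the proof is analogous to that of Lemma~\ref{prop:opt_inf} (discretize the type space, apply the finite-type result --- here Lemma~\ref{lemma:imperfect_opt} --- and pass to the limit), which is exactly the route you take. Your additional care in checking that the constraints \eqref{eq:bic_imp1}--\eqref{eq:bic_imp2} survive both the conditional-expectation approximation and the limit, and in handling ties at atoms, fills in details the paper leaves implicit rather than departing from its approach.
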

\begin{proof}
The proof is analogous to the proof of Lemma \ref{prop:opt_inf}  and hence omitted.
\end{proof}

\begin{proof}[Proof of Theorem \ref{thm:imp}]
Denote by $d^*$ the solution to problem \ref{relax_imp}. For each $i$, define $q_i:T_i\rightarrow [0,1]$ as the solution to 
\begin{align*}
\Ei[d^*(t_i,\ti)\ [1-p\cdot q_i(t_i)]] &= \inf_{t_i'\in T_i^+} \Ei[d^*(t_i',\ti)] &\text{ , for } t_i\in T_i^+ \text{ and}\\
\E_{t_{-i}}[d^*(t_i,t_{-i}) [1-p\cdot q_i(t_i)] &= \sup_{t_i'\in T_i^-} \E_{t_{-i}}[d^*(t_i',t_{-i})] - p \cdot q_i(t_i)] &\text{ , for } t_i\in T_i^-.
\end{align*}
We will now show that a solution $q_i$ exists.
For $t_i\in T_i^+$, setting $q_i(t_i)=0$ yields
\[ \Ei[d^*(t_i,\ti)[1-pq_i(t_i)]] = \Ei[d^*(t_i,\ti)] \ge \inf_{t_i'\in T_i^+} \Ei[d^*(t_i',\ti)] \]
and setting $q_i(t_i)=1$ yields
\[ \Ei[d^*(t_i,\ti)[1-pq_i(t_i)]] = \Ei[d^*(t_i,\ti)[1-p]] \le \inf_{t_i'\in T_i^+} \Ei[d^*(t_i',\ti)], \]
where the inequality follows from \eqref{eq:bic_imp1}. The intermediate-value theorem hence implies the existence of a solution $q_i$. Similar arguments apply for $t_i\in T_i^-$.

Define
\begin{align*}
a^*_i(t):= \begin{cases}
q_i(t_i) \hspace{.5cm} &\text{ if $t_i\in T_i^+$ and $d^*(t)=1$}\\
q_i(t_i) &\text{ if $t_i\in T_i^-$ and $d^*(t)=0$}\\
0 &\text{ else.}
\end{cases}
\end{align*}
For each $i$ and for all $t_i \in T_i^+$,
\[ \inf_{t_i'\in T_i^+} \E_{t_{-i},s}[d^*(t_i',t_{-i},s)] = \E_{t_{-i},s}[d^*(t_i,t_{-i},s) [1-p\cdot a^*_i(t_i,t_{-i},s)]], \]
and for all $t_i \in T_i^-$,
\[ \sup_{t_i'\in T_i^-} \E_{t_{-i},s}[d^*(t_i',t_{-i},s)] = \E_{t_{-i},s}[d^*(t_i,t_{-i},s) [1-p\cdot a^*_i(t_i,t_{-i},s)] + p \cdot a^*_i(t_i,t_{-i},s)]. \]
Hence, $(d^*,a^*)$ is Bayesian incentive compatible by Lemma \ref{lemma_bic_impp} and inequality \eqref{eq:ineq_imp2} holds as an equality.
By construction, $t_i\in T_i^+$ implies $d(t)a^*_i(t)=a^*_i(t)$ and $t_i\in T_i^-$ implies $[1-d(t)]a^*_i(t)=a^*_i(t)$. Therefore, inequality \eqref{eq:ineq_imp1} also holds as an equality and we conclude $V_{\tilde{P}}(d^*,a^*)=V_{\tilde{R}}(d^*)$. Hence,  $(d^*,a^*)$ is optimal.
\end{proof}

\end{appendix}

\begin{small}
	\bibliographystyle{elsarticle-harv}
	\bibliography{references}
\end{small}

\end{document}